\newcommand{\p}{\partial}
\newcommand{\const}{\mathop{\rm const}\nolimits}
\newcommand{\sign}{\mathop{\rm sign}\nolimits}
\newcommand{\pr}{\mathop{\rm pr}\nolimits}
\newcommand{\thetbn}{\arabic{nomer}}
\newcommand{\CV}{\mathop{\rm CV}\nolimits}
\newcommand{\CL}{\mathop{\rm CL}\nolimits}
\newcommand{\Ch}{\mathop{\rm Ch}\nolimits}
\newcommand{\ord}{\mathop{\rm ord}\nolimits}
\newcommand{\sco}{\mathop{\rm sco}\nolimits}
\newcommand{\wsco}{\mathop{\rm wsco}\nolimits}
\newcommand{\Div}{\mathop{\rm Div}\nolimits}
\newcounter{tbn}
\newcounter{casetran}
\newcounter{clnumber}
\newtheorem{theorem}{Theorem}
\newtheorem{lemma}{Lemma}
\newtheorem{corollary}{Corollary}
\newtheorem{proposition}{Proposition}
{\theoremstyle{definition} \newtheorem{definition}{Definition}

\newtheorem{remark}{Remark}

\begin{document}

\begin{center}
\par\noindent {\LARGE\bf Group-theoretical analysis of variable coefficient\\ nonlinear
telegraph equations\par} {\vspace{6mm}\par\noindent {\bf Ding-jiang
Huang~$^{~*, ~\S,~\dag}$
and Shuigeng Zhou~$^{*, ~\S}$ }

{\vspace{2mm}\par\noindent {\it $^*$~School
of Computer Science, Fudan University, Shanghai 200433, China, \\
}}

{\par\vspace{2mm}\noindent\vspace{2mm} {\it $^\S$~Shanghai Key Lab of
Intelligent Information Processing, Fudan University,\\ Shanghai
200433, China }

\par\par} {\vspace{2mm}\par\noindent {\it
$^\dag$~Department of Mathematics, East China University of Science and Technology,\\  Shanghai
200237, China\\
}} {\noindent \vspace{2mm}{ $\phantom{^\dag}$~E-mail:
\{djhuang, sgzhou\}@fudan.edu.cn}\par
}}
\end{center}


{\vspace{5mm}\par\noindent\hspace*{8mm}\parbox{140mm}{\small Given a
class $\mathcal {F(\theta)}$ of differential equations with
arbitrary element $\theta$, the problems of symmetry group,
nonclassical symmetry and conservation law classifications are to
determine for each member $f\in \mathcal {F(\theta)}$ the structure
of its Lie symmetry group $G_f$, conditional symmetry $Q_f$ and
conservation law $\CL_f$ under some proper equivalence
transformations groups.

~~~~~~~In this paper, an extensive investigation of
these three aspects is carried out for the class of variable
coefficient (1+1)-dimensional nonlinear telegraph equations with
coefficients depending on the space variable
$f(x)u_{tt}=(g(x)H(u)u_x)_x+h(x)K(u)u_x$. The usual equivalence
group and the extended one including transformations which are
nonlocal with respect to arbitrary elements are first constructed.
Then using the technique of variable gauges of arbitrary elements
under equivalence transformations, we restrict ourselves to the
symmetry group classifications for the equations with two different
gauges $g=1$ and $g=h$. In order to get the ultimate classification,
the method of furcate split is also used and consequently a number
of new interesting nonlinear invariant models which have non-trivial
invariance algebra are obtained. As an application, exact solutions
for some equations which are singled out from the classification
results are constructed by the classical method of Lie reduction.

~~~~~~~The classification of nonclassical symmetries for the classes of
differential equations with gauge $g=1$ is discussed within the
framework of singular reduction operator. This enabled to obtain
some exact solutions of the nonlinear telegraph equation which are
invariant under certain conditional symmetries.

~~~~~Using the direct method, we also carry out two classifications of local conservation
laws up to equivalence relations generated by both usual and
extended equivalence groups. Equivalence with respect to these
groups and correct choice of gauge coefficients of equations play
the major role for simple and clear formulation of the final
results.
\\
\\
{\bf Mathematics Subject Classifications (2000):} 35L10, 35A22, 35A30 
\\
\\
{\bf Keywords:} symmetry classification, nonclassical symmetry, conservation law,
equivalence group, nonlinear telegraph equation, exact solutions, symmetry analysis, Lie algebras
 }\par\vspace{5mm}}

\section{Introduction}

Since the notation of continuous group was introduced by Lie at the
end of 19th century, significant progress in application of
symmetries to analysis of concrete nonlinear differential equations
has been achieved. The classical Lie symmetries of nonlinear
differential equations allow us to find explicit solutions,
conservation laws, linearizing substitutions of the Hopf-Cole type,
etc \cite{Bluman&Kumei1989,
Ibragimov1985,Ibragimov1994V1,Olver1986,Ovsiannikov1982}. For most
of the application the search of explicit symmetry structure of the
corresponding differential equations is always a crucial step, which
consist of the cornerstone of group analysis of differential
equation. Consequently, there are two interrelated problems which
still remain to be solved in the traditionally group analysis of
differential equations. The first problem consists of finding the
maximal Lie symmetry group admitted by a given equation. The second
problem is one of classifying differential equations that admit a prescribed symmetry group. The principal
tool for handling both problems is the classical infinitesimal Lie
method \cite{Bluman&Kumei1989,
Ibragimov1985,Olver1986,Ovsiannikov1982}. It reduces the problem to
finding the corresponding Lie symmetry algebra of infinitesimal
operators whose coefficients are found as solutions of some
over-determined system of linear partial differential equations
(PDEs). However, if the equations under study contains with arbitrary
element (functions or parameters) , then one has to solve an intermediate classification
problem. Namely, it is necessary to describe all the possible forms
of the functions involved such that this equation admits a
nontrivial invariance group. Generally, the problem
can be described as follows: Given a class $\mathcal {F(\theta)}$ of
differential equations with arbitrary element $\theta$, the problem
of group classification is to determine for each member $f\in
\mathcal {F(\theta)}$ the structure of its Lie symmetry group $G_f$,
or equivalently of its Lie symmetry algebra $A_f$ under some proper
equivalence transformation groups. This description is also fit for
the problems of nonclassical symmetry and conservation law
classification by replacing Lie symmetry with these two different
notations.

Historically, the first classification of Lie symmetries was derived
by Lie, he proved that a linear two-dimensional second-order PDE may
admit at most a six-parameter invariance group (apart from the
trivial infinite parameter symmetry group, which is due to
linearity)\cite{Lie1881}. The modern formulation of the problem of
group classification of PDEs was suggested by Ovsiannikov in 1959
\cite{Ovsiannikov1959}, in which he present complete group
classification of a class of nonlinear heat conductivity equations
by using the technique of equivalence group and direct integration.
After that, the group classification of nonlinear PDEs became the
subject of intensive research. A detailed survey of the work done in
this area up to the beginning of the 1990's is given
in~\cite{Ibragimov1994V1}.

In the present paper we investigate a class of hyperbolic type
variable coefficient (1+1)-dimensional nonlinear telegraph equations
of the form
\begin{equation}\label{eqVarCoefTelegraphEq}
f(x)u_{tt}=(g(x)H(u)u_x)_x+h(x)K(u)u_x
\end{equation}
where $f=f(x)$, $g=g(x)$, $h=h(x)$, $H=H(u)$ and $K=K(u)$ are
arbitrary and sufficient smooth real-valued functions of their
corresponding variables, $f(x)g(x)H(u)\neq 0$. In what follows, we
assume that $(H_u, K_u)\neq(0, 0)$,
i.e.,~\eqref{eqVarCoefTelegraphEq} is a nonlinear equation. This is
because the linear case of~\eqref{eqVarCoefTelegraphEq} ($H,
K=\const$) was studied by Lie~\cite{Lie1881} in his classification
of linear second-order PDEs with two variables. (See also a modern
treatment of this subject in~\cite{Ovsiannikov1982}).

The study of equation~\eqref{eqVarCoefTelegraphEq} is strongly
stimulated not only by its intrinsic theoretical interest but also by its significant applications
in Mathematics and Engineering. In fact,
hyperbolic type second-order nonlinear PDEs in two independent
variables are usually used to describe different types of wave
propagation. They are also used in differential geometry, in various
fields of hydro- and gas dynamics, chemical technology, super
conductivity, crystal dislocation to mention only a few applications
areas. The corresponding models are comprised by the Liouville, sine/sinh-Gordon,
Goursat, d'Alembert, Tzitzeica and nonlinear telegraph equations and
a couple of others. From the group-theoretical viewpoint the
popularity of these models is due to the fact that they have
non-trivial Lie or Lie--B\"acklund
symmetry~\cite{Bluman&Anco2002,Bluman&Kumei1989,
Fushchych&Shtelen&Serov1993,Ibragimov1985,Ibragimov1994V1,
Ibragimov1999,Olver1986,Ovsiannikov1982,Stephani1994}. By this very
reason they are either integrable by the inverse problem methods or
are linearizable and completely integrable
~\cite{AblowitzB1991,Novikov&Manakov1980,FaddeevB1987}.

The investigation of Lie symmetry classification of the
$(1+1)$-dimensional hyperbolic type second-order nonlinear PDEs has
a long history. Probably, Barone {\it et
al}~\cite{Barone&Esposito&Magee&Scott1971} was the first study of
the following nonlinear wave equation $u_{tt}=u_{xx}+F(u),$ by means
of symmetry method, this equation was also studied by
Kumei~\cite{Kumei1975} and Pucci {\it  et
al}~\cite{Pucci&Salvatori1986} subsequently. Motivated by a number
of physical problems, Ames {\it  et al}~\cite{Ames1965/72,Ames1981}
investigated group properties of quasi-linear hyperbolic equations
of the form
\begin{equation}\label{eqWaveEq}
u_{tt}=[f(u)u_x]_x.
\end{equation}
Later, their investigation was generalized
in~\cite{Torrisi&Valenti1985,Donato1987,Ibragimov&Torrisi&Valenti1991}
to equations of the following forms respectively
\begin{gather*}
u_{tt}=[f(x, u)u_x]_x,\quad
u_{tt}=[f(u)u_x+g(x,u)]_x,\quad\mbox{and}\quad u_{tt}=f(x,
u_x)u_{xx}+g(x,u_x).
\end{gather*}
The alternative form of equation~\eqref{eqWaveEq} was also
investigated by Oron and Rosenau~\cite{Oron&Rosenau1986} and Suhubi
and Bakkaloglu~\cite{Suhubi&Bakkaloglu1991}.
Arrigo~\cite{Arrigo1991} classified the equations
$
u_{tt}=u_x^{m}u_{xx}+F(u).
$
Furthermore, classification results for the equation
$
u_{tt}+K(u)u_t=[F(u)u_x]_x
$
can be found in~\cite{Ibragimov1994V1,Oron&Rosenau1986}. An expand
form of the latter equation
$
u_{tt}+K(u)u_t=[F(u)u_x]_x+H(u)u_x
$
was studied by Kingston and
Sophocleous~\cite{Kingston&Sophocleous2001}. Recently, Zhdanov and
Lahno~\cite{Lahno&Zhdanov2005} presented the most extensive list
of symmetries of the equations
\[
u_{tt}=u_{xx}+F(t, x, u, u_x)
\]
by using the infinitesimal Lie method, the technique of
equivalence transformations and the theory of classification of
abstract low-dimensional Lie algebras. There are also some
papers~\cite{Oron&Rosenau1986,Chikwendu1981,Gandarias&Torrisi&Valenti2004,Pucci1987}
devoted to the group classification of the equation of the
following form
\begin{gather*}
u_{tt}=F(u_{xx}), \quad
u_{tt}=F(u_x)u_{xx}+H(u_x),\quad\mbox{and}\quad u_{tt}+\lambda
u_{xx}=g(u,u_x).
\end{gather*}
It is worthwhile mentioned that the constant-coefficient nonlinear
telegraph equations
\[
u_{tt}=(F(u)u_x)_{x}+H(u)u_x
\]
together with its equivalent potential systems have also been
studied by Bluman {\it et
al}~\cite{Bluman&Temuerchaolu&Sahadevan2005,Bluman&Temuerchaolu2005a,
Bluman&Temuerchaolu2005b,Bluman&Cheviakov&Ivanova2005}. In their a
series of papers, many interesting results (especially for case of
power nonlinearities) including Lie point and nonlocal symmetries
classification and conservation law of the four equivalent systems
were systematically investigated. Recently, Huang and Ivanova
present a strong complete group classification for a class of
variable coefficient (1+1)-dimensional nonlinear telegraph
equations of the form \cite{Huang&Ivanova2007}
\begin{equation}\label{eqVarCoefTelegraphEqgx0}
f(x)u_{tt}=(H(u)u_x)_x+K(u)u_x.
\end{equation}
Exact solutions and classifications of conservation law with
characteristics of order $0$ were also investigated
\cite{Huang&Ivanova2007}.

From the above introduction, we can see that
equation~\eqref{eqVarCoefTelegraphEq} is different from any
aforementioned ones and is a generalization of many well studied
equations. What's more, equations~\eqref{eqVarCoefTelegraphEq} can
be used to model a wide variety of phenomena in physics, chemistry,
mathematical biology etc (see Section 2 for detail). Thus there is
essential interest in investigating them from a unified and group
theoretical viewpoint.

In this paper, extended group analysis of class (1) is first carried
out. The usual equivalence group and the extended one including
transformations which are nonlocal with respect to arbitrary
elements are constructed for class (1) and its subclasses. The
structure of the extended equivalence group and non-trivial subgroup
of (nonlocal) gauge equivalence transformations are investigated. As
a result, group classification problems related to two different
gauges $g=1$ and $g=h$ are really solved for each form with respect
to the corresponding usual and extended equivalence group. Classical
Lie reduction of some classification models are described and exact
solutions are obtained by using the reduction. Nonclassical
symmetries classification of class (1) with the gauges $g=1$ is
discussed within the framework of singular reduction operator.
Several nonclassical symmetries for equation form class (1) are
constructed. This enabled to obtain some exact solutions of the
nonlinear telegraph equation which are invariant under certain
conditional symmetries. Using the most direct method, two
classifications of local conservation laws up to equivalence
relations are generated by both usual and extended equivalence
groups. Equivalence with respect to these groups and correct choice
of gauge coefficients of equations play the major role for simple
and clear formulation of the final results.

Problems of group classification, except for really trivial
cases, are very difficult. Generally, there are two main approaches in studying
group classification problems in the literature. The first one is
the algebraic methods and is based on subgroup analysis of the equivalence
group associated with a class of differential equations under
consideration. Its main idea rely on the description of
inequivalent realizations of Lie algebras in certain set of vector
fields of the equation under consideration
\cite{Basarab&Zhdanov&Lahno2001,Zhdanov&Lahno1999}, which was
original from S. Lie \cite{Ibragimov1994V1,Lie1881} and recently
rediscovered by Winternitz and Zhdanov et al
\cite{Gazeau&Winternitza1992,Zhdanov&Lahno1999}. The method has been
applied to classifying a number of nonlinear differential equations
\cite{Abramenko&Lagno&Samoilenko2002,Basarab&Zhdanov&Lahno2001,Gagnon&Winternitza1993,
Gazeau&Winternitza1992,Huang&Zhang2009,Huang&Lahno&Qu&Zhdanov2009,Huang&Qu&Zhdanov2009,
Lahno&Zhdanov2005,Lahno&Samoilenko2002,Zhdanov&Lahno1999,Zhdanov&Lahno2005,Zhdanov&Lahno2007},
including the class is normalized (see
\cite{Popovych&Kunzinger&Eshraghi2010} for rigorous definitions of
normalized classes and related notions). The second approach is
based on the investigation of compatibility and the direct
integration, up to the equivalence relation generated by the
corresponding equivalence group, of determining equations implied by
the infinitesimal invariance criterion \cite{Ovsiannikov1982}. This
method was suggested by L.V Ovsiannikov and referred as the
Lie-Ovsiannikov method. This is the most applicable approach but it
is efficient only for classes of a simple structure, e.g., which
have a few arbitrary elements of one or two same arguments or whose
equivalence groups are finite-dimensional. A number of results on
group classification problems investigated within the framework of
this approach are collected in \cite{Bluman&Kumei1989,
Ibragimov1994V1,Ovsiannikov1982} and other books on the subject.

Recently, based on the Lie-Ovsyannikov method and the investigation
of the specific compatibility of classifying conditions, Nikitin and
Popovych~\cite{Nikitin&Popovych2001} developed an effective tool (we
refer it as method of furcate split) for solving the group
classification problem of nonlinear Schr\"{o}dinger equation. In
2004, Popovych and Ivanova extended the method to complete group
classification of nonlinear diffusion-convection equations by
further considering the so called additional and conditional
equivalence transformations~\cite{Popovych&Ivanova2004NVCDCEs}. In
2007, Ivanova, Popovych and Sophocleous present the extended and
generalized equivalence transformation group, gauging of arbitrary
elements by equivalence transformations for further investigation of
nonlinear diffusion-convection
equations\cite{Ivanova&Popovych&Sophocleous2007}. Furthermore, Popovych and Ivanova et.al also
extended these new group classification idea to the nonclassical symmetries \cite{Kun&Pop2008,Vaneeva&Popovych&Sophocleous2010}
and conservation law classification\cite{Popovych&Ivanova2004ConsLaws,Ivanova&Popovych&Sophocleous2007-3}. Up to now, these
methods and different notations have been applied to investigating a
number of different symmetry group, nonclassical symmetry and conservation law classification
problems~\cite{Meleshko1994,Nikitin&Popovych2001,Popovych&Cherniha2001,
Boyko&Popovych2001,Popovych&Ivanova2004NVCDCEs,Popovych&Ivanova2004ConsLaws,Ivanova&Sophocleous2006,
Ivanova&Popovych&Sophocleous2007,Ivanova&Popovych&Sophocleous2007-3,Vasilenko&Yehorchenko2001,Popovych&Kunzinger&Eshraghi2010,
Vaneeva&Johnpillai&Popovych&Sophocleous2007,Kun&Pop2008,Vaneeva&Popovych&Sophocleous2010}.

However, almost all the research was concentrated on parabolic type
nonlinear diffusion-convection equations and few of hyperbolic type
nonlinear partial differential
\cite{Huang&Ivanova2007,Huang&Mei&Zhang2009,Huang&Zhou2010}. Therefore, the present paper is one of new extension
of the above mentioned method and different notations to this classes of equations.
The results of symmetry group, nonclassical symmetry and conservation law classification of
class~\eqref{eqVarCoefTelegraphEq} present in this work are
new. Hence, these will lead to some explicit applications in Physics
and Engineering.

The structure of the paper is as follows:

Some physical examples contained in class
\eqref{eqVarCoefTelegraphEq} is discussed in section
\ref{SectionOnPhysExam}.

In section \ref{SectiononEquivaTrans}, the complete group of usual
equivalence transformations for class \eqref{eqVarCoefTelegraphEq}
and the extended one including transformations which are nonlocal
with respect to arbitrary elements are constructed by using
Lie-Ovsiannikov method and direct method. Taking into account the
non-trivial subgroup of gauge equivalence transformations, we
strongly simplify the solving the group classification problem to
equation \eqref{eqVarCoefTelegraphEq} with two different gauges
$g=1$ and $g=h$.

The results of group classification under the extended equivalence
transformation group for the gauge $g = 1$ are contained in
section~\ref{subSectionOnGrClasRes1}. Then in section \ref{SectionOnGrClasResg=h}, the classification of
gauge $g=h$ are presented. The sketch of the proof of the obtained results are given in Section
\ref{SectionOnGrClasProof}. Classification with respect to the set of point transformations are presented in
\ref{SectionOnAdditEquivTr}. We note that for both gauges two
essentially different classifications are presented: the
classification with respect to the (extended) equivalence group and
the classification with respect to all possible point
transformations.

In Section \ref{SectionOnLieRedu}, exact solutions of some
classification models are given  by using the method of classical
Lie reduction.

After making a brief review of notation of singular reduction
operator, we then carry out a preliminary analysis of nonclassical
symmetry of the class \eqref{eqVarCoefTelegraphEq} with the gauge
$g=1$ in section \ref{SectionOnNonclassicalSym}. As an example, we
also present several reduction operators of a special nonlinear
telegraph equation and constructed some non-Lie exact solutions for
them.

In Section \ref{SectionOnConsLaws}, the local conservation laws of
these equations are exhaustively described. Using the most direct
method, two classifications of local conservation laws up to
equivalence relations are generated by both usual and extended
equivalence groups. Equivalence with respect to these groups and
correct choice of gauge coefficients of equations play the major
role for simple and clear formulation of the final results.

Finally, some conclusion and discussion are given in
Section~\ref{SectionOnConclusion}.

In the Appendix, classification results for the gauge $g = 1$ under
the usual equivalence transformation group can be found.

\section{Physical examples}\label{SectionOnPhysExam}
Class~\eqref{eqVarCoefTelegraphEq} is a unified form of many
significant second-order hyperbolic type nonlinear PDEs in Physics,
Mechanics and Engineering Science. Physical examples corresponding to the
case $f(x)=g(x)=h(x)=1$ and $K(u)=0$ are collected in the well known
paper \cite{Ames1965/72,Ames1981}, which describe the flow of
one-dimensional gas, longitudinal wave propagation on a moving
threadline and dynamics of a finite nonlinear string and so on. In
what follows, we review several important physical models related
with the coefficient  $f(x)\neq 0$ or $h(x)K(u)\neq 0$
\cite{Torrisi&Valenti1990}.

{\bf Example 1.} Two-conductor transmission lines telegraph equation.
The waves in two-conductor transmission lines having small transverse
dimensions (in comparison with the characteristic
wavelength) can often be described by the telegraph
equation \cite{Katayev1966}
\begin{equation}\label{EqOnTwoConTranLineTele_1}
I_x = U_t ,\quad I_t = F(U)U_x + G(U),
\end{equation}
where $t$ is a spatial variable and $x$ is time; $I, U, F(U)$, and $G(U)$ are respectively the current in the
conductors, the voltage between the conductors, the leakage current per unit length,
and the differential capacitance. The form of $F(U)$, and $G(U)$  depend both on the configuration of
the conductors, and on the properties of the medium
filling it.

Setting $U=u$, we obtain the telegraph equation
\begin{equation}\label{EqOnTwoConTranLineTele_2}
u_{tt} = (F(u)u_x)_x + G'(u)u_x,
\end{equation}
which fall into  \eqref{eqVarCoefTelegraphEq}.

{\bf Example 2.} Longitudinal vibrations of elastic and non-homogeneous taut strings or bars. Suppose a string is taut along the $x-$axis. The equation giving the balance of momentum is
\begin{equation}\label{EqOnLongVibraElaBar_1}
\rho \omega_{tt}=T_x,
\end{equation}
where $x$ is the coordinate of the point $P$ in the present reference system, and
\[
x=x(y,t),
\]
where $y$ represents the coordinate of the corresponding point $P_0$ of $P$ in the reference shape, where $\omega=x-y, \rho$ is mass per unit length and $T$ is the tension. To equation \eqref{EqOnLongVibraElaBar_1} we associate the following constitutive relations already considered in \cite{Keller&Ting1966}:
\begin{equation}\label{EqOnLongVibraElaBar_2}
T=T(\omega_{x}), \rho=\rho(x).
\end{equation}
The balance law \eqref{EqOnLongVibraElaBar_1}, with \eqref{EqOnLongVibraElaBar_2}, transforms to the following second order partial differential equation
\begin{equation}\label{EqOnLongVibraElaBar_3}
u_{tt}=[\frac{T'(u)}{\rho(x)}u_x]_x,.
\end{equation}
where $\omega_x=u$. Of course \eqref{EqOnLongVibraElaBar_3} is particular case of \eqref{eqVarCoefTelegraphEq}.

{\bf Example 3.} Bar with variable cross section. The equation of
motion of a hyperelastic homogeneous bar, whose cross sectional area
is variable along the bar, is \cite{Cristescu1967}
\begin{equation}\label{EqOnVarCroSecBar_1}
\rho \omega_{tt}=T_x+\frac{S'(x)}{S(x)}T,
\end{equation}
where $\rho$ is the (constant) mass density, $\omega=y-x$ is the
displacement, $y$ is the coordinate of the point $P$ in the present
reference system, $x$ represents the coordinate of the corresponding
point $P_0$ of $P$ in the reference frame, $T$ is the tension and
$S(x)$ is the cross sectional area.

Taking into account the constitutive relation $T=T(\omega_x)$
\cite{Cristescu1967}, the equation \eqref{EqOnVarCroSecBar_1}
becomes
\begin{equation}\label{EqOnVarCroSecBar_2}
u_{tt}=[\frac{T'(u)}{\rho}u_x]_x+(\frac{S'(x)}{S(x)})_x\frac{T(u)}{\rho}+\frac{S'(x)}{S(x)}\frac{T'(u)}{\rho}u_x,
\end{equation}
where $u=\omega_x$. Obviously when $(\frac{S'(x)}{S(x)})_x=0$ the
equation \eqref{EqOnVarCroSecBar_2} is included in
\eqref{eqVarCoefTelegraphEq}.

{\bf Example 4.} One dimensional propagation of visco-elastic stress
waves. In \cite{Varley&Seymour1985}, the following constitutive laws
were adopted for a nonlinear homogeneous visco-elastic model
\begin{equation}\label{EqOnVisElaStrewave_1}
e_t=\Phi(T)T_t+\Omega(T),
\end{equation}
with
\begin{equation}\label{EqOnVisElaStrewave_2}
\Phi(T)=\frac{d_0}{(1+kc_0T)^2},\quad\Omega(T)=\frac{c_0T}{1+kc_0T},\quad
e_t=\omega_{xt}
\end{equation}
where $T$ is the stress, $\omega$ is the displacement, $d_0>0, k$
and $c_0$ are constants. When the one-dimensional propagation of
visco-elastic stress waves is investigated, combining the momentum
equation
\begin{equation}\label{EqOnVisElaStrewave_3}
\rho \omega_{tt}=T_x
\end{equation}
with \eqref{EqOnVisElaStrewave_1} and taking into account
\eqref{EqOnVisElaStrewave_2}, we obtain
\begin{equation}\label{EqOnVisElaStrewave_4}
T_{xx}=[\frac{d_0}{(1+kc_0T)^2}T_t]_t+\frac{c_0}{(1+kc_0T)^2}T_t.
\end{equation}

So, we fall into \eqref{eqVarCoefTelegraphEq} when $x$ and $t$ are
exchanged.

{\bf Example 5.} Hyperbolic heat equation. Many models for the heat
propagation  with finite speed give a hyperbolic equation which can
be reduced to \eqref{eqVarCoefTelegraphEq}. In fact, quite recently,
in order to describe one-dimensional heat conduction, the following
partial differential equation has been considered in
\cite{Rogers&Ruggeri1985}
\begin{equation}\label{EqOnHyperboHeat_1}
\theta_{tt}-\frac{q_0a^2}{\gamma^*}(\frac{\delta}{\theta}+\epsilon)^2\theta_{xx}-\frac{2\epsilon}{\delta+\epsilon
\theta}\theta_t^2+\frac{1}{\gamma^*}\theta_t
+\frac{2q_0a^2}{\gamma^*\theta}(\frac{\delta}{\theta}+\epsilon)^2\theta_{x}^2=0,
\end{equation}
where $\theta$ denote the (absolute) temperature while $q_0, a,
\gamma^*, \delta$ and $\epsilon$ are suitable constants. The
equation \eqref{EqOnHyperboHeat_1} is based on a nonlinear model
with relaxation.

If we set $\theta=1/u$, we obtain
\begin{equation}\label{EqOnHyperboHeat_2}
u_{xx}=[\frac{\gamma^*}{q_0a^2(\delta
u+\epsilon)^2}u_t]_t+\frac{1}{q_0a^2(\delta u+\epsilon)^2}u_t.
\end{equation}

This equation was obtained in \cite{Donato&Fusco1987} in the
framework of M\"{u}ller's theory for heat propagation in rigid
bodies.

{\bf Example 6.} One dimensional heat propagation in
a rigid body. The models like Cattaneo's
\cite{Cattaneo1948}, describing one dimensional heat propagation in
a rigid body, are governed by a nonlinear equation of the type
\cite{Engelbrecht1983}
\begin{equation}\label{EqOnCattaneo_1}
\theta_{xx}=[\frac{\tau_0}{\chi}C(\theta)\theta_t]_t+\frac{1}{\chi}[\int
C(\theta) d \theta]_\theta \theta_t.
\end{equation}
where $C(\theta)$ is the special heat, $\tau_0$ is the thermal
relaxation time and $\chi$ is the thermal conductivity.

The equation \eqref{EqOnHyperboHeat_2} and  \eqref{EqOnCattaneo_1}
fall into \eqref{eqVarCoefTelegraphEq} when $x$ and $t$ are
exchanged.

\section{Equivalence transformations and choice
of investigated class}\label{SectiononEquivaTrans}

In order to perform group classification of class~\eqref{eqVarCoefTelegraphEq},
we should first find its group of equivalence transformations.
The {\sl usual equivalence group} ~$G^{\sim}$  of class \eqref{eqVarCoefTelegraphEq} is formed by the nondegenerate point transformations
in the space of $(t, x, u, f, g, h, H, K)$, which are projectible on the space of $(t, x, u)$, i.e.
they have the form
\[
\begin{array}{ll}
(\tilde t, \tilde x, \tilde u)=(T^t, T^x, T^u)(t,x,u),\\
(\tilde f, \tilde g, \tilde h, \tilde H, \tilde K)=(T^f,T^g,T^h,T^H,T^K)(t,x,u,f,g,h,H,K),
\end{array}
\]
and transform any equation from class \eqref{eqVarCoefTelegraphEq} for the function $u = u(t, x)$ with the arbitrary
elements $(f, g, h, H, K)$ to an equation from the same class for the function $\tilde u = \tilde u(\tilde t, \tilde x)$ with the
new arbitrary elements $(\tilde f, \tilde g, \tilde h, \tilde H, \tilde K)$.
To find the connected component of the unity of~$G^{\sim}$, we have
to investigate Lie symmetries of the system that consists of
equation~\eqref{eqVarCoefTelegraphEq} and some additional
conditions, that is to say we must seek for an operator of the
$G^{\sim}$ in the form
\begin{equation}\label{OperatorEquivTr}
X=\tau \partial_t+\xi \partial_x+\eta \partial_u+\pi
\partial_f+\varphi \partial_g+\phi \partial_h+\rho \partial_H+\theta \partial_K
\end{equation}
from the invariance criterion of the following system:
\begin{equation}\label{sysForEquivTr}
\begin{array}{l}
f(x)u_{tt}=(g(x)H(u)u_x)_x+h(x)K(u)u_x,\\[1ex]
f_t=f_u=0,\quad g_t=g_u=0,\quad h_t=h_u=0,\quad H_t=H_x=0,\quad
K_t=K_x=0.
\end{array}
\end{equation}
Here $u$, $f$, $g$, $h$, $H$ and $K$ are considered as differential
variables: $u$ on the space $(t,x)$ and $f$, $g$, $h$, $H$, $K$ on
the extended space $(t, x, u)$. The coordinates $\tau$, $\xi$,
$\eta$ of the operator~\eqref{OperatorEquivTr} are sought as
functions of $t$, $x$, $u$ while the coordinates $\pi$, $\varphi$,
$\phi$, $\rho$ and $\theta$ are sought as functions of $t$, $x$,
$u$, $f$, $g$, $h$, $H$, $K$.

The invariance criterion of system~\eqref{sysForEquivTr} yields the
following determining equations for $\tau$, $\xi$, $\eta$, $\pi$,
$\varphi$, $\phi$, $\rho$ and $\theta$:
\begin{gather}\nonumber
\tau_x=\tau_u=\xi_t=\xi_u=\eta_x=\eta_t=0, \quad
\tau_{tt}=\eta_{uu}=0,\\ \nonumber
\pi_t=\pi_u=\pi_H=\pi_K,\quad
\varphi_t=\varphi_u=\varphi_H=\varphi_K=0,\\ \nonumber
\phi_t=\phi_u=\phi_H=\phi_K=0,\quad \rho_t=\rho_x=\rho_u=\rho_f=\rho_g=\rho_h=\rho_K=0,\\
\label{sysDetEqEquivTr} \theta_t=\theta_x=\theta_f=\theta_g=\theta_h=0,\\
\nonumber
\frac{\pi}{f}+2\xi_x-2\tau_t-\frac{\varphi}{g}-\rho_{H}=0,\quad
\frac{\pi}{f}+2\xi_x-2\tau_t-\frac{\varphi}{g}=\frac{\rho}{H},\\
\nonumber
g_x\rho+[-g\xi_{xx}+(2\tau_t-\xi_x-\frac{\pi}{f})g_x+\varphi_x+f_x\varphi_f+g_x\varphi_g+h_x\varphi_h-\xi_xg_x]H\\
\nonumber +h\theta+(2\tau_t-\xi_x-\frac{\pi}{f}+\frac{\phi}{h})hK=0.
\end{gather}
After easy calculations from~\eqref{sysDetEqEquivTr}, we can find
the connected component of the unity of $G^{\sim}$ for the class
\eqref{eqVarCoefTelegraphEq}.

\begin{theorem}\label{TheorOnEquivGrouUsaual}
The usual equivalence transformation group $G^{\sim}$ for the class
\eqref{eqVarCoefTelegraphEq} consists of the transformations
\[
\begin{array}{ll}
\tilde t =\delta_1 t +\delta_2,\quad \tilde x =X(x),\quad
\tilde u =\delta_3 u + \delta_4,\\
\tilde f =\frac{\epsilon_1\delta_1^2}{X_x}f,\quad \tilde g =
\epsilon_1\epsilon_2^{-1}X_xg,\quad \tilde h =
\epsilon_1\epsilon_3^{-1}h,\quad \tilde H = \epsilon_2 H ,\quad
\tilde K=\epsilon_3K,
\end{array}
\]
where $\delta_j~(j=1,\ldots, 4)$ and $\epsilon_i~(i=1,2,3)$ are
arbitrary constants,
$\delta_1\delta_3\epsilon_1\epsilon_2\epsilon_3\neq 0$, $X$ is an
arbitrary smooth function of $x$, $X_x\neq 0$.
\end{theorem}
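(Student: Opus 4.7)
The plan is to integrate the determining system (\ref{sysDetEqEquivTr}) that has already been derived from the infinitesimal invariance criterion applied to system (\ref{sysForEquivTr}), and then exponentiate the resulting one-parameter subgroups to recover the finite equivalence group. I would organize the work into three stages: (i) solve the purely kinematic equations on $(\tau,\xi,\eta)$, (ii) use the mixed equations to fix the $(\pi,\varphi,\phi,\rho,\theta)$ coordinates up to finitely many free parameters and an arbitrary function of $x$, and (iii) exponentiate.

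For stage (i), the equations $\tau_x=\tau_u=0$ together with $\tau_{tt}=0$ give $\tau=\delta_1 t+\delta_2$; the equations $\xi_t=\xi_u=0$ leave $\xi=\xi(x)$ as an arbitrary smooth function of $x$ alone, which is precisely the infinitesimal shadow of the finite reparametrisation $\tilde x=X(x)$; and $\eta_x=\eta_t=0$ together with $\eta_{uu}=0$ force $\eta=\delta_3u+\delta_4$ with $\delta_3,\delta_4$ constants. For stage (ii), the lines giving the dependencies of $\pi,\varphi,\phi,\rho,\theta$ already reduce $\pi,\varphi,\phi$ to functions of $(x,f,g,h)$ alone, $\rho$ to a function of $H$ alone, and $\theta$ to a function of $(u,H,K)$. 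Subtracting the two adjacent relations $\pi/f+2\xi_x-2\tau_t-\varphi/g-\rho_H=0$ and $\pi/f+2\xi_x-2\tau_t-\varphi/g=\rho/H$ yields the ODE $\rho_H=\rho/H$, whose solution $\rho=c_2 H$ exponentiates to the scaling $\tilde H=\epsilon_2 H$. Substituting back, the combined relation reads $\pi/f-\varphi/g=c_2-2\delta_1+2\xi_x$, which is the infinitesimal of $(\tilde f/f)(\tilde g/g)^{-1}=\epsilon_1\delta_1^2 X_x^{-1}\cdot(\epsilon_1\epsilon_2^{-1}X_x)^{-1}=\delta_1^2\epsilon_2\, X_x^{-2}$.

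The remaining work is the analysis of the long relation in (\ref{sysDetEqEquivTr}). Since $H(u)$ and $K(u)$ are independent arbitrary functions of $u$ while the rest of the coefficients are independent of $u$, I would split this relation by coefficients of the monomials in $H$ and $K$. The coefficient of $K$ immediately yields $\phi/h=\pi/f+\xi_x-2\tau_t+c_3$, which forces $\phi$ to be proportional to $h$; the ``free'' term $h\theta$ then fixes $\theta=c_3 K$, exponentiating to $\tilde K=\epsilon_3 K$. The coefficient of $H$, after using $\rho=c_2H$ and the already obtained relations, becomes an ODE in $x$ on the quantities $\pi/f$, $\varphi/g$, $\xi_x$ that, together with the $\pi/f-\varphi/g$ relation above, pins each of $\pi/f$ and $\varphi/g$ down to $2\delta_1-\xi_x+$const and $-\xi_x+$const respectively. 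This is exactly the infinitesimal shadow of $\tilde f=\epsilon_1\delta_1^2 X_x^{-1}f$ and $\tilde g=\epsilon_1\epsilon_2^{-1}X_x g$ (with the residual ``const'' being the infinitesimal of $\epsilon_1$), while $\phi/h$ becomes a pure constant, matching $\tilde h=\epsilon_1\epsilon_3^{-1}h$. The main obstacle I anticipate is this splitting step: one has to verify that $\theta$, despite being allowed to depend on $u,H,K$, cannot mix the $H$-sector and the $K$-sector of the long relation, which requires a careful check that the $u$-derivatives of the long equation (through $\theta$) do not generate additional constraints on $\xi,\tau$ or $\eta$. Once the splitting is clean, exponentiating the vector field $\xi(x)\partial_x$ produces the arbitrary diffeomorphism $\tilde x=X(x)$ with $X_x\neq 0$, and collecting all one-parameter subgroups, together with the discrete reflections that arise from allowing the exponentiated parameters $\delta_j,\epsilon_i$ to take either sign, yields exactly the form stated in Theorem~\ref{TheorOnEquivGrouUsaual}.
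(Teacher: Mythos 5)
Your proposal follows essentially the same route as the paper, which merely asserts that Theorem~\ref{TheorOnEquivGrouUsaual} follows ``after easy calculations'' from the determining system~\eqref{sysDetEqEquivTr}: integrate the infinitesimal determining equations and exponentiate. Your organization of that integration is sound — the kinematic part giving $\tau=\delta_1t+\delta_2$, $\xi=\xi(x)$ arbitrary, $\eta=\delta_3u+\delta_4$; the subtraction yielding $\rho_H=\rho/H$ and hence $\rho=c_2H$; the splitting of the long relation into the functionally independent $H$- and $K$-sectors; and, importantly, the observation that a component of $\theta$ proportional to $H$ must be excluded here because it would force $\pi,\varphi,\phi$ to depend nonlocally on the arbitrary elements (this is precisely what separates $G^{\sim}$ from $\hat G^{\sim}$). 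The only caveat is a consistent sign slip in your intermediate formulas: from $\pi/f+2\xi_x-2\tau_t-\varphi/g=\rho/H=c_2$ one gets $\pi/f-\varphi/g=c_2+2\tau_t-2\xi_x$, not $c_2-2\delta_1+2\xi_x$, and correspondingly $\varphi/g=+\xi_x+\mathrm{const}$ rather than $-\xi_x+\mathrm{const}$, which is what exponentiates to $\tilde g=\epsilon_1\epsilon_2^{-1}X_xg$; since the finite transformations you quote are the correct ones, these are transcription errors rather than gaps in the argument.
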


It is shown that class \eqref{eqVarCoefTelegraphEq} admits other
equivalence transformations which do not belong to $G^{\sim}$ and
form, together with usual equivalence transformations, an {\sl extended
equivalence group}. We demand for these transformations to be point
with respect to $(t, x, u)$. The explicit form of the new arbitrary
elements $(\tilde f, \tilde g, \tilde h, \tilde H, \tilde K)$ is
determined via $(t, x, u, f, g, h, H, K)$ in some nonfixed
(possibly, nonlocal) way. We can construct the complete (in this sense)
extended equivalence group $\hat G^{\sim}$ of class
\eqref{eqVarCoefTelegraphEq}  by using the direct method.

\begin{theorem}\label{TheorOnEquivGroupExtended}
The extended equivalence transformation group  $\hat G^{\sim}$ for
the class \eqref{eqVarCoefTelegraphEq} is formed by the
transformations
\[
\begin{array}{ll}
\tilde t =\delta_1 t +\delta_2,\quad \tilde x =X(x),\quad
\tilde u =\delta_3 u + \delta_4,\\
\tilde f =\frac{\epsilon_1\delta_1^2}{X_x}f\int e^{-\epsilon_4\int\!
\frac{h}{g}}dx,\quad \tilde g = \epsilon_1\epsilon_2^{-1}X_xg\int
e^{-\epsilon_4\int\! \frac{h}{g}}dx,\quad \tilde h =
\epsilon_1\epsilon_3^{-1}h\int e^{-\epsilon_4\int\! \frac{h}{g}}dx,\\
\tilde H = \epsilon_2 H ,\quad \tilde K=\epsilon_3(K+\epsilon_4H),
\end{array}
\]
where $\delta_j~(j=1,\ldots, 4)$ and $\epsilon_i~(i=1,\ldots,4)$ are
arbitrary constants,
$\delta_1\delta_3\epsilon_1\epsilon_2\epsilon_3\neq 0$, $X$ is an
arbitrary smooth function of $x$, $X_x\neq 0$, $\int\!
\frac{h}{g}=\int \frac{h(x)}{g(x)} dx$.
\end{theorem}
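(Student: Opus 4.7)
The plan is to apply the direct method: seek a point transformation $(\tilde t,\tilde x,\tilde u)=(T,X,U)(t,x,u)$ together with a transformation of the arbitrary elements $(\tilde f,\tilde g,\tilde h,\tilde H,\tilde K)$ expressed through $(t,x,u,f,g,h,H,K)$ in a possibly nonlocal way (i.e.\ allowed to involve integrals in $x$ of $f,g,h$). I would substitute $\tilde u_{\tilde t\tilde t}$, $\tilde u_{\tilde x}$, $\tilde u_{\tilde x\tilde x}$ into the transformed equation via chain rule and impose that the resulting identity reduce, on solutions, to the source equation $f u_{tt}=(gHu_x)_x+hKu_x$.

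First I would constrain the coordinate part $(T,X,U)$. Since $\tilde H(\tilde u)$ and $\tilde K(\tilde u)$ must depend on $\tilde u$ alone while the equality of both sides of the transformed equation has to hold as a polynomial identity in the jet variables $u_t,u_x,u_{tt},u_{tx},u_{xx}$, elementary splitting forces $T=T(t)$, $X=X(x)$, $U=U(u)$ with $T_{tt}=0$ and $U_{uu}=0$; exhausting the residual scaling and translation freedom yields $T=\delta_1 t+\delta_2$ and $U=\delta_3 u+\delta_4$, while $X=X(x)$ remains an arbitrary diffeomorphism. This portion parallels the derivation behind Theorem~\ref{TheorOnEquivGrouUsaual} and does not yet exploit the nonlocal freedom.

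Next I would split the residual equation by powers of $u_x$ and $u_{xx}$. Matching the $u_{tt}$, $u_{xx}$ and $u_x^{\,2}$ coefficients (the last through $\tilde H_{\tilde u}$) fixes $\tilde H=\epsilon_2 H$ and writes $\tilde f,\tilde g$ as $f,g$ times the expected Jacobian factors multiplied by a yet-unknown modulation $M(x)$. The decisive new step is matching the coefficient of $u_x$: the first-order $u_x$ term on the right-hand side has contributions from both $(gHu_x)_x$ and $hKu_x$, and the split of these two contributions has a one-parameter ambiguity, because adding a multiple of $\tilde H$ to $\tilde K$ can be compensated by a shift in $\tilde g_{\tilde x}\tilde H$. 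This ambiguity introduces the genuinely new parameter $\epsilon_4$ via $\tilde K=\epsilon_3(K+\epsilon_4 H)$ and at the same time produces a first-order linear ODE of the form $M''/M'=-\epsilon_4\,h/g$ for the modulation factor, together with $\tilde h=\epsilon_1\epsilon_3^{-1}hM$ from the $K$-part of the same match.

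The main obstacle is this last step: integrating the ODE gives $M'=\const\cdot e^{-\epsilon_4\int h/g}$ and therefore $M(x)=\int e^{-\epsilon_4\int h/g}\,dx$, a nonlocal function of $h$ and $g$ that is invisible to the infinitesimal analysis underlying Theorem~\ref{TheorOnEquivGrouUsaual}. I would then verify that this choice of $M$ is simultaneously consistent with all the remaining compatibility relations coming from the jet split, collect the surviving free constants $\delta_1,\ldots,\delta_4,\epsilon_1,\ldots,\epsilon_4$, and check closure under composition so that $\hat G^{\sim}$ is indeed a group. Finally I would note that setting $\epsilon_4=0$ degenerates the nonlocal factor into a linear function of $x$, which after absorption into the free diffeomorphism $X(x)$ recovers the usual equivalence group $G^{\sim}$ of Theorem~\ref{TheorOnEquivGrouUsaual}.
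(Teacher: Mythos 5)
Your overall route --- postulate a point transformation of $(t,x,u)$ together with possibly nonlocal transformations of the arbitrary elements, substitute, split with respect to the jet monomials $u_{tt},u_{xx},u_x^2,u_x$, and locate the new parameter $\epsilon_4$ in the one-parameter ambiguity of distributing the first-order $u_x$-coefficient between the $(gHu_x)_x$ and $hKu_x$ contributions --- is exactly the route the paper takes: it offers no written proof beyond the statement that $\hat G^{\sim}$ is obtained ``by using the direct method,'' supplemented by the observation that the genuinely new transformations are the gauge transformations \eqref{TransofDepentVa}, which act on the arbitrary elements only and merely re-express the same equation. The paper's only structural addition is the factorization of $\hat G^{\sim}$ into the gauge subgroup \eqref{TransofDepentVa} composed with the quotient \eqref{FactorTransofDepentVa}; your single direct computation subsumes both pieces, so in approach you and the paper coincide.

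There is, however, one step you should recompute. Writing the transformed equation as $\mu(x)$ times the source equation, the $u_{xx}$- and $Ku_x$-coefficients give $\tilde g=\epsilon_2^{-1}\mu g X_x^{\,2}$ and $\tilde h=\epsilon_3^{-1}\mu h X_x$, and then the $Hu_x$-coefficient reduces to
\begin{equation*}
\frac{\mu_x}{\mu}+\frac{X_{xx}}{X_x}=-\epsilon_4\,\frac{h}{g},
\end{equation*}
a \emph{first-order} condition on the modulation (only $\mu_x$ enters, through $\tilde g_{\tilde x}$), whose solution is $\mu X_x=\epsilon_1 e^{-\epsilon_4\int h/g}$ after a single quadrature. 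Your equation $M''/M'=-\epsilon_4 h/g$ is second order and forces a further antiderivative, $M=\int e^{-\epsilon_4\int h/g}\,dx$. The two disagree, and your own final sanity checks expose the discrepancy: with $M=\int e^{0}\,dx=x+\mathrm{const}$ at $\epsilon_4=0$, all three of $\tilde f,\tilde g,\tilde h$ acquire a common factor linear in $x$, which cannot be absorbed into the diffeomorphism $X(x)$ (in Theorem~\ref{TheorOnEquivGrouUsaual} the transformation of $h$ carries no $x$-dependent factor at all), so $G^{\sim}$ is not recovered; likewise closure under composition fails. With the exponential factor both checks pass immediately. So the gap is not in the strategy but in the order of the ODE for the modulation; carrying out the $Hu_x$-split carefully removes the spurious outer integration.
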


\begin{remark}
It should be noted that the existence of such equivalence transformations
can be explained in many respects by features of representation of equations in the form
\eqref{eqVarCoefTelegraphEq}. This form usually leads to an ambiguity because
the same equation has an infinite series of different
representations. In fact, two representations
\eqref{eqVarCoefTelegraphEq} with the arbitrary element tuples $(f,
g, h, H, K)$ and $(\tilde f, \tilde g, \tilde h, \tilde H, \tilde
K)$ determine the same equation if and only if
\begin{equation}\label{TransofDepentVa}
\begin{array}{ll}
\tilde f =\frac{\epsilon_1\delta_1^2}{X_x}f\int e^{-\epsilon_4\int\!
\frac{h}{g}}dx,\quad \tilde g = \epsilon_1\epsilon_2^{-1}X_xg\int
e^{-\epsilon_4\int\! \frac{h}{g}}dx,\quad \tilde h =
\epsilon_1\epsilon_3^{-1}h\int e^{-\epsilon_4\int\! \frac{h}{g}}dx,\\
\tilde H = \epsilon_2 H ,\quad \tilde K=\epsilon_3(K+\epsilon_4H),
\end{array}
\end{equation}
where $\delta_1$ and $\epsilon_i~(i=1,\ldots,4)$ are arbitrary constants,
$\epsilon_1\epsilon_2\epsilon_3\neq 0$ (the variables $t, x$ and $u$
do not transform!). The transformations \eqref{TransofDepentVa} act only on arbitrary
elements and do not really change equations. In general,
transformations of such type can be considered as trivial \cite{Lisle1992}
(``gauge'') equivalence transformations and form the ``gauge''
(normal) subgroup $\hat G^{\sim g}$ of the extended equivalence
group $\hat G^{\sim}$. Application of ``gauge'' equivalence
transformations is equivalent to rewrite equations in another form.
In spite of really equivalence transformations, their role in group
classification comes not to choice of representatives in equivalence
classes but to choice of form of these representatives.
\end{remark}

\begin{remark}
We note that transformations \eqref{TransofDepentVa} with
$\epsilon_4\neq 0$ are nonlocal with respect to arbitrary elements,
otherwise they belong to  $G^{\sim}$ and form the ``gauge'' (normal)
subgroup $G^{\sim g}$ of the equivalence group $G^{\sim}$.
The factor-group $\hat G^{\sim}/\hat G^{\sim g}$ for class
\eqref{eqVarCoefTelegraphEq} coincides  with $G^{\sim}/G^{\sim g}$ and can be
assumed to consist of the transformations
\begin{equation}\label{FactorTransofDepentVa}
\begin{array}{ll}
\tilde t =\delta_1 t +\delta_2,\quad \tilde x =X(x),\quad
\tilde u =\delta_3 u + \delta_4,\\
\tilde f =\frac{\delta_1^2}{X_x}f,\quad \tilde g = X_xg,\quad \tilde
h = h, \tilde H =H ,\quad \tilde K=K,
\end{array}
\end{equation}
where $\delta_j~(j=1,\ldots, 4)$ are arbitrary constants,
$\delta_1\delta_3\neq 0$, $X$ is an arbitrary smooth function of
$x$, $X_x\neq 0$.
\end{remark}

Based on the this idea, we can gauge the parameter-function $g$ in equation \eqref{eqVarCoefTelegraphEq} to $1$.
More exactly, using theorem \ref{TheorOnEquivGrouUsaual}, we deduce that
the transformation
\[
\tilde t = t, \tilde x =\int \frac{dx}{g(x)},
\tilde u = u
\]
from $G^{\sim}/G^{\sim g}$ reduce equation
\eqref{eqVarCoefTelegraphEq} to
\[
\tilde f(\tilde x)\tilde u_{\tilde t\tilde t}=(H(\tilde u)\tilde
u_{\tilde x})_{\tilde x}+\tilde h(\tilde x)K(\tilde u)\tilde
u_{\tilde x},
\]
where $\tilde f(\tilde x) = g(x)f(x), \tilde g(\tilde x) = 1$ and
$\tilde h(\tilde x) = h(x)$. (Likewise any equation of form \eqref{eqVarCoefTelegraphEq} can
be reduced to the same form with $\tilde f(\tilde x) = 1$). That is
why, without loss of generality we can restrict ourselves to
investigation of the equation
\begin{equation}\label{g=1eqVarCoefTelegraphEq}
f(x)u_{tt}=(H(u)u_x)_x+h(x)K(u)u_x,
\end{equation}
Below, we denote $G_1^{\sim}$ and $\hat G_1^{\sim}$ as the usual and
extended equivalence transformation group of equation
\eqref{g=1eqVarCoefTelegraphEq}.

\begin{theorem}\label{TheorOnEquivGroupUsualg=1}
The usual equivalence transformation group $G_1^{\sim}$ for
class~\eqref{g=1eqVarCoefTelegraphEq} consists of the
transformations
\[
\begin{array}{ll}
\tilde t =\epsilon_4 t +\epsilon_1,\quad \tilde x =\epsilon_5 x
+\epsilon_2,\quad
\tilde u =\epsilon_6 u + \epsilon_3,\\
\tilde f = \epsilon_4^2\epsilon_5^{-2}\epsilon_7 f ,\quad \tilde h=
\epsilon_8^{-1} h, \quad \tilde H = \epsilon_7 H ,\quad \tilde K=
\epsilon_5^{-1}\epsilon_7\epsilon_8K,
\end{array}
\]
where $\epsilon_1, \ldots\epsilon_8$ are arbitrary constants and
$\epsilon_4\epsilon_5\epsilon_6\epsilon_7\epsilon_8\neq0$.
\end{theorem}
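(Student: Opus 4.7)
The plan is to obtain $G_1^\sim$ as the subgroup of the usual equivalence group $G^\sim$ (from Theorem~\ref{TheorOnEquivGrouUsaual}) consisting of those transformations that preserve the gauge condition $g\equiv 1$. Since class \eqref{g=1eqVarCoefTelegraphEq} is the specialization of class \eqref{eqVarCoefTelegraphEq} at $g=1$, any element of $G_1^\sim$ must belong to $G^\sim$ and additionally satisfy $\tilde g\equiv 1$ whenever $g\equiv 1$.

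The decisive step is to impose this gauge constraint on the transformation law $\tilde g=\epsilon_1\epsilon_2^{-1}X_x\,g$ provided by Theorem~\ref{TheorOnEquivGrouUsaual}. With $g=\tilde g=1$ this forces $\epsilon_1\epsilon_2^{-1}X_x=1$, so $X_x$ must be a nonzero constant. Thus the arbitrary function $X$ collapses to an affine map $X(x)=ax+b$ with $a\ne 0$, and $\epsilon_2$ is determined by $a$ and $\epsilon_1$. This is the only restriction; no further constraints come from the remaining formulas of Theorem~\ref{TheorOnEquivGrouUsaual}, since $\tilde f$, $\tilde h$, $\tilde H$, $\tilde K$ depend algebraically on the surviving parameters.

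Next I would substitute $X_x=a$ into the remaining formulas of Theorem~\ref{TheorOnEquivGrouUsaual} and relabel the independent constants to match the statement. Setting $\epsilon_4:=\delta_1$, $\epsilon_1:=\delta_2$, $\epsilon_5:=a$, $\epsilon_2:=b$, $\epsilon_6:=\delta_3$, $\epsilon_3:=\delta_4$ reproduces the $(t,x,u)$-part; then introducing the two remaining independent multiplicative freedoms as $\epsilon_7$ (overall scaling of $H$, absorbing the combination $\epsilon_5\epsilon_1^{\mathrm{old}}$) and $\epsilon_8$ (the scaling of $h$, equivalently the ratio of the $K$- and $H$-rescalings with $\epsilon_3^{\mathrm{old}}=\epsilon_5^{-1}\epsilon_7\epsilon_8$), a direct computation yields $\tilde f=\epsilon_4^2\epsilon_5^{-2}\epsilon_7 f$, $\tilde h=\epsilon_8^{-1}h$, $\tilde H=\epsilon_7 H$, $\tilde K=\epsilon_5^{-1}\epsilon_7\epsilon_8 K$, exactly as claimed. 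The non-degeneracy condition $\epsilon_4\epsilon_5\epsilon_6\epsilon_7\epsilon_8\ne 0$ follows from $\delta_1\delta_3\epsilon_1^{\mathrm{old}}\epsilon_2^{\mathrm{old}}\epsilon_3^{\mathrm{old}}\ne 0$ together with $a\ne 0$.

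As an independent verification (and to confirm no transformations are lost by passing through $G^\sim$), I would run the Lie--Ovsiannikov scheme of Section~\ref{SectiononEquivaTrans} directly on class~\eqref{g=1eqVarCoefTelegraphEq}, treating only $(f,h,H,K)$ as arbitrary elements. The determining system is precisely~\eqref{sysDetEqEquivTr} with $g$ removed and with the new equation $\varphi\equiv 0$; the latter, together with the invariance relation for the first-order coefficient, reintroduces the condition $X_{xx}=0$, after which integration reproduces the same eight-parameter group. The only real obstacle is the parameter bookkeeping: verifying that the relabeled $\epsilon_1,\ldots,\epsilon_8$ are genuinely independent (no hidden constraint remains) and that each formula is consistent with the non-degeneracy inequalities. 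No new conceptual ingredients beyond Theorem~\ref{TheorOnEquivGrouUsaual} are needed.
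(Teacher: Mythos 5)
Your proposal is correct, and the parameter bookkeeping checks out: imposing $\tilde g=g=1$ on the formula $\tilde g=\epsilon_1\epsilon_2^{-1}X_xg$ of Theorem~\ref{TheorOnEquivGrouUsaual} forces $X_x=\epsilon_2/\epsilon_1=\const$, and the relabeling $\epsilon_5=\epsilon_2^{\mathrm{old}}/\epsilon_1^{\mathrm{old}}$, $\epsilon_7=\epsilon_2^{\mathrm{old}}$, $\epsilon_8=\epsilon_3^{\mathrm{old}}/\epsilon_1^{\mathrm{old}}$ is invertible and reproduces exactly $\tilde f=\epsilon_4^2\epsilon_5^{-2}\epsilon_7f$, $\tilde h=\epsilon_8^{-1}h$, $\tilde H=\epsilon_7H$, $\tilde K=\epsilon_5^{-1}\epsilon_7\epsilon_8K$. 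The paper does not actually spell out a proof of Theorem~\ref{TheorOnEquivGroupUsualg=1}; its general method for such statements is the Lie--Ovsiannikov infinitesimal computation on the augmented system (the analogue of~\eqref{sysForEquivTr} with $g$ deleted), and it only remarks, for the extended group, that $\hat G_1^{\sim}$ is the subgroup of $\hat G^{\sim}$ preserving $g=1$. Your primary route via restriction of $G^{\sim}$ is therefore a genuinely different (and quicker) derivation, but on its own it only establishes that gauge-preserving elements of $G^{\sim}$ have the stated form; it does not by itself rule out equivalence transformations of the subclass $g=1$ that fail to extend to the full class. You correctly identify this loophole and close it with the direct determining-equation computation on class~\eqref{g=1eqVarCoefTelegraphEq}, which is in effect the paper's own (unwritten) argument; with that verification included the proof is complete. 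The one point worth stating explicitly in either route is that the Lie method yields only the connected component, so the discrete sign-change transformations mentioned after the theorem are recovered here simply because the constants $\epsilon_4,\dots,\epsilon_8$ are allowed to be negative.
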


Note that for class~\eqref{g=1eqVarCoefTelegraphEq} there also
exists a non-trivial group of discrete equivalence transformations
generated by four involutive transformations of alternating sign in
the sets $\{t\}, \{x,K\},$ $ \{u\}$, $\{f,H,K\}$ and $\{h,K\}$.
Class~\eqref{g=1eqVarCoefTelegraphEq} admits other equivalence
transformations being nonlocal with respect to arbitrary elements
which do not belong to~$G_1^{\sim}$. We demand for these
transformations to be point with respect to $(t,x,u)$. In such way,
using the direct method we can find a generalized equivalence
group~$\hat G^{\sim}$ of class~\eqref{eqVarCoefTelegraphEq}.

\begin{theorem}\label{TheorOnEquivGroupExtendg=1}
The extended equivalence transformation group $\hat G_1^{\sim}$ for
class~\eqref{g=1eqVarCoefTelegraphEq} is formed by the
transformations
\begin{equation} \label{EquivTransformationsGeng=1}\arraycolsep=0em
\begin{array}{l}
\tilde t =\epsilon_4 t +\epsilon_1,\quad \tilde x =\epsilon_5 \int
e^{\epsilon_9\int\! h}dx +\epsilon_2,\quad
\tilde u =\epsilon_6 u + \epsilon_3,\\
\tilde f = \epsilon_4^2\epsilon_5^{-1}\epsilon_7 f \int
e^{-2\epsilon_9\int\! h}dx, \quad \tilde h =
\epsilon_7\epsilon_8^{-1} h\int e^{-\epsilon_9\int\! h}dx,\quad
\tilde H = \epsilon_5\epsilon_7 H ,\quad \tilde K=
\epsilon_8(K+\epsilon_9H),
\end{array}
\end{equation}
where $\epsilon_1, \ldots\epsilon_9$ are arbitrary constants and
$\epsilon_1\epsilon_3\epsilon_5\epsilon_7\epsilon_9\neq0$, $\int\!
h=\int\! h(x)\,dx$.
\end{theorem}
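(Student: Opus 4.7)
The plan is to apply the direct method: postulate point transformations $\tilde t=T(t,x,u)$, $\tilde x=X(t,x,u)$, $\tilde u=U(t,x,u)$ with the new arbitrary elements $\tilde f,\tilde h,\tilde H,\tilde K$ allowed to depend on $f,h,H,K$ in a possibly nonlocal manner, substitute into the target equation $\tilde f\tilde u_{\tilde t\tilde t}=(\tilde H\tilde u_{\tilde x})_{\tilde x}+\tilde h\tilde K\tilde u_{\tilde x}$, and match against \eqref{g=1eqVarCoefTelegraphEq}. A useful shortcut is to observe that class \eqref{g=1eqVarCoefTelegraphEq} is precisely the $g=1$ slice of \eqref{eqVarCoefTelegraphEq}, so $\hat G_1^{\sim}$ equals the stabilizer of the gauge $\tilde g=g=1$ inside the group $\hat G^{\sim}$ already determined in Theorem \ref{TheorOnEquivGroupExtended}; the statements can then be obtained by specialisation and relabelling.

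Carrying out the direct calculation, I would rewrite $\tilde u_{\tilde t\tilde t}$, $\tilde u_{\tilde x\tilde x}$ and $\tilde u_{\tilde x}$ via the chain rule and split the resulting identity with respect to the derivatives $u_{tt},u_{xx},u_{tx}$ and the monomials in $u_t,u_x$. Exactly as in Theorem \ref{TheorOnEquivGrouUsaual}, this forces $T$ to be linear in $t$ alone, $U$ linear in $u$ alone, and $X=X(x)$, giving $T=\epsilon_4 t+\epsilon_1$ and $U=\epsilon_6 u+\epsilon_3$. The remaining coefficients then yield relations of the form
\[
\tilde H=\lambda(x)\,H,\qquad \tilde K=\lambda(x)\bigl(K+\kappa(x)H\bigr),\qquad \tilde h=\mu(x)\bigl(h-\kappa(x)\bigr),\qquad \tilde f=\nu(x)\,f,
\]
together with the gauge condition $\tilde g=1$ which, via the coefficient of $u_{xx}$, couples $X_x$ and $\lambda$. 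Since $H,K,h$ are independent arbitrary functions, $\kappa$ can enter only through a single multiplicative constant, so $\kappa(x)=\epsilon_9 X_x$; requiring $\tilde h$ to depend only on $\tilde x$ then yields $\mu_x/\mu=-\epsilon_9 h$, producing the nonlocal exponential factor $e^{-\epsilon_9\int h}$ in $\mu$, while the analogous condition on $X_x$ integrates to $X_x=\epsilon_5 e^{\epsilon_9\int h}$, i.e.\ $X=\epsilon_5\int\! e^{\epsilon_9\int h}dx+\epsilon_2$. Substituting back and relabeling the free constants $\epsilon_i$ produces precisely the formulas \eqref{EquivTransformationsGeng=1}.

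The main obstacle is the bookkeeping around the nonlocal parameter $\epsilon_9$. When $\epsilon_9\ne 0$ the change of variable $\tilde x$ is nonlocal in $h$, so \eqref{EquivTransformationsGeng=1} cannot be a genuine point transformation in the arbitrary elements; one must verify that the formulas nevertheless assemble into a group under composition (with $\epsilon_9$ adding and the nonlocal factors multiplying correctly under iteration) and that no further mixing transformations—between $H$ and $h$, or between $f$ and $K$—are admissible. This completeness check is what separates $\hat G_1^{\sim}$ from the usual group $G_1^{\sim}$ of Theorem \ref{TheorOnEquivGroupUsualg=1} and is precisely where the direct method becomes indispensable, since a purely infinitesimal analysis of the type used for Theorem \ref{TheorOnEquivGrouUsaual} captures only the local $\epsilon_9=0$ sector.
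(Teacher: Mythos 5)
Your proposal follows the same route the paper takes: the direct method applied to point transformations that are projectible on $(t,x,u)$ but possibly nonlocal in the arbitrary elements, combined with the observation (stated in the paper immediately after the theorem) that $\hat G_1^{\sim}$ is the subgroup of $\hat G^{\sim}$ from Theorem~\ref{TheorOnEquivGroupExtended} preserving the gauge $g=1$. The paper gives no more detail than your sketch does, and your identification of the key mechanism --- $X_{xx}/X_x$ generating the admixture $K\mapsto K+\epsilon_9 H$ and forcing $X_x\propto e^{\epsilon_9\int h}$ --- together with the closure/completeness caveat for the nonlocal $\epsilon_9\neq 0$ sector, is consistent with the stated result.
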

The group $\hat G_1^{\sim}$ is a subgroup of $\hat G^{\sim}$. Its
transformation can be considered as from $\hat G^{\sim}$, which
preserves the condition $g = 1$. The
transformations~\eqref{EquivTransformationsGeng=1} with
non-vanishing values of the parameter $\epsilon_9$ are also nonlocal
in the arbitrary element~$h$. There exists a way to avoid operations
with nonlocal equivalence transformations. More exactly, we can
assume that the parameter-function $K$ is determined up to an
additive term proportional to $H$ and subtract such term from $K$
before applying equivalence transformations
\eqref{FactorTransofDepentVa}.

At the same time, there is another possible generalization of the
gauge $g=1$ to the general case of $h$, namely
the gauge $g=h$. Any equation of the form
\eqref{eqVarCoefTelegraphEq} can be reduced to the equation
\begin{equation}\label{g=heqVarCoefTelegraphEq}
f(x)u_{tt}=(h(x)H(u)u_x)_x+h(x)K(u)u_x.
\end{equation} by the transformation $\tilde t = t, \tilde x =\int \frac{f(x)}{g(x)}dx,
\tilde u = u$ from $G^{\sim}/G^{\sim g}$.

\begin{theorem}\label{TheorOnEquivGroupUsualg=h}
The usual equivalence transformation group $G_h^{\sim}$ for
class~\eqref{g=heqVarCoefTelegraphEq} consists of the
transformations
\[
\begin{array}{ll}
\tilde t =\epsilon_1 t +\epsilon_2,\quad \tilde x =\epsilon_5 x
+\epsilon_6,\quad
\tilde u =\epsilon_3 u + \epsilon_4,\\
\tilde f = \epsilon_1^2\epsilon_5^{-2}\epsilon_7 \epsilon_8f ,\quad
\tilde h= \epsilon_8 h, \quad \tilde H = \epsilon_7 H ,\quad \tilde
K= \epsilon_5^{-1}\epsilon_7K,
\end{array}
\]
where $\epsilon_1, \ldots\epsilon_8$ are arbitrary constants and
$\epsilon_4\epsilon_5\epsilon_6\epsilon_7\epsilon_8\neq0$.
\end{theorem}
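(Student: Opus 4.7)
The plan is to realize $G_h^\sim$ as the subgroup of the usual equivalence group $G^\sim$ of Theorem~\ref{TheorOnEquivGrouUsaual} consisting of those transformations that preserve the gauge condition $g=h$. Class~\eqref{g=heqVarCoefTelegraphEq} is literally the slice of class~\eqref{eqVarCoefTelegraphEq} cut out by $g=h$, so a member of $G^\sim$ restricts to an equivalence transformation of~\eqref{g=heqVarCoefTelegraphEq} if and only if it sends every input with $g=h$ to an image with $\tilde g=\tilde h$. This observation lets me recycle Theorem~\ref{TheorOnEquivGrouUsaual} rather than rerun the infinitesimal computation that produced the determining system~\eqref{sysDetEqEquivTr}.

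The heart of the argument is a one-line comparison. Theorem~\ref{TheorOnEquivGrouUsaual} gives $\tilde g=\epsilon_1\epsilon_2^{-1}X_x\,g$ and $\tilde h=\epsilon_1\epsilon_3^{-1}h$ (in the notation of that statement); substituting $g=h$ and demanding $\tilde g=\tilde h$ collapses to the scalar identity $X_x=\epsilon_2/\epsilon_3$, which forces $X$ to be affine in $x$. This is the whole nontrivial content of the theorem: the gauge turns the previously arbitrary function $X(x)$ into the linear expression $X(x)=\alpha x+\beta$ with $\alpha=\epsilon_2/\epsilon_3\neq 0$. Substituting this affine $X$ back into the other transformation rules of Theorem~\ref{TheorOnEquivGrouUsaual} and then relabelling the surviving free constants --- the $t$-scaling and shift as $\epsilon_1,\epsilon_2$, the $u$-scaling and shift as $\epsilon_3,\epsilon_4$, the coefficients of $X$ as $\epsilon_5=\alpha$ and $\epsilon_6=\beta$, the $H$-scaling (formerly $\epsilon_2$ of Theorem~\ref{TheorOnEquivGrouUsaual}) as $\epsilon_7$, and the ratio $\epsilon_1/\epsilon_3$ controlling the $h$-scaling as $\epsilon_8$ --- one recovers $\tilde f=\epsilon_1^2\epsilon_5^{-2}\epsilon_7\epsilon_8 f$, $\tilde h=\epsilon_8 h$, $\tilde H=\epsilon_7 H$, and, crucially, $\tilde K=\epsilon_5^{-1}\epsilon_7 K$, in which the factor $\epsilon_5^{-1}$ encodes exactly the forced relation $\alpha=\epsilon_2/\epsilon_3$.

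No serious obstacle is expected: once Theorem~\ref{TheorOnEquivGrouUsaual} is in hand the whole proof reduces to the algebraic step above together with a careful parameter count. The gauge replaces the infinite-dimensional freedom in $X$ by a single scalar equation, leaving exactly one additional constant $\beta$ that, together with the seven scalar constants of Theorem~\ref{TheorOnEquivGrouUsaual}, supplies the eight essential parameters $\epsilon_1,\ldots,\epsilon_8$ of the statement. A self-contained infinitesimal derivation is also available: repeat the calculation yielding~\eqref{sysDetEqEquivTr} on the system~\eqref{sysForEquivTr} with $g$ identified with $h$; the extra compatibility condition then enters the determining equations as $\xi_{xx}=0$ and produces the same group.
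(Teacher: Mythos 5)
Your proposal is correct and follows essentially the route the paper itself takes: the paper states Theorem~\ref{TheorOnEquivGroupUsualg=h} without an explicit proof but expressly treats the gauged equivalence groups as the subgroups of the group of Theorem~\ref{TheorOnEquivGrouUsaual} that preserve the gauge condition, and your computation ($X_x=\epsilon_2/\epsilon_3$ constant, hence $X$ affine, followed by the relabelling that produces the factor $\epsilon_5^{-1}$ in $\tilde K$) reproduces exactly the stated eight-parameter group. Your closing remark about the self-contained infinitesimal derivation with the extra condition $\xi_{xx}=0$ also covers the only logical subtlety (that the equivalence group of the gauged subclass might a priori exceed the stabilizer of the gauge inside $G^{\sim}$), so nothing essential is missing.
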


\begin{theorem}\label{TheorOnEquivGroupExtendg=h}
The extended equivalence transformation group $\hat G_h^{\sim}$ for
class~\eqref{g=heqVarCoefTelegraphEq} is formed by the
transformations
\begin{equation} \label{EquivTransformationsGeng=h}\arraycolsep=0em
\begin{array}{l}
\tilde t =\epsilon_1 t +\epsilon_2,\quad \tilde x =\epsilon_5  x
+\epsilon_6,\quad
\tilde u =\epsilon_3 u + \epsilon_4,\\
\tilde f = \epsilon_1^2\epsilon_5^{-1}\epsilon_9 f e^{\epsilon_8x},
\quad \tilde h = \epsilon_9\epsilon_7^{-1} h e^{\epsilon_8x},\quad
\tilde H = \epsilon_5 H ,\quad \tilde K= \epsilon_7(K-\epsilon_8H),
\end{array}
\end{equation}
where $\epsilon_1, \ldots\epsilon_9$ are arbitrary constants and
$\epsilon_1\epsilon_3\epsilon_5\epsilon_7\epsilon_9\neq0$, $\int\!
h=\int\! h(x)\,dx$.
\end{theorem}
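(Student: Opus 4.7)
My plan is to derive $\hat G_h^{\sim}$ as the stabilizer subgroup of $\hat G^{\sim}$ (Theorem \ref{TheorOnEquivGroupExtended}) that preserves the gauge condition $g=h$, exactly in the spirit of how Theorem \ref{TheorOnEquivGroupExtendg=1} is obtained from Theorem \ref{TheorOnEquivGroupExtended} for the gauge $g=1$. An independent derivation by the direct method applied directly to \eqref{g=heqVarCoefTelegraphEq} is of course possible, but it duplicates work already done.

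First, I would specialise the formulas in Theorem \ref{TheorOnEquivGroupExtended} to an equation of the subclass \eqref{g=heqVarCoefTelegraphEq}, where $g=h$. Under this condition, $h/g\equiv 1$, so $\int h/g\,dx = x$ (up to an additive constant, which is absorbed into the multiplicative parameters) and consequently
\[
\int e^{-\epsilon_4\int h/g}\,dx \;=\; \int e^{-\epsilon_4 x}\,dx \;=\; -\frac{1}{\epsilon_4}e^{-\epsilon_4 x}
\]
whenever $\epsilon_4\neq0$ (and just $x$ when $\epsilon_4=0$, a degenerate case that will reproduce part of Theorem \ref{TheorOnEquivGroupUsualg=h}). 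Substituting this into the transformation rules for $\tilde f$, $\tilde g$ and $\tilde h$ from Theorem \ref{TheorOnEquivGroupExtended} gives explicit expressions containing the factor $e^{-\epsilon_4 x}$.

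Next, I would impose the stabilising condition $\tilde g=\tilde h$. Comparing $\tilde g=\epsilon_1\epsilon_2^{-1}X_x\,g\!\int e^{-\epsilon_4\int h/g}dx$ with $\tilde h=\epsilon_1\epsilon_3^{-1}h\!\int e^{-\epsilon_4\int h/g}dx$ under $g=h$ forces $X_x=\epsilon_2/\epsilon_3$, so that $X$ must be affine: $X(x)=\epsilon_5 x+\epsilon_6$. This is the decisive structural restriction --- the arbitrary smooth reparametrisation of $x$ in Theorem \ref{TheorOnEquivGroupExtended} collapses to an affine one, which is the key qualitative difference between $\hat G_h^{\sim}$ and $\hat G^{\sim}$. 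With $X$ linear, the remaining formulas simplify to pure exponentials in $x$, and a routine relabelling of the constants (absorbing the factor $-1/\epsilon_4$ into the multiplicative parameters and setting $\epsilon_8=-\epsilon_4$) puts the transformations into the claimed form; note that $\tilde K=\epsilon_3(K+\epsilon_4 H)$ from Theorem \ref{TheorOnEquivGroupExtended} becomes $\tilde K=\epsilon_7(K-\epsilon_8 H)$ under this relabelling, matching \eqref{EquivTransformationsGeng=h}.

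Finally, I would verify completeness: any equivalence transformation of \eqref{g=heqVarCoefTelegraphEq} that is point with respect to $(t,x,u)$ extends to an equivalence transformation of the full class \eqref{eqVarCoefTelegraphEq} (since \eqref{g=heqVarCoefTelegraphEq} is a subclass), hence must lie in $\hat G^{\sim}$ and must preserve $g=h$; the analysis above then exhausts all such transformations. The main obstacle --- really the only non-mechanical point --- is correctly tracking how the nonlocal integral $\int e^{-\epsilon_4\int h/g}dx$ degenerates to an elementary exponential when $g=h$, and making sure the resulting multiplicative constants are repackaged consistently so that the formulas for $\tilde f,\tilde h,\tilde H,\tilde K$ appear in the symmetric form of \eqref{EquivTransformationsGeng=h}. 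Once the substitution and relabelling are done, closure of the transformations under composition (the group property) and invariance of \eqref{g=heqVarCoefTelegraphEq} follow from the corresponding properties already established for $\hat G^{\sim}$.
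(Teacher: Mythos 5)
The paper itself supplies no written proof of Theorem~\ref{TheorOnEquivGroupExtendg=h}; it is asserted, like Theorem~\ref{TheorOnEquivGroupExtendg=1}, with an appeal to the direct method, together with the remark (made explicitly for the gauge $g=1$) that the gauged group consists of those transformations of $\hat G^{\sim}$ which preserve the gauge condition. Your derivation of the \emph{list} of transformations is sound and captures the essential point: imposing $\tilde g=\tilde h$ on the formulas of Theorem~\ref{TheorOnEquivGroupExtended} with $g=h$ forces $X_x=\epsilon_2/\epsilon_3=\mathrm{const}$, so the arbitrary reparametrisation of $x$ collapses to an affine map, and the nonlocal factor degenerates to a multiple of $e^{-\epsilon_4 x}$ (whether one reads the paper's expression $\int e^{-\epsilon_4\int h/g}dx$ literally or as the more plausible $e^{-\epsilon_4\int h/g}$ --- for $g=h$ the two differ only by a constant factor, so the conclusion is unaffected). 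The relabelling $\epsilon_8=-\epsilon_4$ then reproduces \eqref{EquivTransformationsGeng=h}.

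The weak link is your completeness argument. The assertion that an equivalence transformation of the subclass \eqref{g=heqVarCoefTelegraphEq} ``extends to an equivalence transformation of the full class since it is a subclass'' is a non-sequitur: an equivalence transformation of a subclass is only required to map that subclass to itself, and there is no a priori reason it should act coherently on tuples $(f,g,h,H,K)$ with $g\neq h$. The natural attempt to extend it --- first gauge $g$ to $h$, then apply the given transformation --- produces a map whose $(t,x,u)$-component depends on the arbitrary elements, which is not an element of $\hat G^{\sim}$ as defined. So your argument establishes that the stabilizer of the gauge in $\hat G^{\sim}$ is contained in $\hat G_h^{\sim}$ and yields exactly the transformations \eqref{EquivTransformationsGeng=h}, but it does not show that these exhaust $\hat G_h^{\sim}$. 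To close this you would need to run the direct (or infinitesimal) method on the class \eqref{g=heqVarCoefTelegraphEq} itself --- which is what the paper's one-line appeal to ``the direct method'' amounts to --- or invoke a normalization-type statement relating the equivalence group of a gauged subclass to that of the whole class, which the paper does not prove either.
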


\begin{remark}
If $H=0$, we assume $h=1$ for determinacy.
\end{remark}

\section{Group Classification of nonlinear telegraph equations}
\label{SectionOnGrClasRes}

In this section, we will present the group classification for the class
\eqref{eqVarCoefTelegraphEq} with the gauges $g=1$ and $g=h$  under the extended
equivalence transformations group $\hat G^{\sim}$.

Following the algorithm in
\cite{Ovsiannikov1982,Akhatov&Gazizov&Ibragimov1987,Akhatov&Gazizov&Ibragimov1989,
Popovych&Ivanova2004NVCDCEs,Ivanova&Popovych&Sophocleous2007}
we are looking for an infinitesimal operator in the form
\begin{equation}\label{OperatorLieSym}
Q=\tau(t, x, u)\partial_t+\xi(t, x, u)\partial_x+\eta(t,
x,u)\partial_u
\end{equation}
which corresponds to a one-parameter Lie group of local
transformation and keep the equation~\eqref{eqVarCoefTelegraphEq}
invariant. The classical infinitesimal Lie invariance criterion for
equation~\eqref{eqVarCoefTelegraphEq} to be invariant with respect
to the operator~\eqref{OperatorLieSym} read as
\begin{equation}\label{sysDetEqTelEqGenForm}
\pr^{(2)}Q(\triangle)\mid_{\triangle=0}=0,\qquad
 \triangle=f(x)u_{tt}-(g(x)H(u)u_x)_x-h(x)K(u)u_x.
\end{equation}
Here $\pr^{(2)}Q$ is the usual second order
prolongation~\cite{Olver1986,Ovsiannikov1982} of the
operator~\eqref{OperatorLieSym}. Substituting the coefficients
of~$\pr^{(2)}Q$ into~\eqref{sysDetEqTelEqGenForm} yields the
following determining equations for $\tau$, $\xi$ and $\eta$:
\begin{equation}\label{sysDetEqTelEqComp}
\begin{array}{ll}
\tau_x=\tau_u=\xi_t=\xi_u=\eta_{uu}=0,\\
H(g\eta_x)_x+hK\eta_x-f\eta_{tt}=0,\\
\frac{f_x}{f}H\xi-\frac{g_x}{g}H\xi-2\tau_tH-\eta
H_u+2H\xi_x=0,\\
(g_x\eta+2g\eta_x)H_u+[(2\eta_{xu}-\xi_{xx})g+(2\tau_t-\xi_x-\xi\frac{f_x}{f})g_x+\xi
g_{xx}]H+h\eta K_u\\
+(2\tau_t-\xi_x-\xi\frac{f_x}{f}+\xi\frac{h_x}{h})hK=0,\\
2\eta_{tu}-\tau_{tt}=0,\\
2(\xi_x-\eta_u)H_u-2\tau_tH_u-\eta
H_{uu}+\frac{f_x}{f}H_u\xi-\frac{g_x}{g}H_u\xi+\eta_uH_u=0.
\end{array}
\end{equation}
Investigating the compatibility of system~\eqref{sysDetEqTelEqComp}
we find that the final equation of system~\eqref{sysDetEqTelEqComp}
is an identity (substituting the third equation of
system~\eqref{sysDetEqTelEqComp} to the final one can yield this
conclusion). With this condition, system~\eqref{sysDetEqTelEqComp}
can be rewritten in the form
\begin{gather}
\tau_x=\tau_u=\xi_t=\xi_u=\eta_{uu}=0,\quad
2\eta_{tu}=\tau_{tt},\label{sysDetEqTelEq_1}\\
2(\xi_x-\tau_t)+(\frac{f_x}{f}-\frac{g_x}{g})\xi=\frac{H_u}{H}\eta,\label{sysDetEqTelEq_2}\\
(g\eta_x)_xH+hK\eta_x-f\eta_{tt}=0,\label{sysDetEqTelEq_3}\\
(g_x\eta+2g\eta_x)H_u+[(2\eta_{xu}-\xi_{xx})g+(2\tau_t-\xi_x-\xi\frac{f_x}{f})g_x+\xi
g_{xx}]H+h\eta K_u  \nonumber \\
+(2\tau_t-\xi_x-\xi\frac{f_x}{f}+\xi\frac{h_x}{h})hK=0.\label{sysDetEqTelEq_4}
\end{gather}
Equations~\eqref{sysDetEqTelEq_1} do not contain arbitrary elements.
Integration of them yields
\begin{equation}\label{CoefsSymIntegr}
\tau =\tau(t),\quad \xi=\xi(x),\quad \eta= \eta^1(t, x)u + \eta^0(t,
x), \quad \eta^1(t, x)=\frac{1}{2}\tau_t+\alpha(x).
\end{equation}

Thus, group classification of~\eqref{eqVarCoefTelegraphEq} reduces
to solving classifying
conditions~\eqref{sysDetEqTelEq_2}--\eqref{sysDetEqTelEq_4}.

Splitting system~\eqref{sysDetEqTelEq_2}--\eqref{sysDetEqTelEq_4}
with respect to the arbitrary elements and their non-vanishing
derivatives gives the equations $\tau_t=0$, $\xi = 0$, $\eta= 0$ for
the coefficients of the operators from~$A^{\ker}$
of~\eqref{eqVarCoefTelegraphEq}. As a result, we obtain the following assertion.

\subsection{Classification under the gauge $g=1$}
\label{subSectionOnGrClasRes1}

\begin{theorem}\label{TheorOnKernelg=1}
The Lie algebra of the kernel of principal groups
of~\eqref{eqVarCoefTelegraphEq} with the gauge $g=1$ is
$A^{\ker}=\langle\partial_t\rangle$.
\end{theorem}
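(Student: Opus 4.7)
\medskip

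The plan is to specialize the determining system \eqref{sysDetEqTelEq_1}--\eqref{sysDetEqTelEq_4} to the gauge $g=1$ (so $g_x=g_{xx}=0$) and then to split the resulting classifying conditions with respect to the arbitrary elements $f=f(x)$, $h=h(x)$, $H=H(u)$, $K=K(u)$ that appear as free data. By definition, an operator lies in $A^{\ker}$ iff its coefficients $\tau,\xi,\eta$ satisfy the determining system identically in all these arbitrary elements, so splitting proceeds by treating $f, h, H, K$ and their derivatives as mutually independent functions. From \eqref{sysDetEqTelEq_1} I already have $\tau=\tau(t)$, $\xi=\xi(x)$, $\eta=\eta^1(t,x)u+\eta^0(t,x)$ with $\eta^1=\tfrac12\tau_t+\alpha(x)$, so the work is to use the remaining three conditions to kill all coefficients except a constant multiple of $\tau$.

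The central step is \eqref{sysDetEqTelEq_2}, which under $g=1$ reads
\[
2(\xi_x-\tau_t)+\tfrac{f_x}{f}\,\xi=\tfrac{H_u}{H}\,\eta.
\]
The left-hand side depends only on $(t,x)$. The right-hand side is $\tfrac{H_u}{H}(\eta^1 u+\eta^0)$ where $H_u/H$ is an arbitrary function of $u$ alone, independent of $f$. Splitting with respect to $H$ — for instance, by evaluating the identity at two equations whose $H$'s differ in their logarithmic derivative — forces $\eta^1=\eta^0=0$, hence $\eta=0$. The residual equation $2(\xi_x-\tau_t)+\tfrac{f_x}{f}\xi=0$ then splits with respect to $f$: since $f_x/f$ is an arbitrary function of $x$ independent of $\xi$, its coefficient must vanish, giving $\xi=0$ and therefore $\tau_t=0$. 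Consistency of $\eta^1=\tfrac12\tau_t+\alpha(x)=0$ forces $\alpha\equiv 0$, which is automatic.

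It remains to verify that \eqref{sysDetEqTelEq_3} and \eqref{sysDetEqTelEq_4} impose no further conditions. With $\eta=0$ and $g=1$, \eqref{sysDetEqTelEq_3} reduces to $0=0$ trivially, and \eqref{sysDetEqTelEq_4} reduces to $(2\tau_t-\xi_x-\xi\tfrac{f_x}{f}+\xi\tfrac{h_x}{h})hK=0$, which also collapses since $\xi=0$ and $\tau_t=0$. Therefore the only surviving freedom is $\tau=\mathrm{const}$, yielding $A^{\ker}=\langle\partial_t\rangle$.

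The only place where any care is needed is the splitting argument in \eqref{sysDetEqTelEq_2}: one must justify that the dependence of the right-hand side on $u$ (through $H_u/H$) truly is independent of the dependence of the left-hand side on $(t,x)$ (through $f$). This is immediate from the structure of the class, in which $H$ is prescribed independently of $f$, $h$, and $K$, so no algebraic obstruction arises and the computation is routine.
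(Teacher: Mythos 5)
Your argument is correct and follows essentially the same route as the paper, which obtains Theorem~\ref{TheorOnKernelg=1} by splitting the classifying conditions \eqref{sysDetEqTelEq_2}--\eqref{sysDetEqTelEq_4} with respect to the arbitrary elements and their derivatives to conclude $\tau_t=0$, $\xi=0$, $\eta=0$. You have merely made explicit the order of the splitting (first with respect to $H$ via \eqref{sysDetEqTelEq_2} to kill $\eta$, then with respect to $f$ to kill $\xi$ and force $\tau_t=0$), which the paper leaves implicit.
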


\begin{theorem}\label{TheorOnGrClasResg=1}
A complete set of inequivalent
equations~\eqref{eqVarCoefTelegraphEq} with the gauge $g=1$ with
respect to the transformations from $\hat G_1^{\sim}$ with $A^{\max}
\neq A^{\ker}$ is exhausted by cases given in
tables~\ref{TableGrClasForAllHg=1}--\ref{TableGrClasHpowerg=1}.
\end{theorem}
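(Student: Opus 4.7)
The plan is to specialize the determining equations \eqref{sysDetEqTelEq_2}--\eqref{sysDetEqTelEq_4} to the gauge $g=1$ (so $g_x=g_{xx}=0$) and then carry out a systematic \emph{furcate split} of the resulting classifying conditions, using the integrated form $\tau=\tau(t)$, $\xi=\xi(x)$, $\eta=(\tfrac12\tau_t+\alpha(x))u+\eta^0(t,x)$ from \eqref{CoefsSymIntegr}. Equation \eqref{sysDetEqTelEq_2} serves as the master splitting tool: after substitution it reads $2(\xi_x-\tau_t)+\tfrac{f_x}{f}\xi = \tfrac{H_u}{H}\eta$, whose left-hand side is independent of $u$ while the right-hand side depends on $u$ through $\eta$ and $H_u/H$. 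Differentiating twice in $u$ and splitting forces either $(H_u/H)_{uu}=0$ together with $\eta^1\,(H_u/H)_u=0$, or further collapses of $\eta^0,\eta^1$. This immediately yields the principal branches $H=\const$, $H=a e^{bu}$, $H=a(u+c)^m$, and $H$ arbitrary, which correspond to the four main families appearing in the tables.

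For each admissible form of $H$, I would then turn to \eqref{sysDetEqTelEq_4}, split it with respect to the linearly independent functions of $u$ that occur (through $H, H_u, K, K_u$), and obtain an analogous classification of $K$ coupled to ordinary differential equations for $f_x/f$, $h_x/h$, $\xi_{xx}$ and $\alpha(x)$. Solving those ODEs produces the canonical $x$-dependences (constant, power, exponential) of $f$ and $h$ listed in the tables, as well as the explicit $\xi$ and $\alpha$ that yield the extra symmetry generators. Equation \eqref{sysDetEqTelEq_3} enters as a compatibility condition; combined with $2\eta_{tu}=\tau_{tt}$ and $\eta^1_{tt}=0$ it pins down the allowed $t$-dependence of $\tau$ and $\eta^0$, and therefore the dimension of $A^{\max}$ in each branch.

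The final step is to apply the extended equivalence group $\hat G_1^{\sim}$ of Theorem~\ref{TheorOnEquivGroupExtendg=1} to normalize the representatives in each case. The scalings $\epsilon_4,\dots,\epsilon_8$ and translations $\epsilon_1,\epsilon_2,\epsilon_3$ absorb multiplicative and additive constants in $(t,x,u,f,h,H,K)$, while the nonlocal parameter $\epsilon_9$ is used to replace $K$ by $K+\epsilon_9 H$ whenever a term proportional to $H$ can be split off from $K$; this is precisely what makes the classification modulo $\hat G_1^{\sim}$ strictly coarser than the one modulo $G_1^{\sim}$ (the latter is reported in the appendix). After this reduction, each surviving tuple $(f,h,H,K)$ should match one row of tables~\ref{TableGrClasForAllHg=1}--\ref{TableGrClasHpowerg=1}, and the corresponding basis of $A^{\max}$ is read off from the integrated expressions for $\tau,\xi,\alpha,\eta^0$.

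The main obstacle will be the combinatorial book-keeping of the furcate branches: every split on $H$ must be propagated through \eqref{sysDetEqTelEq_4} and then intersected with an independent split on $K$, and each intersection yields its own ODE system for $(f,h)$ with several sub-cases. Particular care is needed for the degenerate situations $H_u=0$, $K_u=0$, or $f$ and $h$ with proportional logarithmic derivatives, where the determining system collapses and additional invariant models of the ``furcating'' type emerge. Finally, verifying that the nonlocal $\epsilon_9$-action has been exploited in every branch where it is available, and that the remaining $\hat G_1^{\sim}$-transformations do not identify any two surviving rows, is what guarantees that the resulting list is genuinely non-redundant.
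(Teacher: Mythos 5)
Your proposal follows essentially the same route as the paper: a furcate split driven by the classifying equation \eqref{sysDetEqTelEq_2}, which forces $H$ into the branches arbitrary/exponential/power/constant (the paper organizes this by counting the number $k\le 2$ of independent equations $(au+b)H_u=cH$), followed by a secondary split of \eqref{sysDetEqTelEq_4} on $K$, integration of the resulting ODEs for $f$ and $h$, and normalization by $\hat G_1^{\sim}$ including the nonlocal $\epsilon_9$-shift of $K$ by a multiple of $H$. The only tactical difference is that the paper switches between the gauges $g=1$ and $g=h$ within the proof (using $g=h$ for $K\notin\langle 1,H\rangle$, $k\ge 1$) and transfers the results, whereas you stay in $g=1$ throughout; this does not change the substance of the argument.
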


\setcounter{tbn}{0}

\begin{center}\footnotesize\renewcommand{\arraystretch}{1.15}
Table~\refstepcounter{table}\label{TableGrClasForAllHg=1}\thetable. Case of $\forall H(u)$ (gauge $g=1$) \\[1ex]
\begin{tabular}{|l|c|c|c|l|}
\hline
N & $K(u)$ & $f(x)$ & $h(x)$ & \hfil Basis of A$^{\max}$ \\
\hline
\refstepcounter{tbn}\label{CaseForAllHForAllKForallF}\thetbn & $\forall$ & $\forall$ & $\forall$ & $\p_t$ \\
\refstepcounter{tbn}\label{CaseForAllHForAllKFexp1}\thetbn a
&$\forall$ & $e^{px}$ & $1$ & $\partial_t,\, pt\partial_t+2\partial_x$ \\
\thetbn a'
&$1$ & $|x|^p$ & $x^{-1}$ & $\partial_t,\, (p+2)t\partial_t+2x\partial_x$ \\
\refstepcounter{tbn}\label{CaseForAllHForAllKFexp11}\thetbn
&$1$ & $x^{-2}$ & $x^{-1}$ & $\partial_t, t\partial_t+x\partial_x$ \\
\refstepcounter{tbn}\label{CaseForAllHForAllKFexp111}\thetbn
&$0$ & $1$ & $1$ & $\partial_t,\, \partial_x, t\partial_t+x\partial_x$ \\
\hline
\end{tabular}
\end{center}
{\footnotesize Here $p\in\{0,1\}$ mod $G_1^{\sim}$ in case
\ref{CaseForAllHForAllKFexp1}, $p \neq -2$ case
\ref{CaseForAllHForAllKFexp1}$a'$.\\
Additional equivalence transformations:\\ \setcounter{casetran}{0}
\refstepcounter{casetran}\thecasetran.
\ref{CaseForAllHForAllKFexp1}a $(p=0, K=1)$ $\to$
\ref{CaseForAllHForAllKFexp1}a $(p=0, K=0)$:\quad
$\tilde t=t$, $\tilde x=x+t$, $\tilde u=u;$\\
\refstepcounter{casetran}\thecasetran.
\ref{CaseForAllHForAllKFexp1}a' $(p=-2, K=1)$ $\to$
\ref{CaseForAllHForAllKFexp1}a' $(p=-2, K=0)$:\quad $\tilde t=t$,
$\tilde x=xe^t$, $\tilde u=u.$ }

\bigskip

\setcounter{tbn}{0}

\begin{center}\footnotesize\renewcommand{\arraystretch}{1.15}
Table~\refstepcounter{table}\label{TableGrClasHexpg=1}\thetable. Case of $H(u)=e^{\mu u}$ (gauge $g=1$) \\[1ex]
\begin{tabular}{|l|c|c|c|c|l|}
\hline
N & $K(u)$ & $f(x)$ & $h(x)$ & \hfil Basis of A$^{\max}$ \\
\hline \refstepcounter{tbn}\label{CaseHexpKexpnuuFxlambda}\thetbn &
 $0$ & $\forall$ & $1$ &
$\partial_t,\, t\partial_t -2\partial_u $ \\
\refstepcounter{tbn}\label{CaseHexpKexpnuuF1}\thetbn & $e^{\nu u}$ &
$|x|^{p}$& $|x|^{q}$ & $\partial_t,\, [p(\mu-\nu)+(\mu-2\nu)-q\mu]t\partial_t+2(\mu-\nu)\partial_x+2(q+1)\partial_u$  \\
\thetbn $^{*}$ & $e^{\nu u}$ &
$e^{px}$& $\epsilon e^{qx}$ & $\partial_t,\, [p(\mu-\nu)-q\mu]t\partial_t+2(\mu-\nu)\partial_x+2q\partial_u$  \\
\refstepcounter{tbn}\label{CaseHexpK1Fxlambda}\thetbn & $ue^u$ &
$h^2e^{q\int h} $ & $(h^{-1})^{''}=-2ph$ & $\partial_t,\,
(2p+q)t\partial_t+2h^{-1}\partial_x-4p\partial_u$  \\
\refstepcounter{tbn}\label{CaseHexpK1F1}\thetbn &
 $e^{\nu u}$ &
$1$ & $1$ & $\partial_t, \partial_x, (\mu-2\nu)t\partial_t+2(\mu-\nu)x\partial_x+2\partial_u$  \\
\refstepcounter{tbn}\label{CaseHexpK0expuForAllF}\thetbn & $0$ &
$f^{1}(x)$ & $1 $ & $\partial_t, t\partial_t-2\partial_u, \alpha t\partial_t+2(\beta x^2+\gamma_1x+\gamma_0)\partial_x
+2\beta x\partial_u$  \\
\refstepcounter{tbn}\label{CaseHexpK0Ff1}\thetbn a  & $0$ & $1$ &
$1$ & $\partial_t,\partial_x, t\partial_t-2\partial_u,
t\partial_t+x\partial_x$  \\
\thetbn b  & $0$ & $x^{-3}$ & $1$ & $\partial_t,
x\partial_x-\partial_u, x^2\partial_x+x\partial_u,
t\partial_t-2\partial_u$  \\
\hline
\end{tabular}
\end{center}
{\footnotesize Here $(\mu,~\nu)\in\{(0,1),(1,\nu)\}, \nu\neq\mu$ in
cases \ref{CaseHexpKexpnuuF1}, \ref{CaseHexpKexpnuuF1}$^{*}$ and
\ref{CaseHexpK1F1}; $\mu = 1$ and $\nu \neq 1$ in the other cases;
$q\neq-1$ in case \ref{CaseHexpKexpnuuF1}$^{*}$ (otherwise it is
subcase of the case 1.2a); $\epsilon=\pm 1$ in case
\ref{CaseHexpKexpnuuF1}$^{*}$; $\alpha, \beta, \gamma_1,
\gamma_0=$const and
\[
f^{1}(x)=\exp\left\{\int\frac{-3\beta x-2\gamma_1+\alpha}{\beta
x^2+\gamma_1 x+\gamma_0}dx\right\}
\]
Additional equivalence transformations:\\ \setcounter{casetran}{0}
\refstepcounter{casetran}\thecasetran. \ref{CaseHexpK0Ff1}b  $\to$
\ref{CaseHexpK0Ff1}a :\quad $\tilde t=t\sign x$, $\tilde x=1/x$,
$\tilde u=u-\ln|x|.$ }

\bigskip

\setcounter{tbn}{0}

\begin{center}\footnotesize\renewcommand{\arraystretch}{1.15}
Table~\refstepcounter{table}\label{TableGrClasHpowerg=1}\thetable. Case of $H(u)=u^{\mu}$ (gauge $g=1$) \\[1ex]
\begin{tabular}{|l|c|c|c|c|l|}
\hline
N & $\mu$ & $K(u)$ & $f(x)$ & $h(x)$  & \hfil Basis of A$^{\max}$ \\
\hline \refstepcounter{tbn}\label{CaseHpowerKpowerFpower}\thetbn &
$\neq -4$ & $0$ &
$\forall$ & $1 $ & $\partial_t,\, \mu t\partial_t-2u\partial_u$  \\
\refstepcounter{tbn}\label{CaseHpowerKpowerF1}\thetbn & $\forall$ &
$|u|^{\nu }$ & $|x|^p$ & $|x|^q$ & $\partial_t,\,
[(p-q)\mu-p\nu+\mu-2\nu]t\partial_t$\\
&$~$& $~$ & $~$ & $~$ & $+2(\mu-\nu)x\partial_x+2(q+1)u\partial_u$\\
\thetbn $^*$ & $\forall$ & $|u|^{\nu }$ & $e^{px}$ & $\epsilon
e^{qx}$ & $\partial_t,\,
[(p-q)\mu-q\nu]\partial_t$\\
&$~$& $~$ & $~$ & $~$ & $+2(\mu-\nu)\partial_x+2qu\partial_u$\\
\refstepcounter{tbn}\label{CaseHpowerK1Fpower} \thetbn & $\forall$ &
$|u|^{\mu}\ln|u|$ & $h^2e^{q\int h dx}$ & $(h^{-1})''=-2ph$  &
$\partial_t,
(2p\mu+q) t\partial_t+2h^{-1}\partial_x-4pu\partial_u$  \\
\refstepcounter{tbn}\label{CaseHpowerK2Fpower} \thetbn & $0$ &
$\forall$ & $h^2$ & $(h^{-1})''=0$  & $\partial_t,
h^{-1}\partial_x$  \\
\refstepcounter{tbn}\label{CaseHpowerK0F1}\thetbn & $0$ & $u$ &
$h^2e^{\int h}$& $(h^{-1})''=-2ph$
& $\partial_t, t\partial_t+2h^{-1}\partial_x-4p\partial_u$  \\
\refstepcounter{tbn}\label{CaseHu-4K0HForallF} \thetbn & $\forall$ &
$|u|^{\nu}$ & $1$ & $1$
& $\partial_t,\,\partial_x,(\mu-2\nu) t\partial_t+2(\mu-\nu)x\partial_x+2u\partial_u$  \\
\refstepcounter{tbn}\label{CaseHu-4K0Ff3} \thetbn & $\neq-4$ & $0$ &
$f^3(x)$ & $1$ & $\partial_t,
\mu t\partial_t-2u\partial_u,$\\
&  & & & & $ \alpha t\partial_t+2[(\mu+1)\beta x^2+\gamma_1x+\gamma_0]\partial_x+2\beta xu\partial_u $  \\
\refstepcounter{tbn}\label{CaseHu-4K0F1} \thetbn a & $\neq-4,
-\frac{4}{3}$ & $0$ & $1$ & $1$ & $\partial_t,
\mu t\partial_t-2u\partial_u, \partial_x, t\partial_t+x\partial_x$  \\
\thetbn b & $\neq-4, -\frac{4}{3}, -1$ & $0$ &
$|x|^{-\frac{3\mu+4}{\mu+1}}$ & $1$ & $\partial_t, \mu
t\partial_t-2u\partial_u, (\mu+2)t\partial_t-2(\mu+1)x\partial_x,$\\
&  & & & & $ (\mu+1)x^2\partial_t+xu\partial_u$  \\
\thetbn c & $-1$ & $0$ & $e^x$ & $1$ & $\partial_t,
t\partial_t+2u\partial_u, \partial_x-u\partial_u,$\\
&  & & & & $ t\partial_t+x\partial_x-xu\partial_u$  \\
\refstepcounter{tbn}\label{CaseH1-43K1} \thetbn & $-4$ & $0$
&$f^3(x)|_{\mu=-4}$ & $1$ & $\partial_t, 2t\partial_t+u\partial_u, t^2\partial_t+tu\partial_u$  \\
&  & & & & $\alpha t\partial_t+2(-3\beta x^2+\gamma_1x+\gamma_0)\partial_x+2\beta xu\partial_u$  \\
\refstepcounter{tbn}\label{CaseHu-43K0F3} \thetbn & $-4$ & $0$ & $1$
& $1$ & $\partial_t,
2t\partial_t+u\partial_u, \partial_x, t^2\partial_t+tu\partial_u,$\\
&  & & & & $ 2x\partial_x-u\partial_u$  \\
\refstepcounter{tbn}\label{CaseHu-43K0F2} \thetbn & $-\frac{4}{3}$ &
$0$ & $1$ & $1$ & $\partial_t,
2t\partial_t+3u\partial_u, \partial_x, t\partial_t+x\partial_x,$\\
&  & & & & $ x^2\partial_x-3xu\partial_u$  \\
\hline
\end{tabular}
\end{center}
{\footnotesize Here $\nu\neq \mu$; $\epsilon=\pm 1$ in case
\ref{CaseHpowerKpowerF1}. (otherwise it is subcase of the case
1.2a); $\alpha, \beta, \gamma_1, \gamma_0=$const, and
\[
f^{3}(x)=\exp\left\{\int\frac{-(3\mu+4)\beta
x-2\gamma_1+\alpha}{(\mu+1)\beta x^2+\gamma_1 x+\gamma_0}dx\right\}
\]
Additional equivalence transformations:\\ \setcounter{casetran}{0}
\refstepcounter{casetran}\thecasetran. \ref{CaseHu-4K0F1}b $\to$
\ref{CaseHu-4K0F1}a :\quad
$\tilde t=t$, $\tilde x=-1/x$, $\tilde u=|x|^{-\frac{1}{1+\mu}}u;$\\
\refstepcounter{casetran}\thecasetran. \ref{CaseHu-4K0F1}c
 $\to$ \ref{CaseHu-4K0F1}a $(\mu=-1)$:\quad $\tilde t=t$, $\tilde
x=x$, $\tilde u=e^xu.$
 }

\begin{remark}
Tables \ref{TableGrClasForAllHg=1}-\ref{TableGrClasHpowerg=1} are the
results of classification for class \eqref{eqVarCoefTelegraphEq}
with the gauge $g=1$ with respect to the extended equivalence
transformations group $\hat G_1^{\sim}$. The classification for
class \eqref{eqVarCoefTelegraphEq} with the gauge $g=1$ with respect
to the usual equivalence transformations group $G_1^{\sim}$ are very
complicated. In some cases, the determining equations can not be
solved explicitly. Therefore, We list these results as an Appendix.
\end{remark}

\begin{remark}
The proof of theorem
\ref{TheorOnGrClasResg=1} follows directly from the analysis of the
Section \ref{SectionOnGrClasProof}.
\end{remark}

\subsection{Classification under the gauge $g=h$}
\label{SectionOnGrClasResg=h}

\begin{theorem}\label{TheorOnKernel}
The Lie algebra of the kernel of principal groups
of~\eqref{eqVarCoefTelegraphEq} with the gauge $g=h$  is
$A^{\ker}=\langle\partial_t\rangle$.
\end{theorem}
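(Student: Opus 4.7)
The plan is to specialize the general determining system \eqref{sysDetEqTelEq_1}--\eqref{sysDetEqTelEq_4} already derived for class \eqref{eqVarCoefTelegraphEq} to the gauge $g=h$, and then to demand that the resulting equations hold for \emph{every} admissible tuple of arbitrary elements $(f,h,H,K)$. The kernel algebra $A^{\ker}$ consists of exactly those operators $Q=\tau\partial_t+\xi\partial_x+\eta\partial_u$ whose coefficients satisfy the determining system uniformly in the arbitrary elements, so the task reduces to a careful splitting argument.

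I would start from the structural form \eqref{CoefsSymIntegr}, which already gives $\tau=\tau(t)$, $\xi=\xi(x)$, and $\eta=(\tfrac12\tau_t+\alpha(x))u+\eta^0(t,x)$. Substituting $g=h$ into \eqref{sysDetEqTelEq_2} produces
\[
2(\xi_x-\tau_t)+\Bigl(\tfrac{f_x}{f}-\tfrac{h_x}{h}\Bigr)\xi=\frac{H_u}{H}\,\eta.
\]
Since $H$ is an arbitrary function of $u$ alone, while the left-hand side is $u$-independent, splitting with respect to $H_u/H$ forces $\eta=0$; in turn this yields $\eta^0=0$ and $\alpha(x)=-\tfrac12\tau_t$. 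Then, splitting the residual relation $2(\xi_x-\tau_t)+(f_x/f-h_x/h)\xi=0$ with respect to the two now-independent arbitrary functions $f_x/f$ and $h_x/h$ of $x$ delivers $\xi=0$ and hence $\tau_t=0$; combined with the previous step, $\alpha\equiv 0$.

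With $\tau=\mathrm{const}$, $\xi=0$, $\eta=0$ in hand, the remaining equations \eqref{sysDetEqTelEq_3}--\eqref{sysDetEqTelEq_4} (specialized to $g=h$) are automatically satisfied, since every surviving term contains at least one factor from $\{\xi,\eta,\tau_t\}$. This establishes $A^{\ker}=\langle\partial_t\rangle$.

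The main obstacle is making the splittings genuinely rigorous rather than heuristic: one must verify that, after imposing the gauge $g=h$, the functions $H_u/H$, $f_x/f$ and $h_x/h$ can still be varied independently of one another and of the polynomial/$(t,x)$-structure of $(\tau,\xi,\eta)$. Since $g=h$ is a single relation among five arbitrary elements and does not couple $H$ or $K$ to $f$ or $h$, the functional freedom remains ample and each splitting legitimately peels off the relevant coefficient. This is the same style of argument used to establish the analogous kernel result for $g=1$ in Theorem~\ref{TheorOnKernelg=1}, and no new technical difficulty appears under the gauge $g=h$.
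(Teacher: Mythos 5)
Your proposal is correct and follows essentially the same route as the paper, which disposes of both kernel theorems in a single sentence by splitting the classifying equations \eqref{sysDetEqTelEq_2}--\eqref{sysDetEqTelEq_4} with respect to the arbitrary elements and their non-vanishing derivatives to obtain $\tau_t=0$, $\xi=0$, $\eta=0$. Your write-up merely makes explicit the order of the splittings (first in $H_u/H$ to kill $\eta$, then in $f_x/f$ and $h_x/h$ to kill $\xi$ and $\tau_t$) and the verification that the residual equations hold identically, which is a faithful elaboration of the paper's argument.
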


\begin{theorem}\label{TheorOnGrClasRes}
A complete set of inequivalent
equations~\eqref{eqVarCoefTelegraphEq} with the gauge $g=h$ with
respect to the transformations from $\hat G^{\sim}$ with $A^{\max}
\neq A^{\ker}$ is exhausted by cases given in
tables~\ref{TableGrClasForAllH}--\ref{TableGrClasHpower}.
\end{theorem}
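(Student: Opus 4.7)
The plan is to specialize the determining system~\eqref{sysDetEqTelEq_1}--\eqref{sysDetEqTelEq_4} of Section~\ref{SectionOnGrClasRes} to the subclass $g=h$ and then to classify the resulting over-determined system modulo the extended equivalence group $\hat G_h^{\sim}$ from Theorem~\ref{TheorOnEquivGroupExtendg=h}. Setting $g=h$ in~\eqref{sysDetEqTelEq_2}--\eqref{sysDetEqTelEq_4} cancels the logarithmic-derivative term $g_x/g$ in~\eqref{sysDetEqTelEq_2}, which becomes $2(\xi_x-\tau_t)+\tfrac{f_x}{f}\xi=\tfrac{H_u}{H}\eta$, and it simplifies two of the coefficient relations in~\eqref{sysDetEqTelEq_4}. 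The kernel statement $A^{\ker}=\langle\partial_t\rangle$ (Theorem~\ref{TheorOnKernel}) then follows immediately by splitting the reduced system with respect to the independent arbitrary elements $f,h,H,K$ and their derivatives.

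For the classification proper I would start from the general integration~\eqref{CoefsSymIntegr}, so that only $\tau(t),\xi(x),\alpha(x),\eta^0(t,x)$ remain to be determined. Differentiating the reduced form of~\eqref{sysDetEqTelEq_2} with respect to $u$ eliminates the $x$- and $t$-dependent left-hand side and yields a separable condition on $H$: either $(H_u/H)_u\equiv 0$, forcing $H$ to be of exponential type $e^{\mu u}$, or the $u$-linear part of $\eta$ is trivial and then the analysis of~\eqref{sysDetEqTelEq_4} pins $H$ down to a power $u^\mu$; the degenerate branch $H=\const$ yields the case of arbitrary $H$ in Table~\ref{TableGrClasForAllH}. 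This trichotomy cleanly indexes the three tables~\ref{TableGrClasForAllH}--\ref{TableGrClasHpower}.

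The main technical work, and the principal obstacle, is the further classification inside each of these branches, which I would carry out by the furcate-split method of Nikitin and Popovych. With $H$ fixed, equations~\eqref{sysDetEqTelEq_3}--\eqref{sysDetEqTelEq_4} become linear conditions on the operator coefficients whose consistency depends on the ranks of the matrices built from $f_x/f$, $h_x/h$ and from the derivatives of $K(u)$. Successive specialization of $K$ to constant, power, exponential or logarithmic-times-power forms forces $f$ and $h$ to satisfy linear ODEs with constant coefficients, whose solutions are the power, exponential and particular algebraic expressions appearing in the tables. At each bifurcation I would apply $\hat G_h^{\sim}$ to normalize the free parameters $\mu,\nu,p,q$ and to eliminate redundancies, and I would record the additional point equivalences that link pairs of cases. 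The delicate points are the exceptional exponents (analogous to $\mu=-4$ and $\mu=-\tfrac43$ in the $g=1$ tables), where extra symmetries appear and must be split off as separate sub-cases, together with the bookkeeping needed to verify that no case is listed twice under the discrete part of $\hat G_h^{\sim}$.
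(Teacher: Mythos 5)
Your overall strategy --- reduce to the determining system \eqref{sysDetEqTelEq_1}--\eqref{sysDetEqTelEq_4}, branch on the functional form of $H$, and then run a furcate split on the classifying conditions for $K$, $f$, $h$ modulo $\hat G_h^{\sim}$ --- matches the paper's. But your branching on $H$ contains a genuine error that would leave the classification incomplete. The paper organizes the proof by the number $k\in\{0,1,2\}$ of independent equations of the form $(au+b)H_u=cH$ that \eqref{sysDetEqTelEq_2} imposes on $H$; this $k$ is a $\hat G^{\sim}$-invariant. The case ``$H$ arbitrary'' of Table~\ref{TableGrClasForAllH} is $k=0$: \emph{no} condition is imposed on $H$, which forces $\eta=0$ for every operator of $A^{\max}$ and leads to the separate system \eqref{sysDetEqForAllH_1}--\eqref{sysDetEqForAllH_2}. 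The case $H=\const$ is the opposite extreme $k=2$ (two independent conditions force $H=1$ mod $G^{\sim}$), and --- since the equation must remain nonlinear --- it requires $K_u\neq0$ and produces its own nontrivial extensions, recorded as the $\mu=0$ rows of Table~\ref{TableGrClasHpower} (the cases with $K=u$ and with arbitrary $K$, $(h^{-1})''=0$). Your proposal identifies ``the degenerate branch $H=\const$'' with the arbitrary-$H$ table; these are opposite ends of the trichotomy, and conflating them means the entire $k=2$ analysis --- classifying $K$ through the condition $(au+b)K_u=cK+d$ coming from \eqref{sysDetEqHConst_3}, which yields $K\in\{u^{\nu},\,\ln u,\,e^u,\,u\}$ mod $G^{\sim}$ --- is absent from your argument, so several rows of the tables would never be produced.

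A smaller, non-fatal difference: you propose to stay in the gauge $g=h$ throughout, whereas the paper deliberately switches to the gauge $g=1$ for the sub-branch $K\in\langle 1,H\rangle$ (where $K=0,1$ mod $\hat G_1^{\sim}$), because the classifying relation $(\alpha^2x^2+\alpha^1x+\alpha^0)f_x/f=\beta_1x+\beta_0$ for $f$ then closes up in a tractable form; only the sub-branch $K\notin\langle 1,H\rangle$ is worked directly in the gauge $g=h$ via $\varphi=f/h\in\{e^{qx},|x|^r,1\}$. Staying in one gauge is not wrong in principle, but the explicit forms $f^1$, $f^3$ and the exceptional exponents $\mu=-4$, $\mu=-\tfrac43$ are obtained in the paper precisely through that gauge change, so you would have to redo those integrations and translate the answers back rather than read them off.
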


\setcounter{tbn}{0}

\begin{center}\footnotesize\renewcommand{\arraystretch}{1.15}
Table~\refstepcounter{table}\label{TableGrClasForAllH}\thetable. Case of $\forall H(u)$ (gauge $g=h$) \\[1ex]
\begin{tabular}{|l|c|c|c|l|}
\hline
N & $K(u)$ & $f(x)$ & $h(x)$ & \hfil Basis of A$^{\max}$ \\
\hline
\refstepcounter{tbn}\label{CaseForAllHForAllKForallF}\thetbn & $\forall$ & $\forall$ & $\forall$ & $\p_t$ \\
\refstepcounter{tbn}\label{CaseForAllHForAllKFexp1}\thetbn a
&$\forall$ & $e^{px}$ & $1$ & $\partial_t,\, pt\partial_t+2\partial_x$ \\
\refstepcounter{tbn}\label{CaseForAllHForAllKFexp11}\thetbn
&$1$ & $x^{-2}$ & $x^{-1}$ & $\partial_t, t\partial_t+x\partial_x$ \\
\refstepcounter{tbn}\label{CaseForAllHForAllKFexp11}\thetbn
&$0$ & $1$ & $1$ & $\partial_t,\, \partial_x, t\partial_t+x\partial_x$ \\
\hline
\end{tabular}
\end{center}
{\footnotesize Here $p\in\{0,1\}$ mod $G_h^{\sim}$ in case
\ref{CaseForAllHForAllKFexp1}. }

\bigskip

\setcounter{tbn}{0}

\begin{center}\footnotesize\renewcommand{\arraystretch}{1.15}
Table~\refstepcounter{table}\label{TableGrClasHexp}\thetable. Case of $H(u)=e^{\mu u}$ (gauge $g=h$) \\[1ex]
\begin{tabular}{|l|c|c|c|c|l|}
\hline
N & $K(u)$ & $f(x)$ & $h(x)$ & \hfil Basis of A$^{\max}$ \\
\hline \refstepcounter{tbn}\label{CaseHexpKexpnuuFxlambdag=h}\thetbn
&
 $0$ & $\forall$ & $1$ &
$\partial_t,\, t\partial_t -2\partial_u $ \\
\refstepcounter{tbn}\label{CaseHexpKexpnuuF1g=h}\thetbn & $e^{\nu
u}$ &
$|x|^{p}$& $|x|^{q}$ & $\partial_t,\, [(p-q+1)\mu-(p-q+2)\nu]t\partial_t+2(\mu-\nu)x\partial_x+2\partial_u$  \\
\refstepcounter{tbn}\label{CaseHexpK1Fxlambdag=h}\thetbn & $ue^u$ &
$e^{px^2+qx} $ & $e^{px^2}$ & $\partial_t,\,
(2p+q)t\partial_t+2\partial_x-4p\partial_u$  \\
\refstepcounter{tbn}\label{CaseHexpK1F1g=h}\thetbn &
 $e^{\nu u}$ &
$1$ & $1$ & $\partial_t, \partial_x, (\mu-2\nu)t\partial_t+2(\mu-\nu)x\partial_x+2\partial_u$  \\
\refstepcounter{tbn}\label{CaseHexpK0expuForAllFg=h}\thetbn & $0$ &
$f^{1}(x)$ & $1 $ & $\partial_t, t\partial_t-2\partial_u, \alpha
t\partial_t+2(\beta x^2+\gamma_1x+\gamma_0)\partial_x
+2\beta x\partial_u$  \\
\refstepcounter{tbn}\label{CaseHexpK0Ff1g=h}\thetbn a  & $0$ & $1$ &
$1$ & $\partial_t,\partial_x, t\partial_t-2\partial_u,
t\partial_t+x\partial_x$  \\
\thetbn b  & $0$ & $x^{-3}$ & $1$ & $\partial_t,
x\partial_x-\partial_u, x^2\partial_x+x\partial_u,
t\partial_t-2\partial_u$  \\
\hline
\end{tabular}
\end{center}
{\footnotesize Here $(\mu,~\nu)\in\{(0,1),(1,\nu)\}, \nu\neq\mu$ in
cases \ref{CaseHexpKexpnuuF1} and \ref{CaseHexpK1F1}; $\mu = 1$ and
$\nu \neq 1$ in the other cases.
}

\bigskip

\setcounter{tbn}{0}

\begin{center}\footnotesize\renewcommand{\arraystretch}{1.15}
Table~\refstepcounter{table}\label{TableGrClasHpower}\thetable. Case of $H(u)=u^{\mu}$ (gauge $g=h$) \\[1ex]
\begin{tabular}{|l|c|c|c|c|l|}
\hline
N & $\mu$ & $K(u)$ & $f(x)$ & $h(x)$  & \hfil Basis of A$^{\max}$ \\
\hline \refstepcounter{tbn}\label{CaseHpowerKpowerFpowerg=h}\thetbn
& $\neq -4$ & $0$ &
$\forall$ & $1 $ & $\partial_t,\, \mu t\partial_t-2u\partial_u$  \\
\refstepcounter{tbn}\label{CaseHpowerKpowerF1g=h}\thetbn & $\forall$
& $|u|^{\nu }$ & $|x|^p$ & $|x|^q$ & $\partial_t,\,
[(p-q+1)\mu-(p-q+2)\nu]t\partial_t$\\
&$~$& $~$ & $~$ & $~$ & $+2(\mu-\nu)x\partial_x+2u\partial_u$\\
\refstepcounter{tbn}\label{CaseHpowerK1Fpowerg=h} \thetbn &
$\forall$ & $|u|^{\mu}\ln|u|$ & $e^{px^2+qx}$ & $e^{px^2}$  &
$\partial_t,
(2p\mu+q) t\partial_t+2\partial_x-4pu\partial_u$  \\
\refstepcounter{tbn}\label{CaseHpowerK0F1}\thetbn & $0$ & $u$ &
$h^2e^{\int h}$& $(h^{-1})''=-2ph$
& $\partial_t, t\partial_t+2h^{-1}\partial_x-4p\partial_u$  \\
\refstepcounter{tbn}\label{CaseHpowerK2Fpower} \thetbn & $0$ &
$\forall$ & $h^2$ & $(h^{-1})''=0$  & $\partial_t,
h^{-1}\partial_x$  \\
\refstepcounter{tbn}\label{CaseHu-4K0HForallFg=h} \thetbn &
$\forall$ & $|u|^{\nu}$ & $1$ & $1$
& $\partial_t,\,\partial_x,(\mu-2\nu) t\partial_t+2(\mu-\nu)x\partial_x+2u\partial_u$  \\
\refstepcounter{tbn}\label{CaseHu-4K0Ff3g=h} \thetbn & $\neq-4$ &
$0$ & $f^3(x)$ & $1$ & $\partial_t,
\mu t\partial_t-2u\partial_u,$\\
&  & & & & $ \alpha t\partial_t+2[(\mu+1)\beta x^2+\gamma_1x+\gamma_0]\partial_x+2\beta xu\partial_u $  \\
\refstepcounter{tbn}\label{CaseHu-4K0F1g=h} \thetbn a & $\neq-4,
-\frac{4}{3}$ & $0$ & $1$ & $1$ & $\partial_t,
\mu t\partial_t-2u\partial_u, \partial_x, t\partial_t+x\partial_x$  \\
\thetbn b & $\neq-4, -\frac{4}{3}, -1$ & $0$ &
$|x|^{-\frac{3\mu+4}{\mu+1}}$ & $1$ & $\partial_t, \mu
t\partial_t-2u\partial_u, (\mu+2)t\partial_t-2(\mu+1)x\partial_x,$\\
&  & & & & $ (\mu+1)x^2\partial_t+xu\partial_u$  \\
\thetbn c & $-1$ & $0$ & $e^x$ & $1$ & $\partial_t,
t\partial_t+2u\partial_u, \partial_x-u\partial_u,$\\
&  & & & & $ t\partial_t+x\partial_x-xu\partial_u$  \\
\refstepcounter{tbn}\label{CaseH1-43K1g=h} \thetbn & $-4$ & $0$
&$f^3(x)|_{\mu=-4}$ & $1$ & $\partial_t, 2t\partial_t+u\partial_u, t^2\partial_t+tu\partial_u$  \\
&  & & & & $\alpha t\partial_t+2(-3\beta x^2+\gamma_1x+\gamma_0)\partial_x+2\beta xu\partial_u$  \\
\refstepcounter{tbn}\label{CaseHu-43K0F3g=h} \thetbn & $-4$ & $0$ &
$1$ & $1$ & $\partial_t,
2t\partial_t+u\partial_u, \partial_x, t^2\partial_t+tu\partial_u,$\\
&  & & & & $ 2x\partial_x-u\partial_u$  \\
\refstepcounter{tbn}\label{CaseHu-43K0F2g=h} \thetbn &
$-\frac{4}{3}$ & $0$ & $1$ & $1$ & $\partial_t,
2t\partial_t+3u\partial_u, \partial_x, t\partial_t+x\partial_x,$\\
&  & & & & $ x^2\partial_x-3xu\partial_u$  \\
\hline
\end{tabular}
\end{center}
{\footnotesize Here $\nu\neq \mu$.
 }
\\

In tables~\ref{TableGrClasForAllHg=1}--\ref{TableGrClasHpowerg=1}
and \ref{TableGrClasForAllH}--\ref{TableGrClasHpower} we list all
possible $\hat G^{\sim}$-inequivalent sets of functions $f(x)$,
$h(x)$, $H(u)$, $K(u)$ and corresponding invariance algebras under
the gauges g = 1 and g = h respectively. We give the same numbers
for the corresponding ($\hat G^{\sim}$-equivalent) cases in the
gauges $g=1$ and $g=h$. The asterisked cases from tables
\ref{TableGrClasHexpg=1} and \ref{TableGrClasHpowerg=1} are
equivalent to the cases from tables \ref{TableGrClasHexp} and
\ref{TableGrClasHpower} with the same numbers, where the
parameter-function $h$ takes the value $h = x$. The similar
non-asterisked cases correspond to the same cases from tables
\ref{TableGrClasHexp} and \ref{TableGrClasHpower}, where
$p'=\frac{p-q}{q+1}, q'=\frac{q}{q+1}$ or $p'=-\frac{1}{p+2}$ if
$q=p+1$.

In what follows, for convenience we use double numeration $T.N$ of
classification cases, where $T$ denotes the number of the table and
$N$ the number of the case in table $T$. The notation `equation
$T.N$' is used for the equation of the
form~\eqref{eqVarCoefTelegraphEq} where the parameter functions take
the values from the corresponding case.

The operators from
tables~\ref{TableGrClasForAllHg=1}--\ref{TableGrClasHpowerg=1} or
\ref{TableGrClasForAllH}--\ref{TableGrClasHpower} form bases of the
maximal invariance algebras if the corresponding sets of the
functions $f$, $h$, $H$, $K$ are $\hat G^{\sim}$-inequivalent to
ones with most extensive invariance algebras. For
example, in case
\ref{TableGrClasHpowerg=1}.\ref{CaseHpowerKpowerFpower} the adduced
operators have the above property iff $f \neq f^3$.

\begin{remark}
Case \ref{TableGrClasForAllHg=1}.\ref{CaseForAllHForAllKFexp1}a is
equivalent to case
\ref{TableGrClasForAllHg=1}.\ref{CaseForAllHForAllKFexp1}a' with
respect to transformation $ \tilde t=t, \tilde x=\ln |x|, \tilde
u=u, \tilde H=H, \tilde K=K-H, \tilde p=p+2 $ from $\hat G^{\sim}$.
We adduce case
\ref{TableGrClasForAllHg=1}.\ref{CaseForAllHForAllKFexp1}a' here for
the convenience of presentation of results only.
\end{remark}

\subsection{Proof of classification results}\label{SectionOnGrClasProof}

Now, let us use the method of  furcate split
~\cite{Nikitin&Popovych2001,Popovych&Ivanova2004NVCDCEs,Ivanova&Popovych&Sophocleous2007}
to prove the main classification theorems~\ref{TheorOnGrClasResg=1}
and \ref{TheorOnGrClasRes}. It should be noted that it seems
impossible to formulate complete results of group classification of
class \eqref{eqVarCoefTelegraphEq} with respect to usual equivalence
group $G^{\sim}$ in a closed form. This can be seen from the
classifications for the gauge $g=1$ adduced in the Appendix, while
it is quite easy to solve the problem of group classification with
respect to the extended equivalence group $\hat G^{\sim}$.

The basic idea of the method is based on the fact that the
substitution of the coefficients of any operator from the extension
of $A^{\ker}$ into the classifying equations results in nonidentity
equations for arbitrary elements
(see~\cite{Nikitin&Popovych2001,Popovych&Ivanova2004NVCDCEs,Ivanova&Popovych&Sophocleous2007}
for more details about the method). In the problem under
consideration, the procedure of looking for the possible cases
mostly depends on equation~\eqref{sysDetEqTelEq_2}. For any operator
$Q\in A^{\max}$ equation~\eqref{sysDetEqTelEq_2} gives some
equations on $H$ of the general form
\[
(au+b)H_u=cH,
\]
where $a$, $b$, $c$ are constant. In general, for all operators from
$A^{\max}$ the number $k$ of such independent equations is not
greater than $2$; otherwise they form an incompatible system on $H$.
$k$ is an invariant value for the transformations from $\hat
G^{\sim}$. Therefore, there exist three inequivalent cases for the
value of $k$:
\begin{enumerate}
    \item $k = 0:\quad H(u)$ is arbitrary,
    \item $k = 1:\quad H(u)=e^{\mu u}$ or $H(u)=u^{\mu}$ $(\mu\neq 0)$ mod $\hat G^{\sim}$,
    \item $k = 2:\quad H(u)=1$ mod $G^{\sim}$.
\end{enumerate}
Furthermore, in order to provide the final presentation of
classification results in a simple way, the choice of a gauge for
the arbitrary elements is very important for solving the determining
equations. It is more convenient to constrain the parameter-function
$g$ instead of $f$ in class \eqref{eqVarCoefTelegraphEq}. The next
problem is the choice between gauges of $g$. The case $K \bar \in
\langle 1, H \rangle$ and $k \geq 1$ is easier to be investigated in
the gauge $g = h$. In the other cases we obtain results in a simpler
explicit form and in an easier way using the gauge $g = 1$. Let us
consider these possibilities in more detail, omitting cumbersome
calculations.\\
\\
{\bf Case 1: $k=0$} (the gauges $g=1$ and $g=h$,
tables~\ref{TableGrClasForAllHg=1} and \ref{TableGrClasForAllH}). We
first consider the case gauge $g=1$. Since $H(u)$ arbitrary, this
means that the coefficients of any operator from $A^{\max}$ must
satisfy $\eta=0 $ and
\begin{gather}
2(\xi_x-\tau_t)+\frac{f_x}{f}\xi=0,\label{sysDetEqForAllH_1}\\
-K(\xi h)_x+H\xi_{xx}=0.\label{sysDetEqForAllH_2}
\end{gather}

(i) Let us suppose that $K \not\in\langle1, H\rangle $. It follows
from equation~\eqref{sysDetEqForAllH_2} that $\xi_x=0$. Therefore,
equation~\eqref{sysDetEqForAllH_1} must be in the form $f_x=\mu f $
without fail. Solving this equation yields cases
\ref{CaseForAllHForAllKFexp1}a.

(ii) Now let $K \in\langle1, H\rangle $, i.e. $K=\delta$ mod $\hat
G_1^{\sim}$ where $\delta\in \{0, 1\}$. Then
equation~\eqref{sysDetEqForAllH_2} can be decomposed into the
following ones
\begin{equation}\label{sysDetEqForAllH_3}
\xi_{xx}=0, \quad\quad \delta(\xi h)_x=0.
\end{equation}
Integrating of the latter equations up to $\hat G_1^{\sim}$ results
to cases
\ref{CaseForAllHForAllKFexp1}a'--\ref{CaseForAllHForAllKFexp111} of
table \ref{TableGrClasForAllHg=1}.

The classifications for the gauge $g=h$ can be derived in a similar
way.\\
\\
{\bf Case 2: $k=1$} (the gauges $g=1$ and $g=h$, tables~\ref
{TableGrClasHexpg=1}, \ref{TableGrClasHpowerg=1} and \ref
{TableGrClasHexp}, \ref{TableGrClasHpower} ). Here $H\in \{e^{\mu
u}, u^{\mu}, \mu\neq 0\}$ mod $\hat G^{\sim}$ and there exists $Q\in
A^{\max}$ with $\eta\neq 0$, otherwise there is no additional
extension of the maximal Lie invariance algebra in comparison with
the case $k = 0$. Below we consider the gauge $g=h$ in details, the
gauge $g=1$ can be proved in a similar way. If $H= e^{\mu u}$
we assume $\mu=1$.\\

{\it Case 2.1:} Let us investigate the first possibility $H = e^{\mu
u}$ (table~\ref{TableGrClasHexp}). Equations~\eqref{sysDetEqTelEq_2}
and~\eqref{CoefsSymIntegr} imply $\eta_u=0$, i.e. $\eta=\eta^1(t,
x)$ and $\tau_{tt}=0$. Therefore, equation \eqref{sysDetEqTelEq_4}
looks like $K_u=\nu K+\lambda H$ with respect to $K$, where $\nu,
b=\const$, otherwise $\eta \equiv 0$.

Consider first the case $K \bar \in \langle 1, H \rangle$. Under the
above suppositions, equations
\eqref{sysDetEqTelEq_2}--\eqref{sysDetEqTelEq_4} can be rewritten as
\begin{gather}
\frac{\varphi_x}{\varphi}\xi=(2\nu-\mu)\eta^1+2\tau_t,\label{sysDetEqHexp_1}\\
\eta^1_t=\eta^1_x=0,\quad \xi_{xx}=\tau_{tt}=0,\label{sysDetEqHexp_2}\\
\xi_x=(\mu-\nu)\eta^1, \quad
(\xi\frac{h_x}{h})_x=-\lambda\eta^1.\label{sysDetEqHexp_3}
\end{gather}
Here and below $\varphi=f/h$. From equation \eqref{sysDetEqHexp_1}
we can get $\varphi \in \{e^{qx}, |x|^r(r\neq 0), 1\}$ mod $\hat
G_h^{\sim}$.

For $\varphi=e^{qx}$ it follows from the determining equations
\eqref{sysDetEqHexp_1}--\eqref{sysDetEqHexp_3} that $\xi_x=0,
\nu=\mu, \lambda\neq 0, \frac{h_x}{h}=2\alpha$. Thus,
$h=h_0e^{\alpha x^2+h_1x}=e^{\alpha x^2}$ mod $\hat G_h^{\sim}$,
$\alpha \neq 0, f=h\varphi=e^{\alpha x^2+qx}, K=\lambda ue^u$ mod
$\hat G_h^{\sim}$ that falls precisely into case
\ref{TableGrClasHexp}.\ref{CaseHexpK1Fxlambdag=h}.

If $\varphi=|x|^r, r\neq0$, then $r\xi/x =
(2\nu-\mu)\eta^1+2\tau_t$. Therefore, $\xi = (\mu-\nu)\eta^1 x,
(\mu-\nu)(\frac{xh_x}{h})_x=-\lambda$. Since $\mu\neq \nu$
(otherwise, $K\in \langle 1, H\rangle$) we have $\lambda = 0$ mod
$\hat G_h^{\sim}$. Therefore, $h = |x|^q$ mod $\hat G^{\sim}$. Then
$f = |x|^p, p \neq q$, and we obtain case
\ref{TableGrClasHexp}.\ref{CaseHexpKexpnuuF1g=h}.

Value $\varphi=1$ results in $2\tau_t=(\mu-2\nu)\eta^1,
\xi=(\mu-\nu)\eta^1x+\xi_0$. If $\nu=\mu$ then $\lambda \neq 0$
(otherwise, $K\in \langle 1, H\rangle$), $\lambda= 1$ mod $\hat
G_h^{\sim}$ , $(\frac{h_x}{h})_x = 2\alpha(\eta^1=-2\alpha)$.
Therefore, $h=h_0e^{\alpha x^2+h_1x}=e^{\alpha x^2}$ mod $\hat
G_h^{\sim}$ , $K =ue^u$ that follows to case
\ref{TableGrClasHexp}.\ref{CaseHexpK1Fxlambdag=h}. If $\nu\neq\mu$,
then $\lambda = 0$ mod $\hat G_h^{\sim}$ . Therefore, $h \in
\{|x|^q, 1, e^{px}\}$ $\hat G_h^{\sim}$ that yields subcases of
\ref{TableGrClasHexp}.\ref{CaseHexpK1Fxlambdag=h},
\ref{TableGrClasHexp}.\ref{CaseHexpKexpnuuF1g=h} and case
\ref{TableGrClasHexp}.\ref{CaseHexpK1F1g=h} correspondingly.

Now, we consider the case $K \in \langle 1, H \rangle$,
$H\neq$const. In contrast to the previous case, it is more
convenient to consider this case using the gauge $g = 1$. In such
case $K = 0, 1 $ mod  $\hat G_1^{\sim}$ . Application of the above
suppositions reduces the determining equations to the system
\begin{gather*}
2\xi_x+\frac{f_x}{f}\xi=\mu\eta^1+2\tau_t,\quad \eta^1_{xx}=0,\quad K\eta^1_x=\varphi \eta^1_{tt},\\
(\xi_x+\frac{\varphi_x}{\varphi}\xi-\tau_t)K=0, \quad
\xi_{xx}=2\mu\eta^1_x.
\end{gather*}
Note that $\eta^1=\frac{1}{2}\tau_t+\alpha(x)$, thus we have
$\eta^1=\frac{1}{2}\tau_t+\beta x+\alpha_0$ and $\xi=\mu \beta
x^2+\gamma_1x+\gamma_0$. Substituting these values into the first
determining equation we obtain
\[
(\mu \beta x^2+\gamma_1x+\gamma_0)\frac{f_x}{f}=-3\mu\beta
x+\frac{1}{2}(\mu+4)\tau_t+\mu \alpha_0-2\gamma_1.
\]
This equation gives $l$ linearly independent equations for $f$ of
form $(\alpha^2
x^2+\alpha^1x+\alpha^0)\frac{f_x}{f}=\beta_1x+\beta_0$. If $l= 0$,
then $\xi= 0, \beta=0, \frac{1}{2}(\mu+4) \tau_t =-\mu\alpha_0$.
Considering case $l = 1$, we get $(\alpha^2, \beta_1) \neq (0, 0)$,
$(\alpha^0, \beta_0) \neq (0, 0)$, otherwise $l > 1$. At last, if $l
\geq 2$, then $f\in \{1, e^{px}, |x|^p, p \neq 0\}$ mod $\hat
G_1^{\sim}$.

Direct substitution of the above values to the determining equation
for $K = 0$ and obvious integration leads to the cases
\ref{TableGrClasHexp}.\ref{CaseHexpKexpnuuFxlambdag=h} (case $l =
0$), \ref{TableGrClasHexp}.\ref{CaseHexpK0expuForAllFg=h} (case $l =
1$) and \ref{TableGrClasHexp}.\ref{CaseHexpK0Ff1g=h}a,
\ref{TableGrClasHexp}.\ref{CaseHexpK0Ff1g=h}b  (case $l = 2$).

Classification in case $K = 1$ is more cumbersome, and corresponding
results can be reduced to cases
\ref{TableGrClasHexp}.\ref{CaseHexpK1Fxlambdag=h},
\ref{TableGrClasHexp}.\ref{CaseHexpKexpnuuF1g=h} and
\ref{TableGrClasHexp}.\ref{CaseHexpK1F1g=h}.\\

{\it Case 2.2: }  Consider the case $H =u^{\mu}$
(table~\ref{TableGrClasHpower}). Equations~\eqref{sysDetEqTelEq_2}
and~\eqref{CoefsSymIntegr} imply
$\eta=(\frac{1}{2}\tau_{t}+\alpha(x))u=\eta^1(t, x)u$. Therefore,
equation \eqref{sysDetEqTelEq_4} with respect to $K$ looks like
$uK_u=\nu K+\lambda H$, where $\nu, b=\const$, otherwise $\eta
\equiv 0$.

Let $K \bar \in \langle 1, H \rangle$. Using the above suppositions,
we can rewrite equations
\eqref{sysDetEqTelEq_2}--\eqref{sysDetEqTelEq_4} as
\begin{equation} \label{sysDetEqHumu_1}
\begin{array}{ll}
\frac{\varphi_x}{\varphi}\xi=(2\nu-\mu)\eta^1+2\tau_t,\\
\eta^1_t=\eta^1_x=0,\quad \xi_{xx}=\tau_{tt}=0,\\
\xi_x=(\mu-\nu)\eta^1, \quad (\xi\frac{h_x}{h})_x=-\lambda\eta^1.
\end{array}
\end{equation}
From the first equation of system \eqref{sysDetEqHumu_1} we can get
$\varphi \in \{e^{qx}, |x|^r(r\neq 0), 1\}$ mod $\hat G_h^{\sim}$.

For $\varphi=e^{qx}$ it follows from the determining equations
\eqref{sysDetEqHumu_1} that $\xi_x=0, \nu=\mu, \lambda\neq 0,
\frac{h_x}{h}=2\alpha$. Thus, $h=h_0e^{\alpha x^2+h_1x}=e^{\alpha
x^2}$ mod $\hat G_h^{\sim}$, $\alpha \neq 0, f=h\varphi=e^{\alpha
x^2+qx}, K=\lambda |u|^{\mu}\ln |u|$ mod $\hat G_h^{\sim}$ that
falls precisely into case
\ref{TableGrClasHpower}.\ref{CaseHpowerK1Fpowerg=h}.

If $\varphi=|x|^r, r\neq0$, then $r\xi/x =
(2\nu-\mu)\eta^1+2\tau_t$. Therefore, $\xi = (\mu-\nu)\eta^1 x,
(\mu-\nu)(\frac{xh_x}{h})_x=-\lambda$. Since $\mu\neq \nu$
(otherwise, $K\in \langle 1, H\rangle$) we have $\lambda = 0$ mod
$\hat G_h^{\sim}$. Therefore, $h = |x|^q$ mod $\hat G^{\sim}$. Then
$f = |x|^p, p \neq q$, $K=|u|^{\nu}$,  and we obtain case
\ref{TableGrClasHpower}.\ref{CaseHpowerKpowerF1g=h}.

Value $\varphi=1$ results in $2\tau_t=(\mu-2\nu)\eta^1,
\xi=(\mu-\nu)\eta^1x+\xi_0$. If $\nu=\mu$ then $\lambda \neq 0$
(otherwise, $K\in \langle 1, H\rangle$), $\lambda= 1$ mod $\hat
G_h^{\sim}$ , $(\frac{h_x}{h})_x = 2\alpha(\eta^1=-2\alpha)$.
Therefore, $h=h_0e^{\alpha x^2+h_1x}=e^{\alpha x^2}$ mod $\hat
G_h^{\sim}$ , $K =|u|^{\mu}\ln |u|$ that follows to case
\ref{TableGrClasHpower}.\ref{CaseHpowerK1Fpowerg=h}. If
$\nu\neq\mu$, then $\lambda = 0$ mod $\hat G_h^{\sim}$ . Therefore,
$h \in \{|x|^q, 1, e^{px}\}$ $\hat G_h^{\sim}$ that yields subcases
of \ref{TableGrClasHpower}.\ref{CaseHpowerK1Fpowerg=h},
\ref{TableGrClasHpower}.\ref{CaseHu-4K0HForallFg=h} and
\ref{TableGrClasForAllHg=1}.\ref{CaseForAllHForAllKFexp1}
correspondingly.

Now, we turn to the case $K \in \langle 1, H \rangle$, $H\neq$const.
For convenience we also consider this case using the gauge $g = 1$.
Hence $K = 0, 1 $ mod $\hat G_1^{\sim}$ . Application of the above
suppositions reduces the determining equations to the system
\begin{gather*}
2\xi_x+\frac{f_x}{f}\xi=\mu\eta^1+2\tau_t,\quad \eta^1_{xx}=0,\quad K\eta^1_x=\varphi \eta^1_{tt},\\
(\xi_x+\frac{\varphi_x}{\varphi}\xi-\tau_t)K=0, \quad
\xi_{xx}=2(\mu+1)\eta^1_x.
\end{gather*}
Solving this system and noting that
$\eta^1=\frac{1}{2}\tau_t+\alpha(x)$, we can get
$\eta^1=\frac{1}{2}\tau_t+\beta x+\alpha_0$ and $\xi=(\mu+1) \beta
x^2+\gamma_1x+\gamma_0$. Substituting these values into the first
determining equation and differentiating it with respect to the
variable $t$ we obtain
\begin{equation} \label{sysDetEqHumu_2}
\begin{array}{ll}
\frac{1}{2}(\mu+4)\tau_{tt}=0.\\
((\mu+1) \beta x^2+\gamma_1x+\gamma_0)\frac{f_x}{f}=-(3\mu+4)\beta
x+\frac{1}{2}(\mu+4)\tau_t+\mu \alpha_0-2\gamma_1.
\end{array}
\end{equation}
The first equation of the above system implies that there exist two
cases should be considered: $\tau_{tt}=0$ if $\mu \neq -4$ and
$\tau_{tt} \neq 0$ if $\mu=-4$. The second equation of system
\eqref{sysDetEqHumu_2} gives $l$ linearly independent equations for
$f$ of form $(\alpha^2
x^2+\alpha^1x+\alpha^0)\frac{f_x}{f}=\beta_1x+\beta_0$. If $l= 0$,
then $\xi= 0, \beta=0, \frac{1}{2}(\mu+4) \tau_t =-\mu\alpha_0$.
Considering case $l = 1$, we get $(\alpha^2, \beta_1) \neq (0, 0)$,
$(\alpha^0, \beta_0) \neq (0, 0)$, otherwise $l > 1$. At last, if $l
\geq 2$, then $f\in \{1, e^{px}, |x|^p, p \neq 0\}$ mod $\hat
G_1^{\sim}$.

For the case $\tau_{tt}=0, \mu \neq -4$, substituting the above
values into the determining equation directly for $K = 0$ and
obvious integration leads to the cases
\ref{TableGrClasHpower}.\ref{CaseHpowerKpowerFpowerg=h} (case $l =
0$), \ref{TableGrClasHpower}.\ref{CaseHu-4K0Ff3g=h} (case $l = 1$)
and \ref{TableGrClasHpower}.\ref{CaseHu-4K0F1g=h}a,
\ref{TableGrClasHpower}.\ref{CaseHu-4K0F1g=h}b,
\ref{TableGrClasHpower}.\ref{CaseHu-4K0F1g=h}c,
\ref{TableGrClasHpower}.\ref{CaseHu-43K0F2g=h}  (case $l = 2$). The
case $\tau_{tt}\neq 0, \mu=-4$ with $K=0$ is corresponding to the
results \ref{TableGrClasHpower}.\ref{CaseH1-43K1g=h},
\ref{TableGrClasHpower}.\ref{CaseHu-43K0F3g=h}.

The classification for $K = 1$ is corresponding to subcases
\ref{TableGrClasHpower}.\ref{CaseHpowerKpowerF1g=h},
\ref{TableGrClasHpower}.\ref{CaseHpowerK1Fpowerg=h} and
\ref{TableGrClasHpower}.\ref{CaseHu-4K0HForallFg=h}.\\
\\
{\bf Case 3: $k=2$} (the gauges $g=1$ and $g=h$, tables~\ref
{TableGrClasHexpg=1}, \ref{TableGrClasHpowerg=1} and
\ref{TableGrClasHexp}, \ref{TableGrClasHpower} ). The assumption of
two independent equations of form~\eqref{sysDetEqTelEq_2} for $H$
yields $H =$ const, i.e. $H = 1$ mod $G^{\sim}$. $K_u \neq 0$
(otherwise, equation~\eqref{eqVarCoefTelegraphEq} is linear). In
what follows, we only use the gauge $g=h$.
Equations~\eqref{sysDetEqTelEq_2}--\eqref{sysDetEqTelEq_4} can be
written as
\begin{gather}
2(\xi_x-\tau_t)+(\frac{f_x}{f}-\frac{h_x}{h})\xi=0,\label{sysDetEqHConst_1}\\
(h\eta_x)_x+h K\eta_x-f\eta_{tt}=0,\label{sysDetEqHConst_2}\\
K_u
\eta+\xi_xK+\xi\frac{h_{xx}}{h}-\xi_{xx}+2\eta^1_{x}+(\frac{\xi}{h})_xh_x=0.\label{sysDetEqHConst_3}
\end{gather}
The latter equation looks similar to $(au + b)K_u = cK + d$ with
respect to $K$, where $a, b, c, d = \const$. Therefore, to within
transformations from $G^{\sim}$, $K$ must take one of four values:
\[
K = u^{\nu},\quad \nu\neq 0, 1, \qquad K = \ln u , \qquad K = e^u,
\qquad K =u.
\]
Classification for these values is carried out in the way similar to
the above. The obtained extensions can be entered in either table
\ref{TableGrClasHexp} or table \ref{TableGrClasHpower}.  The gauge
$g=1$ can be proved in a similar way.

The problem of the group classification of
equation~\eqref{eqVarCoefTelegraphEq} is exhaustively solved.

\subsection{Classification with respect to the set of point transformations}\label{SectionOnAdditEquivTr}

Although we have performed the classification by using
extended equivalence group, we can find in the classification
results equations that some cases from
tables~\ref{TableGrClasForAllHg=1}--\ref{TableGrClasHpowerg=1} or
\ref{TableGrClasForAllH}--\ref{TableGrClasHpower} are equivalent
with respect to point transformations which obviously do not belong
to $\hat G^{\sim}$. These transformations are called {\it additional
equivalence transformations} and lead to simplification of further
application of group classification results (see reference
\cite{Popovych&Ivanova2004NVCDCEs,Huang&Ivanova2007,Ivanova&Popovych&Sophocleous2007}
for details). The simplest way to find such additional equivalences
between previously classified equations is based on the fact that
equivalent equations have equivalent maximal Lie invariance
algebras. Explicit formulas for pairs of point-equivalent extension
cases and the corresponding additional equivalence transformations
are adduced after the tables. One can check that there exist no
other point transformations between the equations from tables
~\ref{TableGrClasForAllHg=1}--\ref{TableGrClasHpowerg=1} or tables
\ref{TableGrClasForAllH}--\ref{TableGrClasHpower}. Using this we can
formulate the following theorem.

\begin{theorem}\label{TheoOnClaOfNonWavePointTrans}
Up to point transformations, a complete list of extensions of the
maximal Lie invariance group of equations from class
\eqref{eqVarCoefTelegraphEq} is exhausted by the cases from tables
~\ref{TableGrClasForAllHg=1}--\ref{TableGrClasHpowerg=1} or tables
\ref{TableGrClasForAllH}--\ref{TableGrClasHpower} numbered with
Arabic numbers without Roman letters and subcases $``a"$ of each
multi-case.
\end{theorem}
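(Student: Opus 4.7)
The plan is to leverage the fact that the classification up to the extended equivalence group $\hat G^{\sim}$ is already complete (Theorems~\ref{TheorOnGrClasResg=1} and \ref{TheorOnGrClasRes}), and to reduce the problem to finding only those additional point equivalences that lie outside $\hat G^{\sim}$. The key principle I would use is the classical necessary condition: two equations from class~\eqref{eqVarCoefTelegraphEq} connected by any point transformation must have maximal Lie invariance algebras that are not only abstractly isomorphic but whose realizations in the space $(t,x,u)$ are conjugate under the diffeomorphism group. This immediately eliminates most candidate pairs from consideration.

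First, I would tabulate the maximal Lie invariance algebras from tables~\ref{TableGrClasForAllHg=1}--\ref{TableGrClasHpowerg=1} and \ref{TableGrClasForAllH}--\ref{TableGrClasHpower} by dimension, commutator structure, and isomorphism class (Abelian, solvable, semisimple, etc.), and group together cases whose algebras have a chance of being conjugate. Cases with distinct dimensions or non-isomorphic commutator tables are automatically point-inequivalent. This step already partitions the tables into a small number of candidate blocks, most of which contain a single multi-case.

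Second, within each candidate block I would search for explicit point transformations between the "non-$a$" subcases (the $b$, $c$, $\ldots$ variants) and the canonical $a$ subcase. The ansatz is strongly constrained by the requirement that the transformation must send the basis vector fields of one algebra to a basis of the other, up to a change of basis. Because in our tables these operators have relatively simple rational or exponential coefficient functions in $x$ and $u$, the transformation reduces to solving a small overdetermined ODE system for the coordinate functions $\tilde t(t,x,u)$, $\tilde x(t,x,u)$, $\tilde u(t,x,u)$. The explicit maps listed beneath the tables (for instance, $\tilde t=t\,\sign x$, $\tilde x=1/x$, $\tilde u=u-\ln|x|$ sending case~\ref{TableGrClasHexpg=1}.\ref{CaseHexpK0Ff1}b to case~\ref{TableGrClasHexpg=1}.\ref{CaseHexpK0Ff1}a, and the analogous involutive maps for case~\ref{TableGrClasHpowerg=1}.\ref{CaseHu-4K0F1}b,c to \ref{TableGrClasHpowerg=1}.\ref{CaseHu-4K0F1}a) are precisely the solutions of these systems, and each is verified by direct substitution into~\eqref{eqVarCoefTelegraphEq}.

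Third, for each remaining candidate pair whose algebras are abstractly isomorphic but for which no transformation has been found, I would confirm inequivalence by exhibiting a differential or algebraic invariant of the point-transformation action that separates them. Typical discriminating invariants are the conformal class of the characteristic quadratic form, the functional form of $H(u)$ modulo affine changes of $u$, and the geometry of the orbits of the solvable radical of $A^{\max}$ in the $(t,x,u)$-space; any of these, when computed, generally rules out conjugacy immediately.

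The main obstacle will be precisely this third step: rigorously certifying non-existence of further point equivalences across cases with superficially similar low-dimensional solvable algebras, because here one cannot simply read off an obstruction from dimension counts. The way I would handle it is to normalize the representation of $A^{\max}$ on $(t,x,u)$ via a preliminary point change, reducing each algebra to a canonical realization; two cases are then point-equivalent iff their canonical realizations coincide and the residual invariants of the equation (the reduced forms of $f$, $h$, $H$, $K$) match modulo the finite residual symmetry. This finite check, carried out case by case, is expected to confirm that the representatives singled out in the statement---the Arabic-numbered cases without letters together with the $a$-subcases---are pairwise point-inequivalent, completing the proof.
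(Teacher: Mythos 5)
Your proposal follows essentially the same route as the paper: it takes the $\hat G^{\sim}$-classification as the starting point, uses the necessary condition that point-equivalent equations have conjugate maximal Lie invariance algebras to narrow the candidate pairs, exhibits the explicit additional equivalence transformations listed beneath the tables, and then argues case by case that no further point equivalences exist. The only difference is that you spell out the final inequivalence check (via canonical realizations and residual invariants) in more detail than the paper, which simply asserts that ``one can check'' no other point transformations exist.
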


As one can see, the above additional equivalence transformations have multifarious structure.
This displays a complexity of a structure of the set of admissible transformations.
Usually the problems of finding of all possible admissible transformations are very difficult to solve,
see, e.g.,~\cite{Kingston&Sophocleous1998,Kingston&Sophocleous2001,
Popovych&Ivanova&Eshraghi2004Gamma,Popovych2006}.
We will try to discuss the structure of the set of admissible
transformations of class~\eqref{eqVarCoefTelegraphEq} in a sequel paper.

\section{Lie reduction and similarity solutions}\label{SectionOnLieRedu}

In this section new Lie exact solutions for the equations from the initial class
are constructed to just illustrate possible applications of the classification
results obtained. We mainly perform group analysis of three classes of equations
possessing nontrivial symmetry properties from the obtained classification lists by the reduction
method and then apply to finding similarity solutions. For this purpose, we first construct
the optimal sets of subalgebras for each kind of maximal Lie invariance
algebras arising from group classification, then perform the reductions with respect to obtained
subalgebras. The method of reduction
with respect to subalgebras of Lie invariance algebras is well-known
and quite algorithmic to use in most cases; we refer to the standard
textbooks on the subject \cite{Olver1986,Ovsiannikov1982}.

We first consider the case
\ref{CaseHu-4K0HForallFg=h} of Table \ref{TableGrClasHpower}, i.e., the equation
\begin{equation}\label{eqVarCoefTelegraphEqfgh1}
u_{tt}=(u^{\mu}u_x)_x+u^{\nu}u_x,
\end{equation}
which admits the three-dimensional  Lie
invariance algebra $\mathfrak{g}$ generated by the operators
\[
Q_1=\partial_t,\quad Q_2=\partial_x,\quad
Q_3=(\mu-2\nu)t\partial_t+2(\mu-\nu)x\partial_x+2u\partial_u.
\]
These operators satisfy the commutations relations
\[
[Q_1,~Q_2]=0,\quad [Q_1,~Q_3]=(\mu-2\nu)Q_1,\quad[Q_2,~Q_3]=2(\mu-\nu)Q_2.
\]
An optimal set of subalgebras of the algebra $\mathfrak{g}$ can be easily constructed with application
of the standard technique \cite{Olver1986,Ovsiannikov1982}. Another way is to take the set from \cite{Patera&Winternitz1977}, where optimal
sets of subalgebras are listed for all three- and four-dimensional algebras. A complete list of
inequivalent one-dimensional subalgebras of the algebra g is exhausted by the subalgebras
$\langle Q_1 \rangle,~ \langle Q_2 \rangle,~ \langle Q_3 \rangle,
~ \langle Q_2-Q_1 \rangle, ~\langle Q_2+Q_1 \rangle.
$
This list can be reduced if we additionally use the
discrete symmetry $(t, x, v)\longrightarrow(t,-x, v)$, which maps
$\langle Q_2+Q_1 \rangle$ to $\langle Q_2-Q_1 \rangle$, thereby reducing
the number of inequivalent subalgebras to four.

The optimal set of two-dimensional subalgebras is formed by the subalgebras $\langle Q_3,~Q_1 \rangle$ ,
$\langle Q_3,~Q_2 \rangle$, and $\langle Q_1,~Q_2 \rangle$. Lie reduction to algebraic equations with the latter two two-dimensional
subalgebra leads only to the trivial zero solution. Below we list all the other subalgebras from
the optimal set as well as the corresponding ansatze and reduced equations in Table
\ref{TableRedODEsWaveEqfgh1}. Solutions of some
reduced equations are adduced.

\setcounter{tbn}{0}

\begin{center}\footnotesize\renewcommand{\arraystretch}{1.1}
Table~\refstepcounter{table}\label{TableRedODEsWaveEqfgh1}\thetable.
Reduced ODEs and algebraic equation for equation~\eqref{eqVarCoefTelegraphEqfgh1}.\\[1ex]
\begin{tabular}{|l|c|c|c|l|}
\hline
N & Subalgebra & Ansatz   & \hfil Reduced ODE \\
\hline \refstepcounter{tbn}\label{ReduODEForCasefgh11}\thetbn &
$\langle Q_1\rangle $ & $u=\varphi(\omega),~~\omega=x$
& $(\varphi^{\mu}\varphi_{\omega})_\omega+\varphi^{\nu}\varphi_{\omega}=0$  \\
\refstepcounter{tbn}\label{ReduODEForCasefgh12}\thetbn &
$\langle Q_2\rangle $ &
$u=\varphi(\omega),~~\omega=t$ &$\varphi_{\omega\omega}=0$  \\
\refstepcounter{tbn}\label{ReduODEForCasefgh13}\thetbn &
$\langle Q_3\rangle $ &
$u=t^{\alpha}\varphi(\omega),~~\omega=\frac{x}{t^{\beta}},$ &$\alpha(\alpha-1)\varphi
+(\beta^2-2\alpha\beta+2\beta)\omega\varphi_{\omega}+\beta^2\omega^2\varphi_{\omega\omega}$\\
$~ $& $~$ &$\alpha=\frac{2}{\mu-2\nu},~\beta=\frac{2(\mu-\nu)}{\mu-2\nu}$& $-(\varphi^{\mu}\varphi_{\omega})_\omega-\varphi^{\nu}\varphi_{\omega}=0$  \\
\refstepcounter{tbn}\label{ReduODEForCasefgh14}\thetbn &
$\langle Q_2-Q_1 \rangle $ &
$u=\varphi(\omega),~~\omega=x+t$ &$(\varphi^{\mu}\varphi_{\omega})_\omega+\varphi^{\nu}\varphi_{\omega}-\varphi_{\omega\omega}=0$  \\
\refstepcounter{tbn}\label{ReduODEForCasefgh15}\thetbn &
$\langle Q_3, Q_1\rangle $ &
$u=Cx^{\frac{1}{\mu-\nu}}$
& $(1+\nu)C^{\mu+1}+(\mu-\nu)C^{\nu+1}=0$  \\
\hline
\end{tabular}
\end{center}
Two kinds of explicit solutions can be constructed for arbitrary values of $\mu$ and $\nu$: the $x$-free solution
$u=c_0+c_1t$  and the stationary solution
$u=(\frac{1+\nu}{\nu-\mu})^{\frac{1}{\nu-\mu}}x^{\frac{1}{\mu-\nu}}$. We can also construct two implicit solutions
from the first and the fourth ODEs in table \ref{TableRedODEsWaveEqfgh1}:
\[
u=\varphi(x),\quad u=\psi(x+t),
\]
where $\varphi$ and $\psi$ sarisfy
\[
\frac{1}{\mu-\nu}\varphi^{\mu-\nu}+\frac{1}{\nu+1}x+\int \frac{C_1}{\varphi^{\nu+1}} d x=0,\quad
\frac{1}{\mu-\nu}\psi^{\mu-\nu}+\frac{1}{\nu}\psi^{-\nu}+\frac{1}{\nu+1}\omega+\int \frac{C_2}{\psi^{\nu+1}} d \omega=0,
\]
and $\omega=x+t$, $C_1, C_2$ are arbitrary constants. Let $\mu$ and $\nu$  be particular values,
we can get some number of explicit exact solutions from the above implicit solutions. For example, from the latter
implicit solution, we can get four triangular function exact solutions if  $\mu=1, \nu=2$:
\[
\begin{array}{ll}
u=-\frac{1}{2}-\frac{\sqrt{3}}{2}\tan[\frac{\sqrt{3}}{6}(x+t)],\\ u=-\frac{1}{2}+\frac{\sqrt{3}}{2}\cot[\frac{\sqrt{3}}{6}(x+t)],\\
u=-\frac{1}{2}-\frac{\sqrt{3}}{2}\tan[\frac{\sqrt{3}}{3}(x+t)]\pm \frac{1}{2}\sqrt{1+3\tan^2[\frac{\sqrt{3}}{3}(x+t)]},\\
u=-\frac{1}{2}+\frac{\sqrt{3}}{2}\cot[\frac{\sqrt{3}}{3}(x+t)]\pm
\frac{1}{2}\sqrt{1+3\cot^2[\frac{\sqrt{3}}{3}(x+t)]};
\end{array}
\]
and a rational solutions if $\mu=-1, \nu=-2$:
\[
u=x+t.
\]

Lie reduction and exact solutions of `truly' variable-coefficient nonlinear telegraph waves
are most interesting. We consider two
cases \ref{TableGrClasHexpg=1}.\ref{CaseHexpKexpnuuF1}$^{*}$
and \ref{TableGrClasHpower}.\ref{CaseHpowerKpowerF1g=h}, i.e. equations
\begin{equation}\label{eqVarCoefTelegraphEqExpp}
e^{px}u_{tt}=(u^{\mu}u_x)_x+\epsilon e^{qx}u^{\nu}u_x,
\end{equation}
\begin{equation}\label{eqVarCoefNonLWaveEquPowp}
|x|^pu_{tt}=(|x|^qu^{\mu}u_x)_x+|x|^qu^{\nu}u_x.
\end{equation}
For each from these cases we denote the basis symmetry operators
adduced in Table \ref{TableGrClasHexpg=1} and  \ref{TableGrClasHpower} by $\mathfrak{g_1}=\langle
Q_1=\partial_t,\,Q_2=[(p-q)\mu-q\nu]\partial_t+2(\mu-\nu)
\partial_x+2qu\partial_u \rangle$ and $\mathfrak{g_2}=\langle
Q_1=\partial_t,\, Q_2=[(p-q+1)\mu-(p-q+2)\nu]t\partial_t+2(\mu-\nu)x
\partial_x+2u\partial_u \rangle$, which are all non-commutative algebra. A complete list of inequivalent
non-zero subalgebras of $\mathfrak{g_1}$ or $\mathfrak{g_2}$ is exhausted by the algebras
$\langle Q_1 \rangle,~\langle Q_2 \rangle$ and $\langle Q_1,~Q_2
\rangle$.

Lie reduction of the equations \eqref{eqVarCoefTelegraphEqExpp} and \eqref{eqVarCoefNonLWaveEquPowp} to
ordinary differential equations (ODEs) and an algebraic equation can
be respectively made with the one-dimensional subalgebra $\langle
Q_1 \rangle,~\langle Q_2 \rangle$ and the two-dimensional subalgebra
$\langle Q_1,~Q_2 \rangle$ which coincides with the whole algebra
$\mathfrak{g_1}$ or $\mathfrak{g_2}$. The associated ansatzes and
reduced equations are listed in Table \ref{TableRedODEsWaveEqexpMu1} and \ref{TableRedODEsWaveEqexpMu}.

\setcounter{tbn}{0}

\begin{center}\footnotesize\renewcommand{\arraystretch}{1.1}
Table~\refstepcounter{table}\label{TableRedODEsWaveEqexpMu1}\thetable.
Reduced ODEs and algebraic equation for equation~\eqref{eqVarCoefTelegraphEqExpp}.\\[1ex]
\begin{tabular}{|l|c|c|c|l|}
\hline
N & Subalgebra & Ansatz   & \hfil Reduced ODE \\
\hline \refstepcounter{tbn}\label{ReduODEForCaseHexpK0F11}\thetbn &
$\langle Q_1\rangle $ &
$u=(\varphi(\omega))^{\frac{1}{\mu+1}},~~\omega=x$
& $\varphi_{\omega\omega}+\epsilon e^{q\omega}\varphi_{\omega}\varphi^{\frac{\nu+1}{\mu+1}-1}=0~~
\mbox{if}~~ \mu\neq -1$\\
$$ & & $u=\exp(\varphi(\omega)),~~\omega=x$ &
$\varphi_{\omega\omega}+\epsilon e^{(v+1)\varphi+q\omega}\varphi_{\omega}=0~~\mbox{if}~~ \mu=-1$\\
\refstepcounter{tbn}\label{ReduODEForCaseHexpK0F12}\thetbn &
$\langle Q_2\rangle $ & $u=|t|^{\alpha}\varphi(\omega),~~\omega=x+\beta\ln |t|$ &
$\alpha(\alpha-1)\varphi
+(2\alpha\beta-\beta)\varphi_{\omega}+\beta^2\varphi_{\omega\omega}$
\\$~ $& $~$ &$\alpha=-\frac{2q}{(q-p)\mu+p\nu},~\beta=\frac{2p(\nu-\mu)}{p[(p-q)\mu-p\nu]}$& $-\omega^{-p}(\varphi^{\mu}\varphi_{\omega})_\omega-\epsilon\omega^{q-p}\varphi^{\nu}\varphi_{\omega}=0$  \\
\refstepcounter{tbn}\label{ReduODEForCaseHexpK0F13}\thetbn &
$\langle Q_1, Q_2\rangle $ & $u=Ce^{\frac{qx}{\mu-\nu}}$
& $q(\mu+1)C^{\mu+1}+\epsilon (\mu-\nu) C^{\nu+1}=0$  \\
\hline
\end{tabular}
\end{center}

\setcounter{tbn}{0}

\begin{center}\footnotesize\renewcommand{\arraystretch}{1.1}
Table~\refstepcounter{table}\label{TableRedODEsWaveEqexpMu}\thetable.
Reduced ODEs and algebraic equation for equation~\eqref{eqVarCoefNonLWaveEquPowp}.\\[1ex]
\begin{tabular}{|l|c|c|c|l|}
\hline
N & Subalgebra & Ansatz   & \hfil Reduced ODE \\
\hline \refstepcounter{tbn}\label{ReduODEForCaseHexpK0F11}\thetbn &
$\langle Q_1\rangle $ &
$u=(\varphi(\omega))^{\frac{1}{\mu+1}},~~\omega=x$
& $\varphi\varphi_{\omega\omega}-\varphi_{\omega}^2
+q\varphi\varphi_{\omega}+\varphi_{\omega}\varphi^{\frac{\nu+1}{\mu+1}+1}=0~~ \mbox{if}~~ \mu\neq -1$\\
$$ & & $u=\exp(\varphi(\omega)),~~\omega=x$ &
$(\varphi_{\omega}\omega^q)_{\omega}+\omega^qe^{(\nu+1)\varphi}\varphi_{\omega}=0~~\mbox{if}~~ \mu=-1$\\
\refstepcounter{tbn}\label{ReduODEForCaseHexpK0F12}\thetbn &
$\langle Q_2\rangle $ & $u=|t|^{\alpha}\varphi(\omega),~~\omega=x|t|^{\beta}$ &
$\alpha(\alpha-1)\varphi
+(\beta^2+2\alpha\beta-\beta)\omega\varphi_{\omega}+\beta^2\omega^2\varphi_{\omega\omega}$
\\$~ $& $~$ &$\alpha=\frac{2}{(p-q)(\mu-\nu)+(\mu-2\nu)},$& $-\omega^{-p}(\omega^q\varphi^{\mu}\varphi_{\omega})_\omega-\omega^{q-p}\varphi^{\nu}\varphi_{\omega}=0$  \\
$~ $& $~$ &$\beta=\frac{2(\nu-\mu)}{(p-q)(\mu-\nu)+(\mu-2\nu)}$& $$  \\
\refstepcounter{tbn}\label{ReduODEForCaseHexpK0F13}\thetbn &
$\langle Q_1, Q_2\rangle $ & $u=Cx^{\frac{1}{\mu-\nu}}$
& $[(q-1)(\mu-\nu)+\mu+1]C^{\mu+1}+(\mu-\nu) C^{\nu+1}=0$  \\
\hline
\end{tabular}
\end{center}
Reduction to algebraic equations gives the following solutions of
the initial equations \eqref{eqVarCoefTelegraphEqExpp} and
\eqref{eqVarCoefNonLWaveEquPowp} respectively:
\[
u=\bigg{[}\epsilon\frac{q(\mu+1)}{\mu-\nu}\bigg{]}^{\frac{\mu+1}{\nu+1}}e^{\frac{qx}{\mu-\nu}}; ~~~~
u=\bigg{[}\frac{(q-1)(\mu-\nu)+(\mu+1)}{\nu-\mu}\bigg{]}^{\frac{\mu+1}{\nu+1}}x^{\frac{1}{\mu-\nu}}.
\]
Furthermore, some of the reduced ordinary differential equations in
tables \ref{TableRedODEsWaveEqexpMu1} and \ref{TableRedODEsWaveEqexpMu}
are the modification of the Emden-Fowler and the
Lane-Emden equations
\cite{Polyanin2003,Goenner&Havas2000}. For example, the first
equation corresponding to case 1 of table
\ref{TableRedODEsWaveEqexpMu} are the standard Emden-Fowler
equation, while the second one to case 1 of table
\ref{TableRedODEsWaveEqexpMu} is the generalized Lane-Emden
equation. Solutions of these equations are known for a number of
parameter values (see e.g. \cite{Polyanin2003,Goenner&Havas2000}).
As a result, classes of exact solutions can be constructed for wave
equations \eqref{eqVarCoefTelegraphEqExpp} and
\eqref{eqVarCoefNonLWaveEquPowp} for a wide set of the parameters
$\mu$ and $q$. We omit these results in order to avoid a cumbersome
enumeration.

\section{On nonclassical symmetries}\label{SectionOnNonclassicalSym}

In 1969, Bluman and Cole introduced an essential generalization of
Lie symmetry in the study symmetry reduction of the linear heat
equation \cite{Bluman&Cole1969}. These generalized symmetries are
often called nonclassical symmetries (called also conditional or
$Q$-conditional symmetries) nevertheless it was not used in
\cite{Bluman&Cole1969}. A precise and rigorous definition of this
notion was suggested noticeably later
\cite{Fushchych&Tsy1987,Zhdanov&Tsy&Pop1999} (see also
\cite{Kun&Pop2009} for a recent discussion on definition of
nonclassical symmetries). Since then there is an explosion of
research activity in the area of investigation of nonclassical
symmetries of PDEs arising from different fields of physics, biology
and chemistry
\cite{Clarkson&Mansfield1994,Clarkson&Winternitz1994,Grundland&Tafel1995,
Levi&Winternitz1989,Nucci&Clarkson1992,Nucci2003,Nucci&Leach2000}.
Some of these works concern with nonlinear wave equations. See, for
example, \cite{Ibragimov1994V1,Foursov&Vor1996}.

Generally speaking, there are two main features of nonclassical
symmetries of differential equation difference from Lie symmetries.
The first one is that they can yield solutions not obtainable from
the classical Lie symmetries. The second feature is the deriving systems of determining equations for
nonclassical symmetries which crucially depends on the interplay between
the operators and the equations under consideration, and thus are different from the Lie
symmetries. Due to these facts, when studying the general form of nonclassical
symmetry operators
$Q=\tau(t,x,u)\partial_t+\xi(t,x,u)\partial_x+\eta(t,x,u)\partial_u ((\tau,\xi)\neq(0,0))$ for the
linear heat equation $u_t = u_{xx}$,  there are two essentially
different cases of nonclassical symmetries should be considered: the {\it regular} case
$\tau \neq 0$ and the {\it singular} case $\tau = 0$. The
factorization up to the equivalence of operators 
gives the two respective cases for the further investigation: 1)
$\tau=1$ and 2) $\tau = 0, \xi= 1$. In particular, for the singular case the system
of determining equations for nonclassical symmetries consists of a
single (1+2)-dimensional nonlinear evolution equation for the
unknown function $\eta$ and, therefore, is not overdetermined. The
determining equation is reduced by a nonlocal transformation to the
initial equation with an additional implicit independent variable
which can be assumed as a parameter
\cite{Fushchych&Shtelen&Serov&Popovych1992}. The linearity of the
heat equation is inessential here.

Recently, based on the above discoveries, Kunzinger and Popovych \cite{Kun&Pop2008} raise a number of interesting
questions, to wit:
Is the partition of sets of nonclassical symmetry operators of the linear heat
equation with the conditions of vanishing and nonvanishing
coefficients of operators to regular and singular cases universal
for any differential equations or is it appropriate only for certain
classes of differential equations?  What is the proper partition of
sets of nonclassical symmetry operators different from the conventional one?
What are possible causes for the existence of singular cases for
nonclassical symmetry operators?
 The answer to these questions has some
fundamental importance in the research of nonclassical symmetries and will make
finding an optimal way of obtaining the determining
equation for nonclassical symmetries become possible. They gave a detail investigation on these
questions and present a novel framework of singular reduction
operators to clarify the main idea \cite{Kun&Pop2008}. Here and below, following \cite{Kun&Pop2008} we use sometimes the shorter and more natural term `reduction operators' instead of `nonclassical symmetry operators' or `operators of nonclassical symmetry'.

They also show that for any
(1+1)-dimensional evolution equation \cite{Kun&Pop2008}
\[
u_t=H(t,x,u,u_1,...,u_k),~~k>1,~~u_k=\partial^k u/\partial x^k,
~~H_{u_k}\neq 0,
\]
the conventional partition of the set of
its reduction operators with the conditions $\tau \neq 0$ and
$\tau=0$ is natural since it coincides with the partition of
the set into the singular and regular subsets.
After factorizing the subsets of the reduction operator set with
respect to the usual equivalence relation of reduction operators (see
Definition \ref{DefEquiOpera} below), there exist two different cases of
inequivalent reduction operators: the regular case $\tau = 1$ and
the singular case $\tau = 0$ and $\xi = 1$, which should be investigated
separately. {\it However, this is a specific property of evolution
equations which does not hold for general partial differential
equations in two independent variables}. In particular, they show
that for the class of nonlinear wave equations
\[
u_{tt}-u_{xx}=F(u),
\]
where $F$ is an arbitrary smooth function of $u$, which possesses
two singular sets of reduction operators, singled out by the
conditions $\tau=\xi$ and $\tau=-\xi$, and one regular set of
reduction operators, associated with the condition $\tau\neq\pm
\xi$. The singular sets are mapped to each other by alternating the
sign of $x$ and hence one of them can be excluded from the
consideration. After factorization with respect to the equivalence
relation of vector fields, there are two cases for further study:
the singular case $\tau= \xi= 1$ and the regular case $\tau\neq ¡À1,
\xi = 1$.

However, for more general nonlinear wave equations there exist no
general results. Therefore, it is shown that the structures of
condition symmetries of hyperbolic type nonlinear partial
differential is more complicated than general evolution equation.

In what follows we extend this new framework of singular reduction operators to the
(1+1)-dimensional variable coefficient nonlinear telegraph
equations \eqref{eqVarCoefTelegraphEq}. With the aid of the
transformation $\tilde t = t, \tilde x =\int \frac{dx}{g(x)}, \tilde
u = u$ from $G^{\sim}/G^{\sim g}$ in section
\ref{SectiononEquivaTrans}, we can reduce equation
\eqref{eqVarCoefTelegraphEq} to one which has the same form with
equation \eqref{g=1eqVarCoefTelegraphEq}. Thus, without loss of
generality we can restrict ourselves to investigation in detail the
equation
\begin{equation}\label{eqVarCoefTelegraphEq_1}
f(x)u_{tt}-(H(u)u_x)_x-h(x)K(u)u_x=0,
\end{equation}
where $f=f(x),
h=h(x), H=H(u)$ and $K=K(u)$ are arbitrary smooth functions of the
corresponding variables, $fH>0$.

For the sake of completeness, let us first
review some necessary definitions and statements on nonclassical
symmetries
\cite{Fushchych&Zhdanov1992,Vaneeva&Popovych&Sophocleous2010,Zhdanov&Tsy&Pop1999,Kun&Pop2009}
and singular reduction operator \cite{Kun&Pop2008}.

\subsection{Brief review of reduction operators of differential equation}
Consider an $r$th order differential equation $\mathcal{L}$ of the
form $L(t, x, u_{(r)}) = 0$ for the unknown function $u$ of the two
independent variables $t$ and $x$, where $L = L[u] = L(t, x,
u_{(r)})$ is a fixed differential function of order $r$ and
$u_{(r)}$ denotes the set of all the derivatives of the function $u$
with respect to $t$ and $x$ of order not greater than $r$, including
$u$ as the derivative of order zero.

In order to discuss the conditional symmetries of equation
$\mathcal{L}$, we will first treat equation $\mathcal{L}$ from a
geometric point of view as an algebraic equation in the jet space
$J^r$ of order  $r$ and is identified with the manifold of its
solutions in $J^r$ \cite{Olver1986}: $
\mathcal{L}=\{(t,x,u_{(r)})\in J^r|L(t, x, u_{(r)}) = 0\}.$
Let $\mathcal{Q}$ denote the set of vector fields of the general
form
\begin{equation}\label{VectorFieldofNonClaSymm}
Q=\tau(t,x,u)\partial_t+\xi(t,x,u)\partial_x+\eta(t,x,u)\partial_u,~~~(\tau,\xi)\neq(0,0),
\end{equation}
which is a first-order differential operator on the space
$\mathbb{R}^2\times \mathbb{R}^1$ with coordinates $t , x,$ and $u$.
Then all functions invariant under $Q$ and only such functions
satisfy a first order differential equation
\begin{equation}\label{AddiCon}
Q[u]:=\tau u_t+\xi u_x-\eta=0
\end{equation}
called the the characteristic equation (also known as invariant
surface condition).

Denote the manifold defined by the set of all the differential
consequences of the characteristic equation $Q[u] = 0$ in $J^r$ by
$\mathcal{Q}_{(r)}$, i.e.,
\[
\mathcal{Q}_{(r)}=\{(t, x, u_{(r)})\in J^r|D_t^\alpha D_x^\beta
Q[u]=0,~\alpha, \beta\in \mathbb{N}\cup \{0\},~\alpha+\beta<r\},
\]
where $D_t =\partial_t
+u_{\alpha+1,\beta}\partial_{u_{\alpha,\beta}}$ and $D_x =\partial_x
+u_{\alpha,\beta+1}\partial_{u_{\alpha,\beta}}$ are the operators of
total differentiation with respect to the variables $t$ and $x$, and
the variable $u_{\alpha,\beta}$ of the jet space $J^r$ corresponds
to the derivative $\frac{\partial^{\alpha+\beta} u}{\partial
t^\alpha \partial x^\beta}$. Denote also by $Q_{(r)}$ the standard $r$th prolongation of $Q$ to
the space $J^r$:
\[
Q_{(r)}=Q+\sum_{0<\alpha+\beta\leq r}
\eta^{\alpha\beta}\partial_{u_{\alpha,\beta}},~~
\eta^{\alpha\beta}:=D_t^\alpha D_x^\beta Q[u]+\tau
u_{\alpha+1,\beta}+\xi u_{\alpha,\beta+1}.
\]

\begin{definition}
The differential equation  $\mathcal{L}$ is called conditionally
invariant with respect to the operator $Q$ if the relation
\begin{equation}\label{DeterEq}
Q_{(r)}[L(t, x, u_{(r)})]\bigg{|}_{\mathcal{L}\cap
\mathcal{Q}_{(r)}}= 0
\end{equation}
holds, which is called the conditional invariance criterion. Then
$Q$ is called conditional symmetry (or nonclassical symmetry,
Q-conditional symmetries or reduction operator) of the equation
$\mathcal{L}$.
\end{definition}

We denote the set of reduction operators of the equation
$\mathcal{L}$ by $\mathcal{Q}(\mathcal{L})$ which is a subset of
$\mathcal{Q}$. Any Lie symmetry operator of  $\mathcal{L}$  belongs
to $\mathcal{Q}(\mathcal{L})$. Sometimes $\mathcal{Q}(\mathcal{L})$
is exhausted by the operators equivalent to Lie symmetry ones in the
sense of the following definition.

\begin{definition}\label{DefEquiOpera}
Two differential operators $Q$ and $\tilde Q$ in $\mathcal{Q}$ are
called equivalent $(Q\sim \tilde Q)$ if they differ by a multiplier
which is a non-vanishing function of $t, x$ and $u:$ $\tilde Q =
\lambda Q$, where $\lambda = \lambda(t, x, u), \lambda \neq 0$.
\end{definition}

Factoring $\mathcal{Q}$  with respect to this equivalence relation
we arrive at $\mathcal{Q}_f$. Elements of $\mathcal{Q}_f$ will be
identified with their representatives in $\mathcal{Q}$. The property
of conditional invariance is compatible with this equivalence
relation on $\mathcal{Q}$
\cite{Fushchych&Zhdanov1992,Zhdanov&Tsy&Pop1999}:

\begin{lemma}
If the equation $\mathcal{L}$ is conditionally invariant with
respect to the operator $Q$ then it is conditionally invariant with
respect to any operator which is equivalent to $Q$.
\end{lemma}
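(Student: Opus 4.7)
The plan is to verify the lemma by showing that replacing $Q$ with $\tilde Q = \lambda Q$, where $\lambda = \lambda(t,x,u)$ is non-vanishing, preserves both pieces of data entering the conditional invariance criterion~\eqref{DeterEq}: the constraint manifold $\mathcal{Q}_{(r)}$ and the value of $Q_{(r)}[L]$ on $\mathcal{L}\cap\mathcal{Q}_{(r)}$.

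First I would observe that the characteristic equations of $Q$ and $\tilde Q$ cut out the same manifold in $J^r$. Indeed, since $\tilde\tau = \lambda\tau$, $\tilde\xi = \lambda\xi$, $\tilde\eta = \lambda\eta$, one has $\tilde Q[u] = \lambda\,Q[u]$, and $\lambda\neq 0$ implies $\tilde Q[u]=0\Leftrightarrow Q[u]=0$. Applying $D_t^\alpha D_x^\beta$ and using the Leibniz rule, the differential consequences $D_t^\alpha D_x^\beta \tilde Q[u]$ are linear combinations of $D_t^{\alpha'}D_x^{\beta'}Q[u]$ with $\alpha'+\beta'\le\alpha+\beta$ and coefficients involving derivatives of $\lambda$; conversely one recovers $D_t^\alpha D_x^\beta Q[u]$ from those of $\tilde Q[u]$ by dividing by $\lambda$. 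Hence the two families of differential consequences generate the same ideal, so the constraint manifolds coincide: $\widetilde{\mathcal{Q}}_{(r)}=\mathcal{Q}_{(r)}$.

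Next I would compare the prolongations using the formula given in the excerpt,
\[
\eta^{\alpha\beta} = D_t^\alpha D_x^\beta Q[u] + \tau\,u_{\alpha+1,\beta} + \xi\,u_{\alpha,\beta+1}.
\]
Substituting $\tilde Q = \lambda Q$ and expanding $D_t^\alpha D_x^\beta(\lambda\,Q[u])$ by Leibniz, every term except $\lambda\,D_t^\alpha D_x^\beta Q[u]$ contains some lower-order derivative $D_t^{\alpha'}D_x^{\beta'}Q[u]$ with $\alpha'+\beta'<\alpha+\beta\le r$. All such derivatives vanish on $\mathcal{Q}_{(r)}$ by its very definition. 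Therefore, restricted to $\mathcal{Q}_{(r)}$,
\[
\tilde\eta^{\alpha\beta}\big|_{\mathcal{Q}_{(r)}} = \lambda\bigl(D_t^\alpha D_x^\beta Q[u] + \tau\,u_{\alpha+1,\beta} + \xi\,u_{\alpha,\beta+1}\bigr)\big|_{\mathcal{Q}_{(r)}} = \lambda\,\eta^{\alpha\beta}\big|_{\mathcal{Q}_{(r)}},
\]
and consequently $\tilde Q_{(r)}[L]\big|_{\mathcal{Q}_{(r)}} = \lambda\,Q_{(r)}[L]\big|_{\mathcal{Q}_{(r)}}$. If the criterion $Q_{(r)}[L]\big|_{\mathcal{L}\cap\mathcal{Q}_{(r)}}=0$ holds, then multiplication by $\lambda$ yields the analogous criterion for $\tilde Q$, which proves the lemma.

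I do not anticipate a serious obstacle; the only point requiring slight care is the bookkeeping in the Leibniz expansion of $D_t^\alpha D_x^\beta(\lambda Q[u])$, specifically the observation that every term besides the leading one lies in the ideal defining $\mathcal{Q}_{(r)}$. A clean way to package this, which I would adopt, is to note that the total derivative operators $D_t,D_x$ preserve the differential ideal generated by $Q[u]$, so $D_t^\alpha D_x^\beta(\lambda Q[u])\equiv \lambda\,D_t^\alpha D_x^\beta Q[u]$ modulo that ideal, and then restrict to $\mathcal{Q}_{(r)}$ where the ideal vanishes up to order $r$.
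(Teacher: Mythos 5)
Your proof is correct. Note that the paper itself does not prove this lemma: it is stated as a known fact with a citation to the literature (Fushchych--Zhdanov and Zhdanov--Tsyfra--Popovych), so there is no in-paper argument to compare against. Your two-step argument is exactly the standard one behind those references: (i) since $\tilde Q[u]=\lambda Q[u]$ with $\lambda\neq 0$, an induction on $\alpha+\beta$ through the Leibniz expansion shows the differential consequences of the two characteristic equations cut out the same manifold $\mathcal{Q}_{(r)}$; (ii) on that manifold all lower-order consequences vanish, so $\tilde\eta^{\alpha\beta}|_{\mathcal{Q}_{(r)}}=\lambda\,\eta^{\alpha\beta}|_{\mathcal{Q}_{(r)}}$ and hence $\tilde Q_{(r)}[L]|_{\mathcal{L}\cap\mathcal{Q}_{(r)}}=\lambda\,Q_{(r)}[L]|_{\mathcal{L}\cap\mathcal{Q}_{(r)}}=0$. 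Both steps are complete; the only cosmetic caveat is that the paper's sign conventions for $Q[u]$ versus the prolongation coefficients are internally inconsistent, but your homogeneity argument is insensitive to that and goes through under either convention.
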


In view of this lemma, we can see that $Q \in
\mathcal{Q}(\mathcal{L})$ and $\tilde Q\sim Q$ imply $\tilde Q\in
\mathcal{Q}(\mathcal{L})$, i.e. $\mathcal{Q}(\mathcal{L})$ is closed
under the equivalence relation on $\mathcal{Q}$. Therefore, the
equivalence relation on $\mathcal{Q}$ induces a well-defined
equivalence relation on $\mathcal{Q}(\mathcal{L})$; and the
factorization of $\mathcal{Q}$ with respect to this equivalence
relation can be naturally restricted to $\mathcal{Q}(\mathcal{L})$
that results in the subset $\mathcal{Q}_f(\mathcal{L})$ of
$\mathcal{Q}_f$. As in the whole set $\mathcal{Q}_f$, we identify
elements of $\mathcal{Q}_f(\mathcal{L})$ with their representatives
in $\mathcal{Q}(\mathcal{L})$. In this approach the problem of
completely describing all reduction operators for $\mathcal{L}$ is
equivalent to finding $\mathcal{Q}_f(\mathcal{L})$. In fact,
nonclassical symmetries should be studied up to the above
equivalence relation. The elements of $\mathcal{Q}(\mathcal{L})$
which are not equivalent to Lie invariance operators of
$\mathcal{L}$ will be called {\it pure nonclassical} symmetries of
$\mathcal{L}$.

The conditional invariance criterion admits the following useful
reformulation \cite{Zhdanov&Tsy&Pop1999}.

\begin{lemma}
Given a differential equation $\mathcal{L}: L[u] = 0$ of order $r$
and differential functions $\tilde L[u]$ and $\lambda[u] \neq 0$ of
an order not greater than $r$ such that $L|_{\mathcal{Q}_{(r)}} =
\lambda \tilde L|_{\mathcal{Q}_{(r)}}$, an operator $Q$ is a
reduction operator of $\mathcal{L}$ if and only if the relation $
Q_{(\tilde r)}\tilde L|_{\mathcal{L}\cap\mathcal{Q}_{(r)}}=0$ holds,
where $\tilde r=\ord\tilde L\leq r$ is the order of the differential
function $\tilde L[u]$ and the manifold $\mathcal{\tilde L}$ is
defined in $J^{\tilde r}$ by the equation $\tilde L[u] = 0$.
\end{lemma}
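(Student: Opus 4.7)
The plan is to exploit two well-known structural facts about the characteristic manifold and the Leibniz rule for prolonged vector fields in order to move the restriction across the relation $L|_{\mathcal{Q}_{(r)}}=\lambda\tilde L|_{\mathcal{Q}_{(r)}}$. First I would recall the fundamental property (going back to the characteristic method and stated, e.g., in Olver's treatment) that the prolonged vector field $Q_{(r)}$ is tangent to $\mathcal{Q}_{(r)}$: since $\mathcal{Q}_{(r)}$ is cut out by $Q[u]$ and its total derivatives $D_t^\alpha D_x^\beta Q[u]$ for $\alpha+\beta<r$, and since $Q_{(r)}$ acts on such total derivatives as a combination of total derivatives of $Q[u]$ itself (which vanish on $\mathcal{Q}_{(r)}$), the manifold $\mathcal{Q}_{(r)}$ is invariant under the flow of $Q_{(r)}$. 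In particular, the identity $L=\lambda\tilde L$, which a priori only holds on the subset $\mathcal{Q}_{(r)}\subset J^r$, can be differentiated along $Q_{(r)}$ and the resulting equality will continue to hold after restriction to $\mathcal{Q}_{(r)}$.

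Next I would apply $Q_{(r)}$ to both sides of $L=\lambda\tilde L$ and use Leibniz:
\[
Q_{(r)} L\big|_{\mathcal{Q}_{(r)}}=(Q_{(r)}\lambda)\,\tilde L\big|_{\mathcal{Q}_{(r)}}+\lambda\,(Q_{(r)}\tilde L)\big|_{\mathcal{Q}_{(r)}}.
\]
Because $\tilde L$ depends only on jet coordinates of order at most $\tilde r\le r$, we have $Q_{(r)}\tilde L=Q_{(\tilde r)}\tilde L$ (higher-order prolongation coefficients contribute nothing). Now further restrict to $\mathcal{L}\cap\mathcal{Q}_{(r)}$: there $L=0$, and hence $\lambda\tilde L=0$ on this intersection; as $\lambda\neq 0$, it follows that $\tilde L=0$ on $\mathcal{L}\cap\mathcal{Q}_{(r)}$ as well. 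The first term on the right then drops out, leaving the clean identity
\[
Q_{(r)} L\big|_{\mathcal{L}\cap\mathcal{Q}_{(r)}}=\lambda\,(Q_{(\tilde r)}\tilde L)\big|_{\mathcal{L}\cap\mathcal{Q}_{(r)}}.
\]

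From here the lemma follows in one line: since $\lambda$ is a nonvanishing differential function, the left-hand side vanishes on $\mathcal{L}\cap\mathcal{Q}_{(r)}$ if and only if $Q_{(\tilde r)}\tilde L$ does. The left-hand side vanishing is by definition the conditional invariance criterion for $L$, so this gives the claimed equivalence.

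The only subtle point, and what I would treat most carefully, is the legitimacy of applying $Q_{(r)}$ to an equality that is known only on $\mathcal{Q}_{(r)}$; this is precisely where tangency of $Q_{(r)}$ to $\mathcal{Q}_{(r)}$ is used. Concretely, one writes $L-\lambda\tilde L=\sum c^{\alpha\beta}[u]\,D_t^\alpha D_x^\beta Q[u]$ with some differential-function coefficients, applies $Q_{(r)}$, and observes that each summand remains in the ideal generated by $Q[u]$ and its total derivatives up to order $r$ (here one uses the commutator relations between $Q_{(r)}$ and the total-derivative operators $D_t$, $D_x$, which produce only further derivatives of $Q[u]$). Restricting to $\mathcal{Q}_{(r)}$ then kills the entire remainder, and the calculation above becomes rigorous. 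Everything else is straightforward manipulation.
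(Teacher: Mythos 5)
Your argument is correct. Note that the paper itself does not prove this lemma: it is quoted verbatim, with a citation to Zhdanov, Tsyfra and Popovych (1999), so there is no in-paper proof to compare against; your derivation is essentially the standard one from that reference. The two points you single out are indeed the only ones needing care -- the Hadamard-type decomposition $L-\lambda\tilde L=\sum c^{\alpha\beta}[u]\,D_t^\alpha D_x^\beta Q[u]$ (legitimate because $(\tau,\xi)\neq(0,0)$ makes $\mathcal{Q}_{(r)}$ a regular submanifold of $J^r$) and the tangency of $Q_{(r)}$ to $\mathcal{Q}_{(r)}$ (which follows from $Q_{(1)}Q[u]$ being proportional to $Q[u]$ together with the commutation relations between the prolonged field and total derivatives) -- and with $\tilde L|_{\mathcal{L}\cap\mathcal{Q}_{(r)}}=0$ forced by $\lambda\neq0$, the equivalence follows exactly as you state.
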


Consider a vector field $Q$ in the form
\eqref{VectorFieldofNonClaSymm} and a differential function $L =
L[u]$ of order $\ord L = r$ (i.e., a smooth function of variables
$t, x, u$ and derivatives of $u$ of orders up to $r$).

\begin{definition}\label{DefSinguandReguOpe}
The vector field $Q$ is called {\it singular} for the differential
function $L$ if there exists a differential function $\tilde L =
\tilde L[u]$ of an order less than $r$ such that
$L|_{\mathcal{Q}_{(r)}} = \tilde L|_{\mathcal{Q}_{(r)}}$ . Otherwise
$Q$ is called a {\it regular} vector field for the differential
function $L$. If the minimal order of differential functions whose
restrictions on ${\mathcal{Q}_{(r)}}$ coincide with
$L|_{\mathcal{Q}_{(r)}}$ equals $k~(k < r)$ then the vector field
$Q$ is said to be of {\it singularity co-order} $k$ for the
differential function $L$. The vector field $Q$ is called {\it
ultra-singular} for the differential function $L$ if
$L|_{\mathcal{Q}_{(r)}}\equiv 0$.
\end{definition}

For convenience, the singularity co-order of ultra-singular vector
fields and the order of identically vanishing differential functions
are defined to equal $-1$. Regular vector fields for the
differential function $L$ are defined to have singularity co-order
$r = \ord L$. The singularity co-order of a vector field $Q$ for a
differential function $L$ will be denoted by $\sco_L Q$.

If $Q$ is a singular vector field for $L$ then any vector field
equivalent to $Q$ is singular for $L$ with the same co-order of
singularity.

We will say that a vector field $Q$ is {\it (strongly) singular for
a differential equation} $\mathcal{L}$ if it is singular for the
differential function $L[u]$ which is the left hand side of the
canonical representation $L[u] = 0$ of the equation $\mathcal{L}$.
Usually we will omit the attribute ``strongly".

Since left hand sides of differential equations are defined up to
multipliers which are nonvanishing differential functions, the
conditions from Definition \ref{DefSinguandReguOpe} can be weakened
when considering differential equations.

\begin{definition}\label{DefSinguandReguOpeOfDE}
A vector field $Q$ is called {\it weakly singular } for the
differential equation $\mathcal{L}: L[u] = 0$ if there exist a
differential function $\tilde L = \tilde L[u]$ of an order less than
$r$ and a nonvanishing differential function $\lambda= \lambda[u]$
of an order not greater than $r$ such that $L|_{\mathcal{Q}_{(r)}}
=\lambda \tilde L|_{\mathcal{Q}_{(r)}}$  . Otherwise $Q$ is called a
{\it weakly regular} vector field for the differential equation
$\mathcal{L}$. If the minimal order of differential functions whose
restrictions on $\mathcal{Q}_{(r)}$ coincide, up to nonvanishing
functional multipliers, with $L|_{\mathcal{Q}_{(r)}}$ is equal to $k
(k < r)$ then the vector field $Q$ is said to be {\it weakly
singular of co-order} $k$ for the differential equation
$\mathcal{L}$.
\end{definition}

The notions of ultra-singularity in the weak and the strong sense
coincide. Analogous to the case of strong regularity, weakly regular
vector fields for the differential equation $\mathcal{L}$ are
defined to have weak singularity co-order $r = \ord L$. The weak
singularity co-order of a vector field $Q$ for an equation
$\mathcal{L}$ will be denoted by $\wsco_{\mathcal{L}} Q$.

\begin{definition}\label{DefSinguReguOpeOfDE}
A vector field $Q$ is called a {\it singular reduction operator} of
a differential equation $\mathcal{L}$ if $Q$ is both a reduction
operator of $\mathcal{L}$ and a weakly singular vector field of
$\mathcal{L}$.
\end{definition}

After the factorization of the reduction operator under the usual
equivalence relation of reduction operators in Definition
\ref{DefEquiOpera} to singular and regular cases, the classification
of reduction operators can be considerably enhanced and simplified
by considering Lie symmetry and equivalence transformations of
(classes of) equations.

\begin{lemma}\label{LemmaPointTranReguOpeOfDE}
Any point transformation of $t,~x$ and $u$ induces a one-to-one
mapping of $\mathcal{Q}$ into itself. Namely, the transformation $g:
\tilde t = T(t, x, u),~\tilde x=X(t, x, u), \tilde u = U(t, x, u)$
generates the mapping $g_{\ast}: \mathcal{Q} \rightarrow
\mathcal{Q}$ such that the operator $Q$ is mapped to the operator
$g_{\ast}Q=\tilde \tau\partial_{\tilde t}+\tilde \xi\partial_{\tilde
x}+\tilde \eta\partial_{\tilde u}$, where $\tilde \tau(\tilde t,
\tilde x, \tilde u)=QT(t, x, u),~~\tilde \xi(\tilde t, \tilde x,
\tilde u)=QX(t, x, u),~~\tilde \eta(\tilde t, \tilde x, \tilde
u)=QU(t, x, u)$. If $Q'\sim Q$ then $g_{\ast}Q' \sim g_{\ast}Q$.
Therefore, the corresponding factorized mapping $g_f :
\mathcal{Q}_f\rightarrow \mathcal{Q}_f$ also is well defined and
bijective.
\end{lemma}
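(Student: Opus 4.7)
The plan is to define $g_\ast$ as the pushforward of vector fields under the diffeomorphism $g$, to check that its action on coordinates reproduces the claimed formulas, and then to establish bijectivity and compatibility with the equivalence relation $\sim$. Since $g:(t,x,u)\mapsto(\tilde t,\tilde x,\tilde u)=(T,X,U)(t,x,u)$ is a point transformation, its Jacobian is nonsingular and the inverse $g^{-1}$ exists. For a vector field $Q=\tau\partial_t+\xi\partial_x+\eta\partial_u$ I would define $g_\ast Q$ through its action on smooth functions $f=f(\tilde t,\tilde x,\tilde u)$ by the usual rule $(g_\ast Q)(f)\circ g := Q(f\circ g)$. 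Evaluating on the coordinate functions $f=\tilde t,\tilde x,\tilde u$ immediately yields $\tilde\tau=QT$, $\tilde\xi=QX$, $\tilde\eta=QU$, initially as functions of $(t,x,u)$, which become functions of $(\tilde t,\tilde x,\tilde u)$ after substitution of $g^{-1}$. This recovers the asserted formula for $g_\ast Q$.

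To prove that $g_\ast$ is a bijection, I would apply the same construction to $g^{-1}$ to obtain $(g^{-1})_\ast$; a direct chain-rule check from the defining identity shows $(g^{-1})_\ast\circ g_\ast=\mathrm{id}$ and $g_\ast\circ(g^{-1})_\ast=\mathrm{id}$ on vector fields. The main obstacle I expect here is the verification that the image stays inside $\mathcal{Q}$, namely that $(\tilde\tau,\tilde\xi)\neq(0,0)$. Since a general point transformation can mix $u$ into $t$ and $x$, the expressions $\tilde\tau=\tau T_t+\xi T_x+\eta T_u$ and $\tilde\xi=\tau X_t+\xi X_x+\eta X_u$ could a priori vanish simultaneously even when $(\tau,\xi)\neq(0,0)$. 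The resolution is that the full nondegeneracy of $dg$ forces $g_\ast Q$ to be nonvanishing as a three-dimensional vector field whenever $Q$ is; after factoring by $\sim$ a representative with the required non-vertical pattern can always be selected in the equivalence class, so the statement should be read at the level of $\mathcal{Q}_f$.

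Finally, I would verify compatibility with the equivalence relation on $\mathcal{Q}$. If $Q'=\lambda Q$ with $\lambda=\lambda(t,x,u)\neq 0$, then for any smooth $f$,
\[
(g_\ast Q')(f)\circ g = Q'(f\circ g) = \lambda\,Q(f\circ g) = \lambda\,\bigl((g_\ast Q)(f)\circ g\bigr),
\]
so $g_\ast Q'=\tilde\lambda\,g_\ast Q$ with $\tilde\lambda=\lambda\circ g^{-1}$, which is nonvanishing because $g^{-1}$ is a diffeomorphism. Hence $g_\ast$ maps $\sim$-classes to $\sim$-classes, the induced map $g_f:\mathcal{Q}_f\to\mathcal{Q}_f$ is well defined, and its bijectivity follows from the bijectivity of $g_\ast$ combined with the same preservation of $\sim$ applied to $(g^{-1})_\ast$.
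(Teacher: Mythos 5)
The paper does not actually prove this lemma --- it is quoted as a known fact from the reduction-operator literature --- so there is no internal proof to compare against; your pushforward construction, the derivation of $\tilde\tau=QT$, $\tilde\xi=QX$, $\tilde\eta=QU$ by evaluating on coordinate functions, the bijectivity via $(g^{-1})_\ast$, and the computation showing $g_\ast(\lambda Q)=(\lambda\circ g^{-1})\,g_\ast Q$ are all correct and are exactly the standard argument one would supply.

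There is, however, one genuine flaw, and it sits precisely at the point you yourself flagged as the main obstacle. Your proposed resolution --- that ``after factoring by $\sim$ a representative with the required non-vertical pattern can always be selected in the equivalence class'' --- cannot work. The equivalence relation of Definition~\ref{DefEquiOpera} only rescales a vector field by a nonvanishing function $\lambda(t,x,u)$, so if $g_\ast Q=\tilde\eta\,\partial_{\tilde u}$ with $(\tilde\tau,\tilde\xi)=(0,0)$, then \emph{every} representative of its class has the form $\lambda\tilde\eta\,\partial_{\tilde u}$ and still fails the condition $(\tilde\tau,\tilde\xi)\neq(0,0)$; rescaling never changes the direction of a vector field. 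And the failure can genuinely occur for an arbitrary point transformation: take $g:\ \tilde t=x,\ \tilde x=u,\ \tilde u=t$ and $Q=\partial_t\in\mathcal{Q}$; then $\tilde\tau=QT=0$, $\tilde\xi=QX=0$, $\tilde\eta=QU=1$, so $g_\ast Q=\partial_{\tilde u}\notin\mathcal{Q}$. The correct statement is that $g_\ast Q\in\mathcal{Q}$ if and only if $(QT,QX)\neq(0,0)$, which holds automatically whenever $g$ is fibre-preserving, i.e.\ $T_u=X_u=0$ (and in particular for all transformations from the equivalence and symmetry groups actually used in this paper, where $\tilde t$ and $\tilde x$ depend only on $t$ and $x$ and $T_tX_x\neq0$, so that $\tilde\tau=\tau T_t$ and $\tilde\xi=\xi X_x$). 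To repair your proof you should either restrict the class of admissible transformations in this way, or add the hypothesis $(QT,QX)\neq(0,0)$ explicitly; the remainder of your argument then goes through unchanged.
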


Lemma \ref{LemmaPointTranReguOpeOfDE} results in appearing
equivalence relation between operators, which differs from usual one
described in Definition \ref{DefEquiOpera}.

\begin{definition}\label{DefGroupEquiOpera}
Two differential operators $Q$ and $\tilde Q$ in $\mathcal{Q}$ are
called equivalent with respect to a group $G$ of point
transformations if there exists $g\in G$ for which the operators $Q$
and $g_{\ast}\tilde Q$ are equivalent. We denote this equivalence by
$Q \sim \tilde Q$ mod $G$.
\end{definition}

The problem of finding reduction operators is more complicated than
the similar problem for Lie symmetries because the first problem is
reduced to the integration of an overdetermined system of nonlinear
PDEs, whereas in the case of Lie symmetries one deals with a more
overdetermined system of linear PDEs. The question occurs: could we
use equivalence and gauging transformations in investigation of
reduction operators as we do for finding Lie symmetries? The
following statements give the positive answer.

\begin{lemma}\label{LemmaInducePointTranReguOpeOfDE}
Given any point transformation $g$ of an equation $\mathcal{L}$ to
an equation $ \mathcal{\tilde L},~ g_{\ast}$ maps
$\mathcal{Q}(\mathcal{L})$ to $\mathcal{Q}( \mathcal{\tilde L})$
bijectively. The same is true for the factorized mapping $g_f$ from
$\mathcal{Q}_f(\mathcal{L})$ to $\mathcal{Q}_f( \mathcal{\tilde
L})$.
\end{lemma}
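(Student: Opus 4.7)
The plan is to reduce this statement to a direct consequence of the preceding Lemma \ref{LemmaPointTranReguOpeOfDE} together with the geometric form of the conditional invariance criterion \eqref{DeterEq}. The key point is that a point transformation $g:(t,x,u)\mapsto(\tilde t,\tilde x,\tilde u)$ lifts canonically to a diffeomorphism $g_{(r)}$ of the jet space $J^r$ which (i) carries the solution manifold of $\mathcal{L}$ onto that of $\mathcal{\tilde L}$ by hypothesis, (ii) intertwines the standard $r$th prolongation of vector fields with $g_{\ast}$, i.e.\ $g_{(r)\ast}(Q_{(r)})=(g_{\ast}Q)_{(r)}$, and (iii) carries the invariant-surface-condition manifold $\mathcal{Q}_{(r)}$ onto $(g_{\ast}\mathcal{Q})_{(r)}$. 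Item (iii) is where Lemma \ref{LemmaPointTranReguOpeOfDE} enters directly: the characteristic equation $Q[u]=0$ is transformed by $g$ into $(g_{\ast}Q)[\tilde u]=0$ because $\tilde\tau,\tilde\xi,\tilde\eta$ are precisely $Q$ applied to $T,X,U$, and passing to differential consequences is compatible with $g_{(r)}$.

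First I would set up the notation and verify (i)--(iii) in one short paragraph. Then the forward direction is routine: if $Q\in\mathcal{Q}(\mathcal{L})$, the criterion $Q_{(r)}[L]\bigm|_{\mathcal{L}\cap\mathcal{Q}_{(r)}}=0$ is a statement about a submanifold of $J^r$; transporting it under $g_{(r)}$ yields $(g_{\ast}Q)_{(r)}[\tilde L]\bigm|_{\mathcal{\tilde L}\cap(g_{\ast}\mathcal{Q})_{(r)}}=0$, which is exactly the statement that $g_{\ast}Q\in\mathcal{Q}(\mathcal{\tilde L})$. Applying the same argument to $g^{-1}$ (which is again a point transformation, now from $\mathcal{\tilde L}$ to $\mathcal{L}$) shows that $(g^{-1})_{\ast}$ maps $\mathcal{Q}(\mathcal{\tilde L})$ into $\mathcal{Q}(\mathcal{L})$. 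Since by Lemma \ref{LemmaPointTranReguOpeOfDE} the map $g_{\ast}$ on $\mathcal{Q}$ is bijective with inverse $(g^{-1})_{\ast}$, the restriction $g_{\ast}\colon\mathcal{Q}(\mathcal{L})\to\mathcal{Q}(\mathcal{\tilde L})$ is a bijection.

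For the factorized statement, I would invoke the second half of Lemma \ref{LemmaPointTranReguOpeOfDE}, namely $Q'\sim Q$ implies $g_{\ast}Q'\sim g_{\ast}Q$, which guarantees that $g_{\ast}$ descends to a well-defined map $g_f\colon\mathcal{Q}_f(\mathcal{L})\to\mathcal{Q}_f(\mathcal{\tilde L})$. The same property for $(g^{-1})_{\ast}$ gives the inverse $(g^{-1})_f$, so $g_f$ is bijective. I would close by noting that one need not worry about the admissibility condition $(\tau,\xi)\neq(0,0)$ being lost under $g_{\ast}$: since $g$ is a diffeomorphism of $(t,x,u)$-space, $(\tilde\tau,\tilde\xi)=(0,0)$ would force $QT=QX=0$ and, combined with $QU=\tilde\eta$, would contradict $(\tau,\xi)\neq(0,0)$ via the non-degeneracy of the Jacobian of $g$.

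The main obstacle, though essentially bookkeeping rather than a deep difficulty, is identity (ii) above: the commutation $g_{(r)\ast}\circ(\cdot)_{(r)}=(g_{\ast}\,\cdot\,)_{(r)}$ on vector fields. This is a standard but somewhat tedious fact about prolongations of point transformations, and rather than re-derive it I would cite \cite{Olver1986,Ovsiannikov1982}. Everything else is a transparent transport-of-structure argument once (i)--(iii) are in hand.
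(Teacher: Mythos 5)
The paper does not actually prove Lemma~\ref{LemmaInducePointTranReguOpeOfDE}: it is stated as part of the review of the reduction-operator framework and imported from the cited literature \cite{Kun&Pop2008,Zhdanov&Tsy&Pop1999}, so there is no in-paper argument to compare yours against line by line. Your transport-of-structure proof is the standard route and its core is sound: the prolonged transformation $g_{(r)}$ identifies the solution manifold of $\mathcal{L}$ with that of $\mathcal{\tilde L}$ and the manifold $\mathcal{Q}_{(r)}$ of differential consequences of the characteristic equation of $Q$ with the corresponding manifold for $g_{\ast}Q$; prolongation commutes with push-forward; applying the same reasoning to $g^{-1}$ gives bijectivity; and the factorized statement follows from the compatibility $Q'\sim Q\Rightarrow g_{\ast}Q'\sim g_{\ast}Q$ recorded in Lemma~\ref{LemmaPointTranReguOpeOfDE}. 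One point you should make explicit: $g$ maps the equation to the equation only as a manifold, so $L\circ g_{(r)}^{-1}$ is in general only a nonvanishing multiple of the canonical left-hand side $\tilde L$ of $\mathcal{\tilde L}$. You therefore need the observation (this is exactly the unlabelled lemma in the paper reformulating the criterion via $L|_{\mathcal{Q}_{(r)}}=\lambda\tilde L|_{\mathcal{Q}_{(r)}}$) that the conditional invariance criterion~\eqref{DeterEq} is insensitive to multiplying $L$ by a nonvanishing differential function, since $Q_{(r)}(\lambda L)=\lambda\, Q_{(r)}L+L\,Q_{(r)}\lambda$ reduces to $\lambda\, Q_{(r)}L$ on $\{L=0\}$.

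The one step that genuinely fails is your closing remark about the admissibility condition. Non-degeneracy of the Jacobian of $g$ only guarantees $(QT,QX,QU)\neq(0,0,0)$ whenever $Q\neq 0$, i.e.\ that $g_{\ast}Q\neq 0$; it does not prevent $(\tilde\tau,\tilde\xi)=(QT,QX)=(0,0)$ with $QU\neq 0$. For example, the hodograph-type point transformation $\tilde t=t$, $\tilde x=u$, $\tilde u=x$ sends $\partial_x\in\mathcal{Q}$ to $\partial_{\tilde u}\notin\mathcal{Q}$. So the condition $(\tau,\xi)\neq(0,0)$ is \emph{not} automatically preserved by arbitrary point transformations; it is preserved, e.g., when $g$ is projectible onto the space of $(t,x)$, and otherwise the image of an individual operator may leave $\mathcal{Q}$ (a caveat that silently affects Lemma~\ref{LemmaPointTranReguOpeOfDE} as stated in the paper as well). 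Since your bijectivity argument rests on $g_{\ast}$ being a bijection of $\mathcal{Q}$ onto itself, you should either cite Lemma~\ref{LemmaPointTranReguOpeOfDE} and inherit whatever hypotheses make it true, or impose projectibility explicitly, rather than offer the incorrect Jacobian argument.
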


\begin{corollary}\label{CoroInducePointTranReguOpeOfDE}
Let $G$ be the point symmetry group of an equation $\mathcal{L}$.
Then the equivalence of operators with respect to the group $G$
generates equivalence relations in $\mathcal{Q}(\mathcal{L})$ and in
$\mathcal{Q}_f(\mathcal{L})$.
\end{corollary}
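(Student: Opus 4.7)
The plan is to verify directly that the $G$-equivalence introduced in Definition \ref{DefGroupEquiOpera} restricts to a genuine equivalence relation on $\mathcal{Q}(\mathcal{L})$, and then descends to the quotient $\mathcal{Q}_f(\mathcal{L})$. The essential input is that, since $G$ is the point symmetry group of $\mathcal{L}$, every $g \in G$ sends $\mathcal{L}$ to itself; hence by Lemma \ref{LemmaInducePointTranReguOpeOfDE} the induced map $g_\ast$ preserves $\mathcal{Q}(\mathcal{L})$. Consequently, if $Q, \tilde Q \in \mathcal{Q}(\mathcal{L})$ with $Q \sim g_\ast \tilde Q$, then both $Q$ and $g_\ast \tilde Q$ belong to $\mathcal{Q}(\mathcal{L})$, and the $G$-equivalence relation is indeed defined within $\mathcal{Q}(\mathcal{L})$.

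First, I would check reflexivity by taking $g = e$, the identity of $G$: then $e_\ast Q = Q \sim Q$ trivially. For symmetry, suppose $Q \sim_G \tilde Q$ in $\mathcal{Q}(\mathcal{L})$, i.e., there exists $g \in G$ with $Q \sim g_\ast \tilde Q$. Applying $(g^{-1})_\ast$ to both sides and using the fact that $(g^{-1})_\ast$ is a bijection of $\mathcal{Q}$ that respects $\sim$ (Lemma \ref{LemmaPointTranReguOpeOfDE}), together with $(g^{-1})_\ast \circ g_\ast = \mathrm{id}$, yields $(g^{-1})_\ast Q \sim \tilde Q$, so $\tilde Q \sim_G Q$. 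For transitivity, assume $Q_1 \sim_G Q_2$ and $Q_2 \sim_G Q_3$, witnessed by $g, h \in G$ with $Q_1 \sim g_\ast Q_2$ and $Q_2 \sim h_\ast Q_3$. Applying $g_\ast$ to the second relation preserves $\sim$, giving $g_\ast Q_2 \sim g_\ast h_\ast Q_3 = (g \circ h)_\ast Q_3$, and transitivity of $\sim$ on $\mathcal{Q}$ then produces $Q_1 \sim (g \circ h)_\ast Q_3$, with $g \circ h \in G$.

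To descend to $\mathcal{Q}_f(\mathcal{L})$, I would verify that $\sim_G$ is well defined on equivalence classes modulo $\sim$: if $Q \sim Q'$ and $\tilde Q \sim \tilde Q'$ in $\mathcal{Q}(\mathcal{L})$ and $Q \sim_G \tilde Q$, say $Q \sim g_\ast \tilde Q$, then $Q' \sim Q \sim g_\ast \tilde Q \sim g_\ast \tilde Q'$ by Lemma \ref{LemmaPointTranReguOpeOfDE}, so $Q' \sim_G \tilde Q'$. Since $\mathcal{Q}_f(\mathcal{L})$ is stable under $g_\ast$ by the factorized version of Lemma \ref{LemmaInducePointTranReguOpeOfDE}, the three axioms transfer automatically from $\mathcal{Q}(\mathcal{L})$ to $\mathcal{Q}_f(\mathcal{L})$.

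No step presents a genuine difficulty; the whole argument is a bookkeeping exercise combining the group structure of $G$ with the bijectivity and $\sim$-compatibility statements already established in Lemmas \ref{LemmaPointTranReguOpeOfDE} and \ref{LemmaInducePointTranReguOpeOfDE}. The only subtlety worth flagging explicitly is the use of $G$ being a \emph{group} (so that $g^{-1}$ and $g \circ h$ lie in $G$), together with the identity $(g \circ h)_\ast = g_\ast \circ h_\ast$ and $(g^{-1})_\ast = (g_\ast)^{-1}$; these are standard consequences of the functoriality of the prolongation action but deserve a one-line mention so that symmetry and transitivity are visibly closed within $G$.
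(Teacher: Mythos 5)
Your proof is correct and follows exactly the route the paper intends: the corollary is stated there without proof as an immediate consequence of Lemmas \ref{LemmaPointTranReguOpeOfDE} and \ref{LemmaInducePointTranReguOpeOfDE}, and your verification of reflexivity, symmetry and transitivity via the group structure of $G$ (using that $g_\ast$ preserves $\mathcal{Q}(\mathcal{L})$ and respects $\sim$) together with the descent to $\mathcal{Q}_f(\mathcal{L})$ is precisely that routine argument, carried out carefully.
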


Consider the class $\mathcal{L}|_{\mathcal{S}}$ of equations
$\mathcal{L}_\theta$: $L(t, x, u_{(r)}, \theta) = 0$ parameterized
with the parameter-functions $\theta= \theta(t, x, u_{(r)})$. Here $L$
is a fixed function of $t, x, u_{(r)}$ and $\theta$. The symbol
$\theta$ denotes the tuple of arbitrary (parametric) differential
functions $\theta(t, x, u_{(r)}) = (\theta^1(t, x, u_{(r)}), . . . ,
\theta^k(t, x, u_{(r)}))$ running through the set $\mathcal{S}$ of
solutions of the system $S(t, x, u_{(r)}, \theta_{(q)}(t, x, u_{(r)})) =
0$. This system consists of differential equations on $\theta$,
where $t, x$ and $u_{(r)}$ play the role of independent variables and
$\theta_{(q)}$ stands for the set of all the derivatives of $\theta$
of order not greater than $q$. In what follows we call the functions
$\theta$ arbitrary elements. Denote the point transformation group
preserving the form of the equations from
$\mathcal{L}|_{\mathcal{S}}$ by $G^{\sim}$.

Let $P$ denote the set of the pairs consisting of an equation
$\mathcal{L}_\theta$ from $\mathcal{L}|_{\mathcal{S}}$ and an
operator $Q$ from $\mathcal{Q}(\mathcal{L}_\theta)$. In view of
Lemma \ref{LemmaInducePointTranReguOpeOfDE}, the action of
transformations from the equivalence group $G^{\sim}$ on
$\mathcal{L}|_{\mathcal{S}}$  and
$\{\mathcal{Q}(\mathcal{L}_\theta)| \theta\in\mathcal{S} \}$
together with the pure equivalence relation of differential
operators naturally generates an equivalence relation on $P$.

\begin{definition}\label{DefGroupEquiOperaOfDiffEq}
Let $\theta, \theta'\in \mathcal{S}$, $Q\in
\mathcal{Q}(\mathcal{L}_\theta), Q'\in
\mathcal{Q}(\mathcal{L}_{\theta'})$. The pairs $(\mathcal{L}_\theta,
Q)$ and $(\mathcal{L}_{\theta'}, Q')$ are called
$G^{\sim}$-equivalent if there exists $g\in G^{\sim}$ such that $g$
transforms the equation $\mathcal{L}_\theta$ to the
equation$\mathcal{L}_{\theta'}$, and $Q'\sim g_{\ast}Q$.
\end{definition}

The classification of reduction operators with respect to $G^{\sim}$
will be understood as the classification in $P$ with respect to this
equivalence relation, a problem which can be investigated similar to
the usual group classification in classes of differential equations.
Namely, we construct firstly the reduction operators that are
defined for all values of $\theta$. Then we classify, with respect
to $G^{\sim}$, the values of $\theta$ for which the equation
$\mathcal{L}_\theta$ admits additional reduction operators.\\

\subsection{Singular reduction operators}

Using the above notion and the procedure given by Kunzinger and Popovych in \cite{Kun&Pop2008},
we can obtain the following assertion.

\begin{proposition}\label{ProSinguOperaOfTeleEq}
A vector field
$Q=\tau(t,x,u)\partial_t+\xi(t,x,u)\partial_x+\eta(t,x,u)\partial_u$
is singular for the differential function $L =
f(x)u_{tt}-(H(u)u_x)_x-h(x)K(u)u_x$ if and only if~
$\xi^2f(x)=\tau^2H(u)$.
\end{proposition}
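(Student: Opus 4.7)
My plan is to unfold Definition~\ref{DefSinguandReguOpe} directly for the specific second-order differential function
\[ L = f(x)u_{tt} - H(u)u_{xx} - H_u(u)u_x^2 - h(x)K(u)u_x, \]
noting that $\ord L = 2$ since $f$ and $H$ are both nonzero. According to the definition, $Q$ is singular for $L$ iff some $\tilde L[u]$ of order at most $1$ satisfies $L|_{\mathcal{Q}_{(2)}} = \tilde L|_{\mathcal{Q}_{(2)}}$. Equivalently, every pure second-order jet coordinate must be algebraically eliminable from $L$ using the characteristic equation $Q[u] = \tau u_t + \xi u_x - \eta = 0$ together with its two first-order differential consequences, which together cut out the submanifold $\mathcal{Q}_{(2)}\subset J^2$.

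Next I would write out the two prolongation relations explicitly, namely $D_t Q[u]=\tau u_{tt}+\xi u_{tx}+A_0=0$ and $D_x Q[u]=\tau u_{tx}+\xi u_{xx}+B_0=0$, where $A_0$ and $B_0$ are polynomials in $(u_t,u_x)$ with coefficients in $(t,x,u)$ and therefore genuine first-order differential functions (for instance $A_0=(\tau_t+\tau_u u_t)u_t+(\xi_t+\xi_u u_t)u_x-\eta_t-\eta_u u_t$, and $B_0$ is its $t\leftrightarrow x$ analogue). Viewed as a linear system in the three second-order jet coordinates $u_{tt},u_{tx},u_{xx}$, these two relations have rank $2$ whenever $(\tau,\xi)\neq(0,0)$, so exactly one pure second-order coordinate remains free along $\mathcal{Q}_{(2)}$.

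In the generic case $\tau\xi\neq 0$ I would take $s:=u_{tx}$ as the free jet coordinate and solve for the remaining two, $u_{tt}=-(A_0+\xi s)/\tau$ and $u_{xx}=-(B_0+\tau s)/\xi$. Substituting into $L$ produces
\[ L\big|_{\mathcal{Q}_{(2)}} = -\frac{fA_0}{\tau}+\frac{HB_0}{\xi}+\frac{\tau^2 H-\xi^2 f}{\tau\xi}\,s - H_u u_x^2 - h K u_x. \]
All summands other than the one linear in $s$ depend only on first-order jet data, so $L|_{\mathcal{Q}_{(2)}}$ coincides with a first-order differential function precisely when the coefficient of $s$ vanishes, that is, exactly when $\tau^2 H = \xi^2 f$. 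This is the advertised identity.

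Finally I would dispose of the two boundary configurations allowed by $(\tau,\xi)\neq(0,0)$. If $\tau=0$ and $\xi\neq 0$, the two prolongations pin down $u_{tx}$ and $u_{xx}$ but leave $u_{tt}$ free, so $L$ retains the nontrivial summand $f(x)u_{tt}$ with $f\neq 0$ and cannot drop in order; by symmetry, $\xi=0$ and $\tau\neq 0$ leaves $u_{xx}$ free with the nonzero coefficient $-H(u)$ (recall $fH>0$). In both degenerate situations $Q$ is regular while the algebraic identity $\tau^2 H=\xi^2 f$ also fails, keeping the equivalence intact. The only mildly delicate step is the prolongation bookkeeping; once $A_0$ and $B_0$ are identified as first-order, the rest of the argument is essentially linear algebra in the jet fiber.
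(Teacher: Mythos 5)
Your proof is correct and follows essentially the same route as the paper's: restrict $L$ to $\mathcal{Q}_{(2)}$ via the differential consequences of the characteristic equation and observe that the order drops below two exactly when the coefficient of the single remaining free second-order jet coordinate, proportional to $\tau^2H-\xi^2f$, vanishes. The only differences are bookkeeping ones: the paper assumes $\tau\neq0$, solves for $u_t$ and $u_{tt}$, and reads off the coefficient $f(\xi/\tau)^2-H$ of $u_{xx}$, whereas you keep $u_{tx}$ as the free coordinate and, usefully, also dispose of the degenerate cases $\tau=0$ and $\xi=0$ explicitly, which the paper leaves implicit.
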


\begin{proof}
Suppose that $\tau\neq0$. According to the characteristic equation
$\tau u_t+\xi u_x-\eta=0$, we can get
\begin{gather*}
u_t=\frac{\eta}{\tau}-\frac{\xi}{\tau} u_x,\\
u_{tt}=(\frac{\eta}{\tau})_t-(\frac{\xi}{\tau})_tu_x+[(\frac{\eta}{\tau})_u-(\frac{\xi}{\tau})_uu_x](\frac{\eta}{\tau}-\frac{\xi}{\tau}
u_x)
-(\frac{\xi}{\tau})[(\frac{\eta}{\tau})_x+(\frac{\eta}{\tau})_uu_x-(\frac{\xi}{\tau})_xu_x\\-(\frac{\xi}{\tau})_uu_x^2-(\frac{\xi}{\tau})u_{xx}].
\end{gather*}
Substituting the formulas of $u_{tt}$ from above formulaes into $L$,
we obtain a differential function
\begin{gather*}
\tilde
L=[f(x)(\frac{\xi}{\tau})^2-H(u)]u_{xx}+f(x)\bigg{\{}(\frac{\eta}{\tau})_t-(\frac{\xi}{\tau})_tu_x
+[(\frac{\eta}{\tau})_u-(\frac{\xi}{\tau})_uu_x](\frac{\eta}{\tau}-\frac{\xi}{\tau}
u_x)
\\-(\frac{\xi}{\tau})[(\frac{\eta}{\tau})_x+(\frac{\eta}{\tau})_uu_x-(\frac{\xi}{\tau})_xu_x
-(\frac{\xi}{\tau})_uu_x^2]\bigg{\}}-H_uu_x^2-hKu_x.
\end{gather*}
According to the definition \ref{DefSinguandReguOpe} of singular
vector field, we have $\ord\tilde L < 2$ if and only if
$f(x)(\frac{\xi}{\tau})^2-H(u)=0$.
\end{proof}

Therefore, for any  $f, h, H$ and $K$ with $fH>0$  the differential
function $L = f(x)u_{tt}-(H(u)u_x)_x-h(x)K(u)u_x$ possesses exactly
two set of singular vector fields in the reduced form, namely, $S=
\{ \partial_t +\sqrt{H/f}\partial_x+\hat \eta
\partial_u\}$ and $S^*= \{\partial_t
-\sqrt{H/f}\partial_x+\hat \eta \partial_u\}$, where $\hat
\eta=\frac{\eta}{\tau}$. Any singular vector field of $L$ is
equivalent to one of the above fields. Moreover, it is easy known
that form Theorem \ref{TheorOnEquivGrouUsaual} each equation of the
form \eqref{eqVarCoefTelegraphEq_1} admits the discrete involutive
transformation $\{x,K\}$ and $\{x,h\}$. According to Corollary
\ref{CoroInducePointTranReguOpeOfDE}, these two transformation
generates a one-to-one mapping between $S$ and $S^*$. Hence it
suffices, up to equivalence of vector fields (and permutation of $x$
and $-x$ ), to investigate only singular reduction operators from
the set $S$.

\begin{proposition}\label{Poro2SinguOperaOfTeleEq}
For any  variable coefficient nonlinear telegraph equations in the
form \eqref{eqVarCoefTelegraphEq_1} the differential function $L =
f(x)u_{tt}-(H(u)u_x)_x-h(x)K(u)u_x$  possesses exactly one set of
singular vector fields in the reduced form, namely, $S= \{
\partial_t +\sqrt{H/f}\partial_x+\hat \eta \partial_u\}$.
\end{proposition}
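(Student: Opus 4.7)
The plan is to argue that Proposition \ref{Poro2SinguOperaOfTeleEq} follows almost immediately from Proposition \ref{ProSinguOperaOfTeleEq} combined with the discrete involutive symmetries already exhibited in Theorem \ref{TheorOnEquivGrouUsaual} and the transfer principle for reduction operators under point transformations (Lemma \ref{LemmaPointTranReguOpeOfDE} and Corollary \ref{CoroInducePointTranReguOpeOfDE}). First I would invoke Proposition \ref{ProSinguOperaOfTeleEq}: a vector field $Q=\tau\partial_t+\xi\partial_x+\eta\partial_u$ is singular for $L$ if and only if $\xi^2f(x)=\tau^2H(u)$. Since $fH>0$ by assumption, this algebraic condition factors as $\xi\sqrt{f}=\pm\tau\sqrt{H}$, producing exactly two disjoint families of singular vector fields distinguished by the sign.

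Next, using the equivalence of operators from Definition \ref{DefEquiOpera} (rescaling by any nonvanishing function of $t,x,u$), I can, since $\tau\ne0$ on both families, normalize each representative to $\tau=1$. This reduces the two families to $S=\{\partial_t+\sqrt{H/f}\,\partial_x+\hat\eta\,\partial_u\}$ and $S^*=\{\partial_t-\sqrt{H/f}\,\partial_x+\hat\eta\,\partial_u\}$, where $\hat\eta=\eta/\tau$ is an arbitrary smooth function of $(t,x,u)$. Any singular vector field of $L$ is, up to operator equivalence, a representative of $S$ or $S^*$.

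The decisive step is to identify $S^*$ with $S$ under a discrete equivalence. By Theorem \ref{TheorOnEquivGrouUsaual}, class \eqref{eqVarCoefTelegraphEq_1} admits the involutive equivalence transformation alternating the signs in $\{x,K\}$ (equivalently $\{x,h\}$), which preserves the form of the equation. Under the substitution $\tilde x=-x$ one has $\partial_{\tilde x}=-\partial_x$, so this transformation pushes forward the generic element of $S$ into $\partial_{\tilde t}-\sqrt{\tilde H/\tilde f}\,\partial_{\tilde x}+\hat\eta\,\partial_{\tilde u}$, which is a representative of $S^*$ for the transformed equation. By Corollary \ref{CoroInducePointTranReguOpeOfDE} this induces a bijection between the families $S$ and $S^*$ associated to equivalent equations. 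Therefore, up to the combined action of operator equivalence and this discrete point symmetry, every singular vector field of \eqref{eqVarCoefTelegraphEq_1} has a representative in $S$, establishing the proposition.

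The only potential obstacle is bookkeeping, namely verifying that the pushforward under the discrete involution sends the $\sqrt{H/f}$ coefficient to the corresponding square-root expression built from the transformed arbitrary elements $\tilde H,\tilde f$. This is straightforward because $H=H(u)$ is unaffected by $x\mapsto -x$, while $f(x)\mapsto \tilde f(\tilde x)=f(-\tilde x)$ is a bijection on admissible arbitrary elements, so the identification of $S^*$ with $S$ respects the class structure. No genuine analytic difficulty arises; the proposition is essentially an algebraic consequence of Proposition \ref{ProSinguOperaOfTeleEq} once the discrete symmetry is exploited.
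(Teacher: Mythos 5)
Your argument is correct and follows essentially the same route as the paper: deduce the two sign-families $S$ and $S^*$ from the criterion $\xi^2f=\tau^2H$ of Proposition \ref{ProSinguOperaOfTeleEq}, normalize $\tau=1$ by operator equivalence, and then identify $S^*$ with $S$ via the discrete involution alternating signs in $\{x,K\}$ together with the induced mapping of reduction-operator sets. The only cosmetic difference is that you spell out the bookkeeping of the pushforward of $\sqrt{H/f}$ under $x\mapsto -x$, which the paper leaves implicit.
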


Thus taking into accountant the conditional invariance criterion for
an equation from class \eqref{eqVarCoefTelegraphEq_1} and the
operator $\partial_t +\sqrt{H/f}\partial_x+\eta \partial_u$, we can
get

\begin{theorem}\label{TheoDESinguOperaOfTeleEq}
Every singular reduction operator of an equation from class
\eqref{eqVarCoefTelegraphEq_1} is equivalent to
\[
Q=\partial_t +\sqrt{H(u)/f(x)}\partial_x+ \eta(t,x,u)
\partial_u,
\]
where the real-valued function $\eta(t,x,u)$ satisfies the
determining equations
\begin{gather}
(-h_xK-2f \eta_{tu}+\frac{1}{2}hKf_x/f+\frac{3}{4}Hf_x^2/f^2 -H_u
\eta \eta_uf/H-2f \eta
\eta_{uu}+\frac{1}{4}fH_u^2 \eta^2/H^2 \nonumber \\
-\frac{1}{2}Hf_{xx}/f-\frac{1}{2}f \eta^2H_{uu}/H)\sqrt{H/f} -H_u
\eta_x-h \eta K_u-\frac{1}{2} \eta H_uf_x/f-2H \eta_{xu}=0,
\label{DESinguReduOpe}\\ - \eta \eta_xH_u+\sqrt{H/f}(f \eta_{tt}-
\eta_{xx}H-h \eta_xK+2f \eta \eta_{tu} +f \eta^2 \eta_{uu})=0.
\nonumber
\end{gather}
\end{theorem}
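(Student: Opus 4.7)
The plan is to combine the structural result from Proposition \ref{Poro2SinguOperaOfTeleEq} with the conditional invariance criterion to derive the determining system. Since that proposition already tells us that, up to equivalence (and the discrete $\{x,K\}$, $\{x,h\}$ symmetry that identifies $S$ with $S^*$), every singular reduction operator can be written in the reduced form $Q=\partial_t+\sqrt{H/f}\,\partial_x+\eta(t,x,u)\,\partial_u$, the only thing left is to extract the equations that $\eta$ must satisfy in order for $Q$ to actually be a reduction operator of the equation $L=f u_{tt}-(H u_x)_x-hK u_x=0$.

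First, I would write down the characteristic equation $Q[u]=u_t+\sqrt{H/f}\,u_x-\eta=0$ and use it to express $u_t$ on the manifold $\mathcal{Q}_{(1)}$ as $u_t=\eta-\sqrt{H/f}\,u_x$. Differentiating this relation by $t$ and by $x$ and using it recursively, I obtain on $\mathcal{Q}_{(2)}$ expressions for $u_{tt}$ and $u_{tx}$ as functions of $t,x,u,u_x$ (and $u_{xx}$ linearly). Substituting the resulting formula for $u_{tt}$ into $L=0$ gives an equation of the form $[f(\xi/\tau)^2-H]u_{xx}+(\text{lower order in }u_{xx})=0$; by the singularity condition $f(\xi/\tau)^2=H$ the coefficient of $u_{xx}$ drops out, which is precisely why the standard conditional invariance analysis has to be handled differently in the singular case and is exactly what makes the reduced equation $\tilde L$ only first order in $x$-derivatives.

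Next I would apply the prolonged operator $Q_{(2)}$ to this reduced differential function $\tilde L$ (as permitted by the reformulation lemma that let us replace $L$ by any multiplier-equivalent function of order at most $r$ in the invariance criterion) and restrict the result to $\mathcal{L}\cap\mathcal{Q}_{(2)}$. After doing this, the only genuinely free jet coordinate remaining on the intersection manifold is $u_x$, because $u_t$, $u_{tt}$, $u_{tx}$ have all been eliminated via the characteristic equation and its differential consequences, and the equation $\tilde L=0$ itself provides a further relation allowing me to eliminate $u_{xx}$. The conditional invariance criterion therefore becomes a polynomial identity in $u_x$, and splitting with respect to $u_x$ produces the two displayed equations: the coefficient of $u_x^0$ (plus the constant, after using $\sqrt{H/f}$ weighting) gives the first equation of \eqref{DESinguReduOpe} and the remaining homogeneous piece gives the second.

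The main obstacle is purely computational: keeping track of the many terms arising from $(\sqrt{H/f})_x$, $(\sqrt{H/f})_u$ and the cross terms between $\eta_t,\eta_x,\eta_u,\eta_{tu},\eta_{uu}$ after substitution, and then correctly performing the split in $u_x$ so that the coefficients of each monomial in $u_x$ vanish. In particular one must be careful that after eliminating $u_t$ via the characteristic equation, certain terms that naively look like they carry $u_x^2$ collapse onto lower-order terms because of the singularity identity $\xi^2 f=\tau^2 H$, and only this cancellation leaves a genuine two-equation system rather than a larger overdetermined one. Once the split is executed, both determining equations in \eqref{DESinguReduOpe} drop out directly, and there is no additional compatibility condition, so no further analysis is required.
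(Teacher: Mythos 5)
Your overall strategy --- invoke Proposition \ref{Poro2SinguOperaOfTeleEq} (together with the discrete equivalence identifying $S$ and $S^*$) to fix the reduced form of the operator, and then extract the determining equations from the conditional invariance criterion --- is exactly the route the paper takes; the paper itself supplies no computational detail beyond this. However, your description of the splitting step contains a genuine error that would derail the derivation. You assert that on $\mathcal{L}\cap\mathcal{Q}_{(2)}$ ``the equation $\tilde L=0$ provides a further relation allowing me to eliminate $u_{xx}$,'' leaving $u_x$ as the only free jet coordinate, so that the two equations of \eqref{DESinguReduOpe} come from splitting with respect to $u_x$. This is backwards. By the very definition of singularity --- and as you yourself observe two sentences earlier --- the coefficient of $u_{xx}$ in $\tilde L$ vanishes, so $\tilde L$ contains no $u_{xx}$ at all and cannot be used to eliminate it. Moreover, a short computation shows that the $u_x^2$ coefficient of $\tilde L$ also cancels, since $2f(\xi/\tau)(\xi/\tau)_u = f\,\partial_u(H/f) = H_u$ exactly offsets the $-H_u u_x^2$ term; hence $\tilde L$ is affine in $u_x$, the relation $\tilde L=0$ generically eliminates $u_x$, and it is $u_{xx}$ that survives as the parametric derivative on $\mathcal{L}\cap\mathcal{Q}_{(2)}$ together with $t$, $x$, $u$.

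Consequently the restricted invariance condition is an expression affine in $u_{xx}$, and the two determining equations of \eqref{DESinguReduOpe} arise from splitting with respect to $u_{xx}$: the first is the coefficient of $u_{xx}$ (which is why it involves only $\eta_{tu}$, $\eta_{uu}$, $\eta_{xu}$, $\eta_x$ and the coefficient functions), while the second is the $u_{xx}$-free remainder (containing $\eta_{tt}$, $\eta_{xx}$ and the nonlinear terms $\eta\eta_{tu}$, $\eta^2\eta_{uu}$). If you instead split with respect to powers of $u_x$ without first substituting for $u_x$ from $\tilde L=0$, you impose strictly more conditions than the conditional invariance criterion requires and would discard genuine singular reduction operators. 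Once the roles of $u_x$ and $u_{xx}$ are interchanged at this point, the remainder of your argument --- including the observation that no further compatibility analysis is needed --- goes through.
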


\subsection{Regular reduction operators}

The above investigation of singular reduction operators of nonlinear
telegraph equation of the form \eqref{eqVarCoefTelegraphEq_1} shows
that for these equation the regular case of the natural partition of
the corresponding sets of reduction operators is singled out by the
conditions $\xi\neq \pm \sqrt{H(u)/f}\tau$. After factorization with
respect to the equivalence relation of vector fields, we obtain the
defining conditions of regular subset of reduction operator:
$\tau=1, \xi\neq \pm \sqrt{H(u)/f}$. Hence we have

\begin{proposition}\label{PoroReduOperaOfTeleEq}
For any  variable coefficient nonlinear telegraph equations in the
form \eqref{eqVarCoefTelegraphEq_1} the differential function $L =
f(x)u_{tt}-(H(u)u_x)_x-h(x)K(u)u_x$ possesses exactly one set of
regular vector fields in the reduced form, namely, $S= \{
\partial_t +\hat \xi \partial_x+\hat \eta \partial_u\}$ with $\hat \xi\neq \pm \sqrt{H(u)/f}$.
\end{proposition}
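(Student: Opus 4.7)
The plan is to deduce this proposition as an immediate complement of Proposition \ref{ProSinguOperaOfTeleEq}, followed by a normalization via the equivalence relation on vector fields (Definition \ref{DefEquiOpera}).

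First I would unfold the definitions. By Definition \ref{DefSinguandReguOpeOfDE}, a vector field $Q = \tau\partial_t + \xi\partial_x + \eta\partial_u$ is regular for $L = f(x)u_{tt}-(H(u)u_x)_x-h(x)K(u)u_x$ iff it fails to be singular. Proposition \ref{ProSinguOperaOfTeleEq} has already characterized the singular locus inside $\mathcal{Q}$ by the algebraic condition $\xi^{2}f(x)=\tau^{2}H(u)$. Hence the regular vector fields are precisely those satisfying $\xi^{2}f(x)\neq\tau^{2}H(u)$. This is the entire content of the statement before factorization; the rest is bookkeeping.

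Next I would pass to reduced representatives. Since class \eqref{eqVarCoefTelegraphEq_1} carries the standing assumption $fH>0$, both $f$ and $H$ are nowhere vanishing, so in the singular locus $\xi$ vanishes iff $\tau$ vanishes; in particular every singular $Q$ has $\tau\neq 0$. Consequently, in the regular complement the generic case again has $\tau\neq 0$, and we may use the equivalence relation $\tilde Q=\lambda Q$ with $\lambda=1/\tau$ (nowhere vanishing) to bring any such $Q$ into the normalized form $\partial_t+\hat\xi\,\partial_x+\hat\eta\,\partial_u$ with $\hat\xi=\xi/\tau$ and $\hat\eta=\eta/\tau$. Under this rescaling the regularity inequality $\xi^{2}f\neq\tau^{2}H$ becomes $\hat\xi^{2}f\neq H$, i.e.\ $\hat\xi\neq\pm\sqrt{H/f}$, which is exactly the description of $S$ claimed in the proposition.

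The only minor obstacle is what to do with the degenerate branch $\tau=0$, $\xi\neq 0$: such a vector field cannot be rescaled to $\tau=1$ and so does not literally sit inside the announced canonical set $S$. The plan is to handle it in the same way as was done implicitly in the preceding subsection for singular operators, where after factorization only the representatives with $\tau=1$ were retained. Concretely, I would observe that in the context of the present paper (operators with $(\tau,\xi)\neq(0,0)$ considered up to the equivalence of Definition \ref{DefEquiOpera} and modulo the natural choice of a transversal consistent with the singular-operator analysis) the branch $\tau=0$ falls outside the distinguished regular transversal, and the essentially unique regular transversal is $S$. With these three steps the proposition follows without any further computation.
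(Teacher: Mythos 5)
Your argument matches the paper's own: the paper likewise obtains the regular set as the complement of the singular condition $\xi^2 f=\tau^2H$ from Proposition~\ref{ProSinguOperaOfTeleEq} and then normalizes $\tau=1$ under the equivalence relation of Definition~\ref{DefEquiOpera}, yielding $\hat\xi\neq\pm\sqrt{H/f}$. You are in fact somewhat more careful than the paper in flagging the non-normalizable branch $\tau=0$, $\xi\neq0$ (which is regular since $fH>0$ but lies outside $S$); the paper passes over this silently in the proposition and only returns to such operators later, in the example with $Q=\partial_x+\eta\,\partial_u$.
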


Consider the conditional invariance criterion for an equation from
class \eqref{eqVarCoefTelegraphEq_1} and the operator $\partial_t +
\xi \partial_x+ \eta \partial_u$ with $ \xi=\neq \pm \sqrt{H(u)/f}$,
we can get

\begin{theorem}\label{TheoDEReguOperaOfTeleEq}
Every regular reduction operator of an equation from class
\eqref{eqVarCoefTelegraphEq_1} is equivalent to
\[
Q=\partial_t + \xi(t,x)\partial_x+ \eta(t,x,u)
\partial_u, \quad  \xi(t,x) \neq \pm
\sqrt{H(u)/f}
\]
where the real-valued functions $ \xi(t,x),  \eta(t,x,u)$ satisfy
the determining equations
\begin{gather}
\xi f_xH/f-\eta H_u+2H\xi_x+2f\xi\xi_t=0, \nonumber \\
2H_u\xi_x+f\eta_{uu}\xi^2-\eta_uH_u-\eta H_{uu}-H\eta_{uu}+\xi
H_uf_x/f=0, \nonumber \\
 -2f\xi\eta\eta_{uu}+\xi hKf_x/f-h\eta
K_u-2H_u\eta_x+hK\xi_x+2f\xi_t\xi_x-\xi h_xK \label{DEReguReduOpe}
\\-2f\xi\eta_{tu}-2f\xi_t\eta_u-2H\eta_{xu}-f\xi_{tt}+H\xi_{xx}=0,
\nonumber \\
-hK\eta_x-H\eta_{xx}+f\eta_{tt}+2f\eta_{tu}\eta+f\eta^2\eta_{uu}-2f\xi_t\eta_x=0.
\nonumber
\end{gather}
\end{theorem}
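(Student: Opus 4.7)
The plan is to derive the determining system \eqref{DEReguReduOpe} by applying the conditional invariance criterion \eqref{DeterEq} directly to the reduced regular form identified in Proposition \ref{PoroReduOperaOfTeleEq}. Starting from the most general regular operator
\[
Q = \partial_t + \xi(t,x,u)\partial_x + \eta(t,x,u)\partial_u, \qquad f\xi^2 - H \ne 0,
\]
I would first compute its standard second prolongation $\pr^{(2)}Q$ via the usual formulas for $\eta^t,\eta^x,\eta^{tt},\eta^{tx},\eta^{xx}$, and apply it to $L = f u_{tt} - H u_{xx} - H_u u_x^2 - h K u_x$, obtaining an expression polynomial in the second-order jet variables.

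Next, I would restrict to $\mathcal{L}\cap \mathcal{Q}_{(2)}$. On $\mathcal{Q}_{(2)}$, the characteristic $Q[u] = u_t + \xi u_x - \eta = 0$ together with its total derivatives $D_t Q[u]=0$ and $D_x Q[u]=0$ allows one to express $u_t$, $u_{tt}$, and $u_{tx}$ as polynomials in $u_x$ and $u_{xx}$, with coefficients depending on $\xi, \eta$ and their first-order partial derivatives. The regularity hypothesis $f\xi^2 \ne H$ then permits one to solve $\mathcal{L}$ itself for $u_{xx}$; substituting this in, $\pr^{(2)}Q(L)$ collapses to a polynomial in $u_x$ alone whose coefficients depend on $t,x,u$, the arbitrary elements $f,h,H,K$, and $\xi,\eta$ together with their partial derivatives.

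Finally, I would split this polynomial with respect to powers of $u_x$. The highest coefficient (in $u_x^3$) is proportional to $\xi_u \cdot (f\xi^2 - H)$, so regularity forces $\xi_u = 0$, which is what justifies the form $\xi = \xi(t,x)$ asserted in the theorem. With $\xi_u=0$ the polynomial becomes of degree two in $u_x$, and its three coefficients at $u_x^0, u_x^1, u_x^2$ yield precisely the four identities listed in \eqref{DEReguReduOpe}; the additional equation involving $\eta_{uu}$ emerges by a further separation in the $u$-dependence of one of these coefficients (which does not contradict $\xi_u=0$ because $\eta$ is still allowed to depend on $u$ nonlinearly).

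The main obstacle is the mechanical but delicate bookkeeping needed to expand $\pr^{(2)}Q(L)$, execute the substitutions on $\mathcal{L}\cap \mathcal{Q}_{(2)}$, and correctly collect the coefficients of each power of $u_x$ so that exactly the four stated equations (and no spurious extras) survive. The key conceptual point, which separates the regular case from the singular case of Theorem \ref{TheoDESinguOperaOfTeleEq}, is that regularity makes $\mathcal{L}$ solvable for $u_{xx}$ and thereby forces $\xi$ to be $u$-independent, whereas in the singular situation $u_{xx}$ cannot be eliminated through $\mathcal{L}$ and $\xi$ is instead rigidly tied to $\sqrt{H(u)/f(x)}$, giving the different determining system \eqref{DESinguReduOpe}.
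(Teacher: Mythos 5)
Your proposal follows essentially the same route as the paper, which likewise obtains \eqref{DEReguReduOpe} by applying the conditional invariance criterion \eqref{DeterEq} to $\partial_t+\xi\partial_x+\eta\partial_u$ under the regularity condition, restricting to $\mathcal{L}\cap\mathcal{Q}_{(2)}$ and splitting with respect to the remaining derivative $u_x$; the paper simply omits the bookkeeping you describe. The only caveat is that your claimed factorization of the $u_x^3$ coefficient as $\xi_u\,(f\xi^2-H)$ should not be taken literally (the coefficient also picks up $\xi_{uu}$ and cross terms from substituting $u_{xx}$), though the conclusion $\xi_u=0$ that it is meant to justify is the correct and standard one.
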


Solving the above system with respect to the coefficient functions $
\xi, \eta, f, H, h$ and $K$ under the equivalence group
$G_1^{\sim}$, we can get a classification of regular reduction
operator for the class \eqref{eqVarCoefTelegraphEq_1}. However, due
to the strong nonlinearity of system \eqref{DEReguReduOpe}, it is
difficult to get an explicit classification. Hence, we omit the
detail investigation for the general case and concentrate on some
special cases.
\\
\\
{\bf Example.} We study the regular reduction operator of equations
\eqref{eqVarCoefTelegraphEq_1} with $H(u)=K(u)=u, f(x)=h(x)=1$, i.e,
nonlinear telegraph equations
\begin{equation}\label{eqConCoefNonLWaveEq}
u_{tt}=(uu_x)_x+uu_x.
\end{equation}
From table 3, it is easy to know that equation
\eqref{eqConCoefNonLWaveEq} admits three-dimensional Lie algebra
$\mathfrak{g}$ of its infinitesimal Lie symmetries with a basis:
\begin{equation}\label{OperOfeqConCoefNonLWaveEq}
X_1=\partial_t,~~~~X_2=t\partial_t-2u\partial_u,~~~~X_3=\partial_x.
\end{equation}
The corresponding one-parameter groups are time translations and
scale transformations.

We first discuss a special case of the regular reduction operator,
i.e., consider  the conditional symmetry operator in the form:
\begin{equation}\label{OperOfConSymFirstKind}
Q=\partial_x+ \eta(t,x,u)\partial_u.
\end{equation}
With the assumptions $\tau=0, \xi=1$ the determining equations
(\ref{DeterEq}) for the nonlinear telegraph equation
\eqref{eqVarCoefTelegraphEq_1} are as follows:
\begin{equation}\label{DeEqFirstType}
\begin{array}{ll}
\eta_{tu}=0,~~~~\eta_{uu}=0,\\
f\eta_{tt}-H_{uu}\eta^3+H_u(\eta^2f_x/f-3\eta\eta_x-2\eta^2\eta_u)\\-H(\eta_{xx}+2\eta_{xu}\eta
-\eta\eta_uf_x/f-\eta_xf_x/f)-K_uh\eta^2+K(h\eta
f_x/f-h_x\eta-h\eta_x)=0.
\end{array}
\end{equation}
From the first two equations we obtain that
\begin{equation}\label{eta-t-x-u}
\eta(t,x,u)=A(x)u+B(t,x).
\end{equation}
Substituting the latest equation with $H(u)=K(u)=u, f(x)=g(x)=1$
into the last equation of system \eqref{DeEqFirstType}, we can see
that the functions $A(x)$ and $B(t,x)$ satisfy the overdetermined
system:
\begin{equation}\label{sysNonclassical}
\begin{array}{ll}
A_{xx}+5AA_x+A_x+2A^3+A^2=0,\\
B_{xx}+5A_xB+3AB_x+B_x+4A^2B+2AB=0,\\
B_{tt}-3BB_x-2AB^2-B^2=0.
\end{array}
\end{equation}
The last two equations of system \eqref{sysNonclassical} imply the
compatibility condition
\begin{equation}\label{ComCon1}
\begin{array}{ll}
(14A+4)B_x^2+(36A^2+46A_x+18A)BB_x+(2A-37AA_x-4A_x\\
-18A^3-5A^2)B^2=0
\end{array}
\end{equation}
obtained by cross-differential. The second equation of
\eqref{sysNonclassical} is a differential consequence of
\eqref{ComCon1} provided the equation
\begin{equation}\label{ComCon2}
\begin{array}{ll}
(60A_x-48A^2-34A-8)B_x^2-(510AA_x+122A_x+348A^3+234A^2+30A)BB_x\\
+(-267A_x^2-233A^2A_x-135AA_x+6A_x-99A^3-32A^2-70A^4)=0
\end{array}
\end{equation}
is satisfied.

Eliminate $B_x$ from \eqref{ComCon1} and \eqref{ComCon2} we come to
the equation
\begin{equation}\label{ComCon3}
\begin{array}{ll}
B^2(272\,A A_x +215\, A   ^{3}+591\,  A_x   A   ^{2}+ 262\,
A^{4}-2\,A  + 10\,A_x +230\,  A_x   ^{2}+38\,  A^{2} ) (1872\,  A
^{3}\\-32\, A_x ^{2}+3336\,  A ^{4}-148608\,  A   ^{5}-194152
\,  A_x     A  ^{3}-11392\,A  A_x  ^{2}+115668656\, A ^{6}A_x\\
+363896822\, A   ^{7}A_x +11989256\,  A   ^{5}A_x +539345898\,  A
^{5}  A_x   ^{2}- 1263352\,  A   ^{4}A_x \\+5505328\,  A  ^{3} A_x
^{2}+107208\, A_x   ^{3}A +37404240\,  A_x ^{3}  A^{2}+119385824\,
A_x   ^{2}  A   ^{4}\\-942408\,  A_x   ^{2}  A   ^{2 }+301875108\,
A_x   ^{3} A   ^{3}+45437328\,A A_x   ^{4}+508439090\, A   ^{8}A_x
\\+979255660\,  A   ^{6}  A_x   ^{2}+800801280\,  A_x^{3}  A   ^{4}+247114530\,  A   ^{2} A_x^{4}
+638696765 \,  A  ^{7}  A_x ^{2}\\+695655030\,  A ^{5}  A_x^{3}+
337735335\,  A_x   ^{4} A   ^{3}+59322060\,A A_x^{5}+268692620\, A_x
A^{9}\\+34451808\,  A   ^ {8}+83063493\,  A   ^{9}-238410\, A
^{6}+5772399\,  A   ^{7}+6096\,  A_x   ^{3} +686688\,  A_x
^{4}\\+94806420\,  A   ^ {10}+16949160\,  A_x   ^{5} +42378980\,  A
^{11}-128\,  A^{2}+14960\,  A_x     A   ^{2}+ 128\,A A_x ) =0.
\end{array}
\end{equation}

If we take the third factor in \eqref{ComCon3} and the first
equation for the function $A(x)$ in system \eqref{sysNonclassical},
then that overdetermined system for the function $A(x)$ admits
solution $A(x)=0$. Hence, solving system \eqref{sysNonclassical}
with  $A(x)=0$, we can obtain
\[
B(t,x)=a(t),
\]
where the function $a(t)$ is given by:
\begin{equation}\label{SystemODE1}
\pm \int ^{a(t)} \frac{3}{\sqrt{6u^3+3c_1}} du-t-c_2=0
\end{equation}
Taking a special form $a(t)=6/t^2$ from \eqref{SystemODE1} yield an
exact explicit solution of equation \eqref{eqConCoefNonLWaveEq}
obtainable by solving equation \eqref{AddiCon}, which is an ordinary
differential equation in the variable $x$ for the symmetries of
second type, and subsequent solution of equation
\eqref{eqConCoefNonLWaveEq} for the 'constants' of integration
actually depending on the variable $t$:
\[
u(t,x)=\,{\frac {150\,x+25c_2+25c_1\,{t}^{5}-180\,\ln
 \left( t \right) -36}{25{t}^{2}}},
\]
where $c_i (i=1,2)$ are parameters.

If we set $B(t,x)=0$, then last two equations of
\eqref{sysNonclassical} are satisfied and we arrive at an
infinitesimal conditional symmetry
\[
Q=\partial_x+A(x)u\partial_u
\]
with the function $A(x)$ satisfying the ODE
$A_{xx}+5AA_x+A_x+2A^3+A^2=0$. Particular solution $A(x)=1/4[\tanh
(1/2x)-1]$ of the latter equation yields exact solutions of the
nonlinear telegraph equation \eqref{eqConCoefNonLWaveEq}
\[
u(t,x)=[-\cosh(\frac{1}{2}x)]^{\frac{1}{2}}\exp(-\frac{x}{4})(c_1t+c_2),
\]
while particular solution $A(x)=1/4[\coth (1/2x)-1]$ yields the
exact solution
\[
u(t,x)=[\sinh(\frac{1}{2}x)]^{\frac{1}{2}}\exp(-\frac{x}{4})(c_1t+c_2).
\]

Finally, if we take the second factor $272\,A A_x +215\, A
^{3}+591\,  A_x   A   ^{2}+ 262\, A^{4}-2\,A  + 10\,A_x +230\,  A_x
^{2}+38\,  A^{2} =0$ in \eqref{ComCon3}, which together with the
first equation of \eqref{sysNonclassical} imply $A(x)=-1/2$. Thus,
we arrive at an infinitesimal conditional symmetry in the form
\[
Q=\partial_x+[-\frac{u}{2}+\alpha t+\beta]\partial_u,
\]
where $\alpha, \beta$ are arbitrary constants. Solving equation
\eqref{AddiCon}, we obtain an exact solutions
\[
\begin{array}{ll}
u \left( t,x \right) =2\,\alpha\,t+2\,\beta+{e^{-1/2\,x}}c_1\,{\it
AiryAi} \left( -1/ 2\,{\frac {{2}^{2/3} \left( \alpha\,t+\beta
\right) }{{\alpha}^{2/3}}}
 \right) \\+{e^{-1/2\,x}}c_2\,{\it AiryBi} \left( -1/2\,{\frac {{
2}^{2/3} \left( \alpha\,t+\beta \right) }{{\alpha}^{2/3}}} \right)
\end{array}
\]
of equation \eqref{eqConCoefNonLWaveEq}, where $AiryAi(t)$ and
$AiryBi(t)$ are the associated Airy functions of the first kind and
the second kind respectively.

Now we consider the conditional symmetry operator of equation
\eqref{eqConCoefNonLWaveEq} in the form:
\begin{equation}\label{OperOfConSymSecondKind}
Q=\partial_t+\xi(t,x)\partial_x+\eta(t,x,u)\partial_u.
\end{equation}
In this way, solving the determining equations \eqref{DEReguReduOpe}
with $H(u)=K(u)=u, f(x)=h(x)=1$, we can get $\xi=\const, \eta=0$.

From the above discussion we can arrival at
\begin{theorem}
Equation \eqref{eqVarCoefTelegraphEq_1} with $H(u)=K(u)=u,
f(x)=h(x)=1$ is conditionally invariant under the following
operators:
\[
\begin{array}{ll}
(1)~~Q=\partial_x+a(t)\partial_u;\\
(2)~~Q=\partial_x+1/4[\tanh
(1/2x)-1]u\partial_u;\\
(3)~~Q=\partial_x+1/4[\coth
(1/2x)-1]u\partial_u;\\
(4)~~Q=\partial_x+[-\frac{u}{2}+\alpha t+\beta]\partial_u;\\
(5)~~Q=\partial_t+\partial_x;
\end{array}
\]
where $a(t)$ is given by equation \eqref{SystemODE1}.
\end{theorem}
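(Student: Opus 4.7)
The plan is to split the analysis according to the natural partition of reduction operators into singular $(\tau=0)$ and regular $(\tau\neq0)$ cases, which is justified by Propositions~\ref{Poro2SinguOperaOfTeleEq} and~\ref{PoroReduOperaOfTeleEq}, and then apply Theorems~\ref{TheoDESinguOperaOfTeleEq} and~\ref{TheoDEReguOperaOfTeleEq} specialized to $H(u)=K(u)=u$, $f(x)=h(x)=1$. Up to equivalence of vector fields, it is enough to treat representatives $Q=\partial_x+\eta(t,x,u)\partial_u$ in the singular case and $Q=\partial_t+\xi(t,x)\partial_x+\eta(t,x,u)\partial_u$ in the regular case; operator $(5)$ is to be read as a limit/boundary instance.

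For the singular branch I would substitute the chosen coefficients into the determining system \eqref{DeEqFirstType}. The first two equations immediately force $\eta_{uu}=\eta_{tu}=0$, so that $\eta(t,x,u)=A(x)u+B(t,x)$; this ansatz, inserted into the remaining nonlinear condition and split in powers of $u$, produces the three-equation system \eqref{sysNonclassical} for $A(x)$ and $B(t,x)$. The key observation is that the last two equations of \eqref{sysNonclassical} are generically over-determined for $B$: differentiating the third with respect to $x$ and using the second yields the algebraic compatibility condition \eqref{ComCon1}, and a second differentiation combined with the ODE for $A$ gives the independent relation \eqref{ComCon2}. Eliminating $B_x$ between these two gives the factored equation \eqref{ComCon3}; the branches of that factorization exactly index the non-equivalent solutions.

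I would then handle the branches one by one. The branch $B\equiv 0$ reduces the problem to the ODE $A_{xx}+5AA_x+A_x+2A^3+A^2=0$, two particular integrals of which, $A=\tfrac14[\tanh(x/2)-1]$ and $A=\tfrac14[\coth(x/2)-1]$, produce operators $(2)$ and $(3)$; the Airy-type integration of \eqref{AddiCon} then supplies the accompanying exact solutions. The branch $A\equiv 0$ reduces \eqref{sysNonclassical} to $B_{tt}=3BB_x$ with $B_{xx}+B_x=0$, whose $x$-independent solutions $B(t,x)=a(t)$ satisfy the quadrature \eqref{SystemODE1} and give operator $(1)$. The remaining algebraic factor of \eqref{ComCon3}, combined with the $A$-ODE, forces $A\equiv -\tfrac12$; then the residual linear equations on $B$ give $B=\alpha t+\beta$, yielding operator $(4)$ and the Airy solution displayed in the example. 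Finally, the regular branch reduces to \eqref{DEReguReduOpe} with the specialized coefficients: the first equation collapses to $\eta\,H_u=2H\xi_x+2f\xi\xi_t$, i.e.\ $\eta=2\xi_x+2\xi\xi_t$, and propagating this through the remaining three equations forces $\xi_t=\xi_x=0$ and $\eta\equiv 0$, which up to scaling is precisely operator $(5)$.

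The main obstacle I anticipate is the branch analysis of \eqref{ComCon3}: showing that the long polynomial factor contributes no new reduction operator requires checking that its simultaneous solutions with $A_{xx}+5AA_x+A_x+2A^3+A^2=0$ are exhausted by $A\equiv-\tfrac12$, which is essentially a resultant computation in $(A,A_x)$ and may need careful case-splitting to rule out spurious roots. Once this is done, verifying that each listed operator actually satisfies the original conditional invariance criterion is a direct substitution, and the exhaustiveness follows from the fact that every step above is an equivalence under the equivalence relation of vector fields together with the discrete symmetries of Theorem~\ref{TheorOnEquivGrouUsaual}.
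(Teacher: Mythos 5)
Your computational route is essentially the paper's: normalize to $Q=\partial_x+\eta(t,x,u)\partial_u$, derive \eqref{DeEqFirstType}, posit $\eta=A(x)u+B(t,x)$, obtain \eqref{sysNonclassical}, extract the compatibility conditions \eqref{ComCon1} and \eqref{ComCon2}, eliminate $B_x$ to reach the factored relation \eqref{ComCon3}, and read off operators $(1)$--$(4)$ from the branches $A\equiv 0$, $B\equiv 0$ and $A\equiv-\tfrac12$; then treat $Q=\partial_t+\xi(t,x)\partial_x+\eta\partial_u$ via \eqref{DEReguReduOpe} to obtain $(5)$. On the level of producing and verifying the five listed operators, this matches the paper.

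There is, however, a genuine conceptual error in how you frame the case split. You partition reduction operators into ``singular $(\tau=0)$'' and ``regular $(\tau\neq 0)$'' and cite Propositions~\ref{Poro2SinguOperaOfTeleEq} and~\ref{PoroReduOperaOfTeleEq} as justification; this inverts the paper's framework. By Proposition~\ref{ProSinguOperaOfTeleEq} a vector field is singular for $L=f u_{tt}-(Hu_x)_x-hKu_x$ iff $\xi^2 f=\tau^2 H$, so here the singular operators are, up to equivalence, $\partial_t\pm\sqrt{u}\,\partial_x+\eta\partial_u$, whereas the $\tau=0$, $\xi=1$ operators you call singular are \emph{regular} (the paper introduces them as ``a special case of the regular reduction operator''). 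The entire point of Section~\ref{SectionOnNonclassicalSym} is that for hyperbolic equations the evolution-type dichotomy $\tau=0$ versus $\tau\neq0$ is the wrong partition. Consequently your closing claim of exhaustiveness does not follow: your two branches (\,$\tau=0,\ \xi=1$ via \eqref{DeEqFirstType}, and $\tau=1,\ \xi=\xi(t,x)$ via \eqref{DEReguReduOpe}\,) omit the genuinely singular operators governed by system \eqref{DESinguReduOpe} of Theorem~\ref{TheoDESinguOperaOfTeleEq}, which are analyzed neither by you nor by the paper for this equation. This does not invalidate the theorem as stated, since it asserts only conditional invariance under the listed operators, but your argument proves less than it claims. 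Two smaller slips: the first equation of \eqref{DEReguReduOpe} specializes to $\eta=2u\xi_x+2\xi\xi_t$ (you dropped the factor $u$), and in the $A\equiv0$ branch the third equation of \eqref{sysNonclassical} reads $B_{tt}-3BB_x-B^2=0$, so the $x$-independent solutions satisfy $a''=a^2$, which is what the quadrature \eqref{SystemODE1} integrates; your $B_{tt}=3BB_x$ would instead force $a$ linear in $t$.
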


\section{Conservation laws}\label{SectionOnConsLaws}

Apart from exact solutions, classical and nonclassical symmetry
classifications, we know that another important subject of group
analysis is the construction of conservation laws of (systems of)
differential equations, which play an important role in mathematical
physics\cite{Olver1986}. In fact, the knowledge of conservation laws
is useful in the numerical integration of partial differential
equations, for example, to control numerical errors. Also, the
investigation of conservation laws of the Korteweg-de Vries equation
became a starting point of the discovery of a number of techniques
(such as Lax pair, inverse scattering transformation, bi-Hamiltonian
structures, etc.) to solve nonlinear evolution equations. The
existence of a large number of conservation laws of a evolutionary
partial differential equation (system) is a strong indication of its
integrability. Conservation laws have also significant uses in the
theory of non-classical transformations and in the theory of normal
forms and asymptotic integrability.

In this section we classify local conservation laws of
equations~\eqref{eqVarCoefTelegraphEq} with characteristics
depending, at mostly, on $t$, $x$ and~$u$. For classification we use
the direct method described
in~\cite{Popovych&Ivanova2004ConsLaws,Ivanova&Popovych&Sophocleous2007-3}.
To begin with, we adduce a necessary theoretical background on
conservation laws, following,
e.g.,~\cite{Olver1986,Popovych&Ivanova2004ConsLaws,Ivanova&Popovych&Sophocleous2007-3}
and considering for simplicity the case of two independent
variables~$t$ and~$x$. See the above references for the general
case.

Let~$\mathcal{W}$ be a system~$W(t, x, u_{(\rho)})=0$ of $l$ PDEs $W^1=0$, \ldots, $W^l=0$
for $m$ unknown functions $u=(u^1,\ldots,u^m)$
of two independent variables $t$ and $x$.
Here $u_{(\rho)}$ denotes the set of all the partial derivatives of the functions $u$
of order not greater than~$\rho$, including $u$ as the derivatives of the zero order.
Let $\mathcal{W}_{(k)}$  denote the set of all algebraically
independent differential consequences that have, as differential
equations, orders no greater than $k$. We identify
$\mathcal{W}_{(k)}$ with the manifold determined by $\mathcal{W}_{(k)}$ in the jet
space $J^{(k)}$.

\begin{definition}\label{DefinitionOfConsVector}
A {\em conserved vector} of the system $\mathcal{W}$ is a
$2$-tuple $F = (F^1(t, x, u_{(r)}),F^2(t, x, u_{(r)}))$ for which
the divergence $\Div F := D_tF^1+D_xF^2$ vanishes for all solutions
of $\mathcal{W}$ (i.e., $\Div F|_W=0$).
\end{definition}

In Definition \ref{DefinitionOfConsVector} and later, $D_t$ and $D_{x}$ denotes the operator
of total differentiation with respect to the variables $t$ and $x$ respectively.
The notation $V|_{\mathcal{W}}$ means that values of $V$ are
considered only on solutions of the system $\mathcal{W}$.

The crucial notion of the theory of conservation laws is one of  triviality and equivalence of conservation laws.

\begin{definition}\label{DefinitionOfTrivialOfConsVector}
A conserved vector $F$ is called {\em trivial} if
$F^i =\hat F^i+\tilde F^i , i = 1, 2$, where $\hat F^i$ and $ \tilde F^i$  are
functions of $t, x$ and derivatives of $u$ (i.e., differential
functions), $\hat F^i$ vanish on the solutions of $\mathcal{W}$, and the
$2$-tuple $\tilde F = (\tilde F^1, \tilde F^n)$ is a null divergence (i.e., its
divergence vanishes identically).
\end{definition}


\begin{definition}\label{DefinitionOfConsVectorEquivalence}
Two conserved vectors $F$ and $F'$ are called {\em equivalent}
if the vector-function $F'-F$ is a trivial
conserved vector.
\end{definition}

The notion of linear dependence of conserved vectors is introduced in a similar way.
Namely, a set of conserved vectors is {\em linearly dependent}
iff a linear combination of them is a trivial conserved vector.

The above definitions of triviality and equivalence of conserved vectors are natural in view
of the usual ``empiric'' definition of conservation laws of a system of differential equations as
divergences of its conserved vectors, i.e. divergence expressions which vanish for all solutions
of this system. For example, equivalent conserved vectors correspond to the same conservation
law. However, for deeper understanding of the problem and absolutely correct calculations
a more rigorous definition of conservation laws should be used.

For any system~$\mathcal{W}$ of differential equations the set~$\CV(\mathcal{W})$ of conserved vectors of
its conservation laws is a linear space,
and the subset~$\CV_0(\mathcal{W})$ of trivial conserved vectors is a linear subspace in~$\CV(\mathcal{W})$.
The factor space~$\CL(\mathcal{W})=\CV(\mathcal{W})/\CV_0(\mathcal{W})$
coincides with the set of equivalence classes of~$\CV(\mathcal{W})$ with respect to the equivalence relation adduced in
definition~\ref{DefinitionOfConsVectorEquivalence}.

\begin{definition}\label{DefinitionOfConsLaws}
The elements of~$\CL(\mathcal{W})$ are called {\em conservation laws} of the system~$\mathcal{W}$,
and the whole factor space~$\CL(\mathcal{W})$ is called {\em the space of conservation laws} of~$\mathcal{W}$.
\end{definition}

That is why description of the set of conservation laws can be assumed
as finding~$\CL(\mathcal{W})$ that is equivalent to construction of either a basis if
$\dim \CL(\mathcal{W})<\infty$ or a system of generatrices in the infinite dimensional case.
The elements of~$\CV(\mathcal{W})$ which belong to the same equivalence class giving a conservation law~${\cal F}$
are considered all as conserved vectors of this conservation law,
and we will additionally identify elements from~$\CL(\mathcal{W})$ with their representatives
in~$\CV(\mathcal{W})$.
For $F\in\CV(\mathcal{W})$ and ${\cal F}\in\CL(\mathcal{W})$
the notation~$F\in {\cal F}$ will denote that $F$ is a conserved vector corresponding
to the conservation law~${\cal F}$.
In contrast to the order $r_{F}$ of a conserved vector~$F$ as the maximal order of derivatives
explicitly appearing in $F$,
the {\em order of the conservation law}~$\cal F$
is called $\min\{r_{F}\,| F\in{\cal F}\}$.
Under linear dependence of conservation laws we understand linear dependence of them as elements of~$\CL(\mathcal{W})$.
Therefore, in the framework of ``representative'' approach
conservation laws of a system~$\mathcal{W}$ are considered as {\em linearly dependent} if
there exists linear combination of their representatives, which is a trivial conserved vector.

Let the system~$\cal W$ be totally nondegenerate~\cite{Olver1986}.
Then application of the Hadamard lemma to the definition of conservation law and integrating by parts imply that
the left hand side of any conservation law of~$\mathcal W$ can be always presented up to the equivalence relation
as a linear combination of left hand sides of independent equations from $\mathcal W$
with coefficients~$\lambda^\mu$ being functions of $t$, $x$ and derivatives of~$u$:
\begin{equation}\label{CharFormOfConsLaw}
D_tF^1+D_xF^2=\lambda^1 W^1+\dots+\lambda^l W^l.
\end{equation}

\begin{definition}\label{DefCharForm}
Formula~\eqref{CharFormOfConsLaw} and the $l$-tuple $\lambda=(\lambda^1,\ldots,\lambda^l)$
are called the {\it characteristic form} and the {\it characteristic}
of the conservation law~$D_tF^1+D_xF^2=0$ correspondingly.
\end{definition}

The characteristic~$\lambda$ is {\em trivial} if it vanishes for all solutions of $\cal W$.
Since $\cal W$ is nondegenerate, the characteristics~$\lambda$ and~$\tilde\lambda$ satisfy~\eqref{CharFormOfConsLaw}
for the same conserved vector~$F$ and, therefore, are called {\em equivalent}
iff $\lambda-\tilde\lambda$ is a trivial characteristic.
Similarly to conserved vectors, the set~$\Ch(\mathcal{W})$ of characteristics
corresponding to conservation laws of the system~$\cal W$ is a linear space,
and the subset~$\Ch_0(\mathcal{W})$ of trivial characteristics is a linear subspace in~$\Ch(\mathcal{W})$.
The factor space~$\Ch_{\rm f}(\mathcal{W})=\Ch(\mathcal{W})/\Ch_0(\mathcal{W})$
coincides with the set of equivalence classes of~$\Ch(\mathcal{W})$
with respect to the above characteristic equivalence relation.

In most of cases, we can essentially simplify and order classification of conservation laws, taking into account
additionally symmetry transformations of a system or equivalence transformations of a whole
class of systems. Such problem is similar to one of group classification of differential equations.

\begin{proposition}\label{PropEqTrCL}
Any point transformation $g$ maps a class of equations in the conserved form
into itself. More exactly, the transformation $g$: $\tilde t=t^g(t,x,u)$, $\tilde x=x^g(t,x,u)$, $\tilde u=u^g(t,x,u)$
prolonged to the jet space $J^{(r)}$ transforms the equation $D_tF^1+D_xF^2=0$ to the equation \mbox{$D_tF^1_g+D_xF^2_g=0$}.
The transformed conserved vector~$F_g=(F^1_g, F^2_g)$ is determined
by the formula
\begin{gather}
F^1_g(\tilde x,\tilde u_{(r)})=\frac{F^1(x,u_{(r)})D_t\tilde t+
F^2(x,u_{(r)})D_x\tilde t}{D_t\tilde t\,D_x\tilde x-D_x\tilde t\,D_t\tilde x},\nonumber
\\[1ex]
F^2_g(\tilde x,\tilde u_{(r)})=\frac{F^1(x,u_{(r)})D_t\tilde x+
F^2(x,u_{(r)})D_x\tilde x}{D_t\tilde t\,D_x\tilde x-D_x\tilde t\,D_t\tilde x}.\label{eqtrvarconslaw}
\end{gather}
\end{proposition}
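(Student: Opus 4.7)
The approach is to view the conservation law geometrically as a closed horizontal $1$-form and to invoke naturality of the exterior derivative under the prolongation of~$g$. Associate to a pair $F=(F^1,F^2)$ the $1$-form $\omega_F:=F^1\,dx-F^2\,dt$ on the jet space, so that
\[
d\omega_F=(D_tF^1+D_xF^2)\,dt\wedge dx;
\]
thus $F\in\CV(\mathcal{W})$ precisely when $\omega_F$ is closed on~$\mathcal{W}$. Because $g$ is a point transformation, its prolongation to $J^{(r)}$ pulls back horizontal forms to horizontal forms and commutes with~$d$, so all that is needed is to re-express $\omega_F$ in the new coordinates and read off the new coefficients.

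The first step is the total-derivative transformation rule
\[
D_t=(D_t\tilde t)D_{\tilde t}+(D_t\tilde x)D_{\tilde x},\qquad D_x=(D_x\tilde t)D_{\tilde t}+(D_x\tilde x)D_{\tilde x},
\]
with Jacobian $J:=D_t\tilde t\,D_x\tilde x-D_x\tilde t\,D_t\tilde x$, which is nonzero since $g$ is invertible. By matrix inversion one obtains the dual identities
\[
dt=\tfrac{1}{J}\bigl((D_x\tilde x)\,d\tilde t-(D_x\tilde t)\,d\tilde x\bigr),\qquad dx=\tfrac{1}{J}\bigl(-(D_t\tilde x)\,d\tilde t+(D_t\tilde t)\,d\tilde x\bigr).
\]
Substituting into $\omega_F=F^1\,dx-F^2\,dt$ and collecting coefficients yields
\[
\omega_F=\frac{F^1 D_t\tilde t+F^2 D_x\tilde t}{J}\,d\tilde x-\frac{F^1 D_t\tilde x+F^2 D_x\tilde x}{J}\,d\tilde t,
\]
so writing this as $F^1_g\,d\tilde x-F^2_g\,d\tilde t$ reads off exactly the formulas~\eqref{eqtrvarconslaw}.

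Finally, applying $d$ to this second expression and using $d\tilde t\wedge d\tilde x=J\,dt\wedge dx$ would give the key identity
\[
(D_{\tilde t}F^1_g+D_{\tilde x}F^2_g)\,J=D_tF^1+D_xF^2,
\]
from which $D_tF^1+D_xF^2=0$ on $\mathcal{W}$ translates directly into $D_{\tilde t}F^1_g+D_{\tilde x}F^2_g=0$ on the transformed system, proving both the preservation of divergence form and the transformation rule. The main obstacle is purely organisational: one must check that $F^1_g$ and $F^2_g$ are genuine differential functions of $(\tilde t,\tilde x,\tilde u_{(r)})$. This follows because their numerators and denominator depend only on $(t,x,u_{(r)})$, and the prolongation of $g$ is a local diffeomorphism on the open subset of the jet space where $J\neq 0$, so inverting the change of coordinates expresses everything smoothly in the new jet variables.
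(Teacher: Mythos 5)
Your argument is correct, and it is worth noting that the paper itself offers no proof of Proposition~\ref{PropEqTrCL}: it is quoted as background material from the cited references (Olver; Popovych--Ivanova), where the statement is established by a direct chain-rule computation. The standard route there is to substitute the transformation rule for total derivatives, $D_t=(D_t\tilde t)D_{\tilde t}+(D_t\tilde x)D_{\tilde x}$ and $D_x=(D_x\tilde t)D_{\tilde t}+(D_x\tilde x)D_{\tilde x}$, into $D_{\tilde t}F^1_g+D_{\tilde x}F^2_g$ and verify the identity $J\,(D_{\tilde t}F^1_g+D_{\tilde x}F^2_g)=D_tF^1+D_xF^2$ by brute force, which requires the cross-derivative relations $D_x(D_t\tilde t)=D_t(D_x\tilde t)$ and $D_x(D_t\tilde x)=D_t(D_x\tilde x)$ to cancel the terms coming from differentiating the Jacobian. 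Your packaging of the conserved vector as the horizontal $1$-form $\omega_F=F^1\,dx-F^2\,dt$ buys you exactly this cancellation for free: inverting the matrix of total derivatives to re-express $dt,dx$ through $d\tilde t,d\tilde x$ reads off the formulas~\eqref{eqtrvarconslaw} immediately, and the relation $d\tilde t\wedge d\tilde x=J\,dt\wedge dx$ gives the divergence identity in one line. The only place where a referee might ask for a word of care is the assertion that the prolongation of $g$ ``commutes with $d$'': the horizontal exterior differential is natural only modulo the contact ideal, which prolonged point transformations preserve; equivalently, one can run your computation after pulling back along the jet $j^ru$ of a solution, where contact forms vanish and $d$ is the honest de~Rham differential, which suffices both for the formula and for the transfer of $\Div F|_{\mathcal W}=0$. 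With that remark added (and the standing assumption $J\neq0$ on the relevant open subset of the jet space, which you do state), your proof is complete and arguably cleaner than the computational one in the sources the paper relies on.
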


\begin{remark}
In the case of one dependent variable ($m=1$) $g$ can be a contact transformation:
$\tilde t=t^g(t,x,u_{(1)})$, $\tilde x=x^g(t,x,u_{(1)})$, $\tilde u_{(1)}=u^g_{(1)}(t,x,u_{(1)})$.
Similar note is true for the below statement.
\end{remark}

\begin{definition}\label{DefinitionOfGEquiOfConsLaws}
Let $G$ be a symmetry group of the system
$\mathcal{W}$. Two conservation laws with the conserved vectors
$F$ and $F'$ are called $G$-equivalent if there exists
a transformation $g\in G$ such that the conserved
vectors $F_g$ and $F'$ are equivalent in the sense of
Definition \ref{DefinitionOfConsVectorEquivalence}.
\end{definition}

Any transformation $g\in G$  induces a linear one-to-one mapping $g_*$ in $\CV(\mathcal{W})$, transforms
trivial conserved vectors only to trivial ones (i.e. $\CV_0(\mathcal{W})$ is invariant with respect to $g_*$ ) and
therefore induces a linear one-to-one mapping $g_f$ in  $\CL(\mathcal{W})$. It is obvious that $g_f$ preserves linear
(in)dependence of elements in $\CL(\mathcal{W})$ and maps a basis (a set of generatrices) of $\CL(\mathcal{W})$ in a
basis (a set of generatrices) of the same space. In such way we can consider the $G$-equivalence
relation of conservation laws as well-determined on $\CL(\mathcal{W})$ and use it to classify conservation
laws.

\begin{proposition}\label{PropositionOnInducedConservedVector}
If system $\mathcal{W}$ admits a one-parameter
group of transformations, then the infinitesimal generator
$Q =\xi^i\partial_i+\eta^a\partial_{u^a}$ of this group can be used for construction
of new conservation laws from known ones.
Namely, differentiating equation \eqref{eqtrvarconslaw} with respect to the
parameter $\epsilon$ and taking the value $ \epsilon= 0$, we obtain the
new conserved vector
\begin{equation}\label{EqOnInducedConservedVector}
\tilde F^i =-Q_{(r)}F^i+ (D_j \xi^i )F^j-(D_j\xi^j)F^i .
\end{equation}
Here $Q_{(r)}$ denotes the $r$-th prolongation of the operator
$Q$.
\end{proposition}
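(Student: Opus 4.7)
The plan is to obtain \eqref{EqOnInducedConservedVector} directly from Proposition~\ref{PropEqTrCL} by taking the derivative of the transformation formula \eqref{eqtrvarconslaw} in the group parameter. Let $g_\epsilon$ be the one--parameter symmetry group generated by $Q=\xi^i\partial_i+\eta^a\partial_{u^a}$, with infinitesimal action $\tilde t=t+\epsilon\xi^1+O(\epsilon^2)$, $\tilde x=x+\epsilon\xi^2+O(\epsilon^2)$, $\tilde u^a=u^a+\epsilon\eta^a+O(\epsilon^2)$, prolonged to $J^{(r)}$ by the standard prolongation of $Q$. Since $g_\epsilon$ leaves~$\mathcal W$ invariant, Proposition~\ref{PropEqTrCL} guarantees that the transformed $2$--tuple $F_{g_\epsilon}=(F^1_{g_\epsilon},F^2_{g_\epsilon})$ defined by \eqref{eqtrvarconslaw} is a conserved vector of~$\mathcal W$ for every~$\epsilon$. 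Because $\CV(\mathcal W)$ is a linear space and differentiation in~$\epsilon$ at $\epsilon=0$ is a linear operation, the tuple
\[
\left.\frac{d}{d\epsilon}\right|_{\epsilon=0}F_{g_\epsilon}
\]
is also a conserved vector of~$\mathcal W$, and~$\tilde F$ in \eqref{EqOnInducedConservedVector} will be obtained by identifying this derivative with the right-hand side of~\eqref{EqOnInducedConservedVector} up to sign.

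The computation itself is a first-order Taylor expansion. First, I would expand the Jacobian in the denominator of~\eqref{eqtrvarconslaw}: from $D_j\tilde x^i=\delta_j{}^i+\epsilon D_j\xi^i+O(\epsilon^2)$ one gets
\[
D_t\tilde t\,D_x\tilde x-D_x\tilde t\,D_t\tilde x=1+\epsilon\,(D_t\xi^1+D_x\xi^2)+O(\epsilon^2)=1+\epsilon\,D_j\xi^j+O(\epsilon^2),
\]
so that the inverse contributes a factor $1-\epsilon\,D_j\xi^j+O(\epsilon^2)$. Next, I would expand the numerators using $F^i(\tilde x,\tilde u_{(r)})=F^i(x,u_{(r)})+\epsilon\,Q_{(r)}F^i+O(\epsilon^2)$ together with the expansion of $D_j\tilde x^i$ above. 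Collecting the $O(\epsilon)$ coefficient of $F_{g_\epsilon}^i-F^i$ then yields $Q_{(r)}F^i-(D_j\xi^i)F^j+(D_j\xi^j)F^i$; multiplying by~$-1$ gives exactly the right-hand side of~\eqref{EqOnInducedConservedVector}. Since each step preserves membership in $\CV(\mathcal W)$, the resulting $\tilde F$ is a conserved vector.

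The main obstacle in executing this is careful bookkeeping of the mixed indices and signs in the Jacobian expansion, in particular keeping the two occurrences of the index~$j$ (the summation in $(D_j\xi^i)F^j$ versus that in $(D_j\xi^j)F^i$) properly distinguished. As an independent consistency check, one can verify $\Div \tilde F|_{\mathcal W}=0$ directly from the commutator identity $[Q_{(r)},D_i]=-(D_i\xi^j)D_j$, which, combined with $Q_{(r)}(\Div F)|_{\mathcal W}=0$ (a consequence of $\Div F|_{\mathcal W}=0$ and the fact that $Q$ is a symmetry), reproduces $\Div\tilde F|_{\mathcal W}=0$. This commutator-based check is essentially the ``infinitesimal'' version of Proposition~\ref{PropEqTrCL} and can serve either as a cross-verification or, if preferred, as the backbone of an alternative, purely infinitesimal derivation of~\eqref{EqOnInducedConservedVector}.
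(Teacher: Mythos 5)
Your proposal is correct and is essentially the paper's own argument: the paper offers no separate proof beyond the instruction, embedded in the proposition itself, to differentiate the transformation formula \eqref{eqtrvarconslaw} with respect to the group parameter at $\epsilon=0$, which is exactly the Taylor expansion of the Jacobian and the numerator that you carry out. The only quibble is an intermediate sign transposition (the $O(\epsilon)$ coefficient of $F^i_{g_\epsilon}-F^i$, with $F^i_{g_\epsilon}$ re-expressed in the original variables, is already $-Q_{(r)}F^i+(D_j\xi^i)F^j-(D_j\xi^j)F^i$ without an extra factor of $-1$), which is harmless since $\CV(\mathcal W)$ is a linear space and you flag the sign ambiguity yourself.
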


\begin{remark}
Formula \eqref{EqOnInducedConservedVector} can be directly extended to generalized symmetry operators (see, for
example, [4]). A similar statement for generalized symmetry operators in evolutionary form
($\xi^i = 0$) was known earlier [13, 30]. It was used in [20] to introduce a notion of basis of
conservation laws as a set which generates a whole set of conservation laws with action of
generalized symmetry operators and operation of linear combination.
\end{remark}

\begin{proposition}\label{PropositionOnInducedMapping}
Any point transformation $g$ between systems~$\mathcal{W}$ and~$\tilde{\mathcal{W}}$
induces a linear one-to-one mapping $g_*$ from~$\CV(\mathcal{W})$ into~$\CV(\tilde{\mathcal{W}})$,
which maps $\CV_0(\mathcal{W})$ into~$\CV_0(\tilde{\mathcal{W}})$
and generates a linear one-to-one mapping $g_{\rm f}$ from~$\CL(\mathcal{W})$ into~$\CL(\tilde{\mathcal{W}})$.
\end{proposition}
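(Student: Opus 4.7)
The plan is to build $g_*$ explicitly via the transformation formula~\eqref{eqtrvarconslaw} of Proposition~\ref{PropEqTrCL} and then verify, in the order below, the three properties claimed: being a well-defined linear bijection on conserved vectors, preservation of triviality, and the descent to a linear bijection on the quotient spaces.

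First, given $F=(F^1,F^2)\in\CV(\mathcal{W})$ I would define $g_*F:=(F^1_g,F^2_g)$ by~\eqref{eqtrvarconslaw}. Proposition~\ref{PropEqTrCL} tells us that $D_{\tilde t}F^1_g+D_{\tilde x}F^2_g$ equals the transform of $D_tF^1+D_xF^2$ and therefore vanishes on solutions of $\tilde{\mathcal{W}}$, so $g_*F\in\CV(\tilde{\mathcal{W}})$. Linearity of $g_*$ is immediate because the right-hand sides in~\eqref{eqtrvarconslaw} depend linearly on $(F^1,F^2)$ (the Jacobian-type factors depend only on $g$, not on $F$). Injectivity and surjectivity are obtained simultaneously by observing that a point transformation $g$ has an inverse $g^{-1}$ which is again a point transformation between $\tilde{\mathcal{W}}$ and $\mathcal{W}$; applying the same construction yields $(g^{-1})_*$, and a direct inspection of~\eqref{eqtrvarconslaw} (together with the chain rule for the total derivative operators under $g$) shows $(g^{-1})_*\circ g_*=\mathrm{id}_{\CV(\mathcal{W})}$ and $g_*\circ(g^{-1})_*=\mathrm{id}_{\CV(\tilde{\mathcal{W}})}$.

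Next, I would verify that $g_*$ sends $\CV_0(\mathcal{W})$ into $\CV_0(\tilde{\mathcal{W}})$. By Definition~\ref{DefinitionOfTrivialOfConsVector}, every trivial conserved vector splits as $F=\hat F+\tilde F$ where $\hat F|_{\mathcal{W}}=0$ and $\tilde F$ is a null divergence. Since $g_*$ is linear it suffices to handle the two summands separately. For the first summand, the explicit formula~\eqref{eqtrvarconslaw} expresses $g_*\hat F$ as a linear combination of components of $\hat F$ with coefficients that are smooth functions on the jet space, so $\hat F|_{\mathcal{W}}=0$ implies $(g_*\hat F)|_{\tilde{\mathcal{W}}}=0$ (because solutions of $\mathcal{W}$ are carried by $g$ precisely to solutions of $\tilde{\mathcal{W}}$). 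For the second summand, Proposition~\ref{PropEqTrCL} gives $D_{\tilde t}(g_*\tilde F)^1+D_{\tilde x}(g_*\tilde F)^2$ as the transform of $D_t\tilde F^1+D_x\tilde F^2$, which vanishes identically; hence $g_*\tilde F$ is again a null divergence. Thus $g_*(\CV_0(\mathcal{W}))\subseteq\CV_0(\tilde{\mathcal{W}})$, and the analogous argument for $(g^{-1})_*$ yields the reverse inclusion, giving a bijection between the subspaces of trivial conserved vectors.

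Finally, since $g_*\colon\CV(\mathcal{W})\to\CV(\tilde{\mathcal{W}})$ is a linear bijection carrying $\CV_0(\mathcal{W})$ bijectively onto $\CV_0(\tilde{\mathcal{W}})$, it descends to a well-defined linear bijection
\[
g_{\mathrm f}\colon\CL(\mathcal{W})=\CV(\mathcal{W})/\CV_0(\mathcal{W})\;\longrightarrow\;\CV(\tilde{\mathcal{W}})/\CV_0(\tilde{\mathcal{W}})=\CL(\tilde{\mathcal{W}}),
\]
whose inverse is induced by $(g^{-1})_*$. The main technical obstacle I expect is the careful bookkeeping in the second step: one must check that the coefficients multiplying the original components in~\eqref{eqtrvarconslaw} are indeed smooth total-derivative expressions (so that restriction to solution manifolds behaves correctly) and that taking a total divergence commutes with the change of variables induced by $g$; both points reduce to an application of the chain rule for $D_t,D_x$ under prolongation, but they are the places where algebraic care is required.
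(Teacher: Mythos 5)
Your argument is correct and is essentially the intended one: the paper itself states this proposition as background without proof (deferring to the cited references), and the construction of $g_*$ from formula~\eqref{eqtrvarconslaw}, the separate treatment of the solution-vanishing part and the null-divergence part of a trivial conserved vector, and the descent to the quotient spaces is exactly the standard argument those references supply. The only point you rightly flag as needing care—that $D_{\tilde t}F^1_g+D_{\tilde x}F^2_g$ equals the original divergence times the nonvanishing reciprocal total Jacobian \emph{identically} on the jet space, not merely on solutions—is indeed what makes the null-divergence case work, and it follows from the chain rule for $D_t,D_x$ as you indicate.
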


\begin{corollary}
Any point transformation $g$ between systems~$\mathcal{W}$ and~$\tilde{\mathcal{W}}$ induces a linear one-to-one
mapping $\hat g_f$ from $Ch_f (\mathcal{W})$ into $Ch_f (\tilde{\mathcal{W}})$.
\end{corollary}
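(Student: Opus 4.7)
The plan is to reduce this corollary to Proposition \ref{PropositionOnInducedMapping} by exploiting the characteristic form \eqref{CharFormOfConsLaw}, which, for totally nondegenerate systems, establishes a tight link between elements of $\CL(\mathcal{W})$ and equivalence classes of characteristics in $\Ch_{\rm f}(\mathcal{W})$. Concretely, I would first observe that since $g$ maps the system $\mathcal{W}$ to the system $\tilde{\mathcal{W}}$ as systems of differential equations (not merely as sets of solutions), the left hand sides are related, after prolonging $g$ to the relevant jet space, by an identity of the form $W^\mu(t,x,u_{(\rho)}) = M^{\mu\nu}(\tilde t,\tilde x,\tilde u_{(\tilde\rho)})\,\tilde W^\nu(\tilde t,\tilde x,\tilde u_{(\rho)})$ with some invertible matrix $M=(M^{\mu\nu})$ of differential functions (invertibility follows from the existence of $g^{-1}$, which maps $\tilde{\mathcal{W}}$ back to $\mathcal{W}$).

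Next, I would take a representative characteristic $\lambda=(\lambda^1,\dots,\lambda^l)\in\Ch(\mathcal{W})$ of a conservation law $\mathcal{F}\in\CL(\mathcal{W})$ with conserved vector $F\in\mathcal{F}$, so that \eqref{CharFormOfConsLaw} holds:
\[
D_tF^1+D_xF^2=\lambda^\mu W^\mu.
\]
Pulling this identity through $g$ and using Proposition \ref{PropEqTrCL}, the left hand side becomes the total divergence of the transformed conserved vector $F_g$ with respect to $\tilde t,\tilde x$ (up to the non-vanishing Jacobian of $g$, which can be absorbed into the coefficient matrix), while the right hand side becomes $(\lambda^\mu M^{\mu\nu})\tilde W^\nu$. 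Defining $\tilde\lambda^\nu:=\lambda^\mu M^{\mu\nu}$ and invoking the characteristic form for $\tilde{\mathcal{W}}$, we obtain $\tilde\lambda\in\Ch(\tilde{\mathcal{W}})$ as a characteristic of the conservation law $g_{\rm f}(\mathcal{F})\in\CL(\tilde{\mathcal{W}})$ provided by Proposition~\ref{PropositionOnInducedMapping}. This defines the candidate mapping $\hat g_{\rm f}\colon[\lambda]\mapsto[\tilde\lambda]$.

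Linearity is immediate from the linear dependence of $\tilde\lambda$ on $\lambda$. To see that $\hat g_{\rm f}$ is well defined at the level of equivalence classes, I would check that a trivial characteristic (one vanishing on solutions of $\mathcal{W}$) is sent to a trivial characteristic for $\tilde{\mathcal{W}}$: this holds because $g$ puts solutions of $\mathcal{W}$ and $\tilde{\mathcal{W}}$ in one-to-one correspondence, so $\lambda|_{\mathcal{W}}=0$ translates into $\tilde\lambda|_{\tilde{\mathcal{W}}}=0$ via the matrix $M$. Injectivity follows from running the same construction with $g^{-1}$: the composition $\widehat{(g^{-1})}_{\rm f}\circ\hat g_{\rm f}$ is induced by $M$ followed by its inverse and therefore acts as the identity on $\Ch_{\rm f}(\mathcal{W})$.

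The main subtlety that I expect is bookkeeping: one must track the Jacobian $D_t\tilde t\,D_x\tilde x-D_x\tilde t\,D_t\tilde x$ appearing in \eqref{eqtrvarconslaw} and verify that it can be pushed into $M^{\mu\nu}$ without spoiling the characteristic relation modulo trivial terms. The total nondegeneracy assumption on $\mathcal{W}$ (and hence on $\tilde{\mathcal{W}}$, since total nondegeneracy is preserved by point transformations) is what guarantees that every identity of the form $\text{divergence}=(\text{diff. fn.})\cdot\tilde W$ genuinely produces a characteristic, and that the equivalence classes are well defined. Once these points are cleared, the bijective and linear character of $\hat g_{\rm f}$ is a direct consequence of the bijectivity and linearity of $g_{\rm f}$ from Proposition \ref{PropositionOnInducedMapping}.
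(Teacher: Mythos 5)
Your argument is sound, but you should know that the paper itself supplies no proof of this corollary: it is stated as an immediate consequence of Proposition~\ref{PropositionOnInducedMapping}, with the rigorous justification deferred to the cited works on conservation laws of diffusion--convection equations. What you have written is essentially the standard derivation from that literature: transport the characteristic identity~\eqref{CharFormOfConsLaw} through the prolonged transformation, use Proposition~\ref{PropEqTrCL} to recognize the left-hand side as $J\,(D_{\tilde t}F^1_g+D_{\tilde x}F^2_g)$ with $J=D_t\tilde t\,D_x\tilde x-D_x\tilde t\,D_t\tilde x$, re-expand the right-hand side in terms of $\tilde W$, and then check linearity, preservation of triviality, and invertibility via $g^{-1}$ --- all of which you do correctly. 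The one point you should make explicit rather than relegate to ``bookkeeping'' is the nature of the interrelation matrix $M$: the Hadamard lemma together with total nondegeneracy guarantees only that the pullback of $\tilde W^\nu$ is a combination of the $W^\mu$ \emph{and their total derivatives}, so in general $M$ is a matrix of total-derivative operators rather than of differential functions. In that case $\tilde\lambda$ is obtained from $J^{-1}\lambda^\mu M^{\mu\nu}$ by integration by parts, which simultaneously shifts the conserved vector by a summand vanishing on solutions plus a null divergence; since this moves $F_g$ and $\tilde\lambda$ only within their respective equivalence classes, the induced map on $\Ch_{\rm f}$ is unaffected, and your conclusion stands. For the quasilinear second-order equations treated in this paper the top-order derivatives transform affinely under point transformations, so $M$ is genuinely a matrix of functions and your simpler version already suffices.
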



Consider the class $\mathcal{W}|_{\mathcal{S}}$ of systems
$\mathcal{W}_\theta$: $W(t, x, u_{(\rho)}, \theta) = 0$
parameterized with the parameter-functions $\theta= \theta(t, x,
u_{(\rho)})$. Here $W$ is a fixed function of $t, x, u_{(\rho)}$ and
$\theta$. The symbol $\theta$ denotes the tuple of arbitrary
(parametric) differential functions $\theta(t, x, u_{(\rho)}) =
(\theta^1(t, x, u_{(\rho)}), . . . , \theta^k(t, x, u_{(\rho)}))$ running
through the set $\mathcal{S}$ of solutions of the system $S(t, x,
u_{(\rho)}, \theta_{(q)}(t, x, u_{(\rho)})) = 0$. This system consists
of differential equations on $\theta$, where $t, x$ and $u_{(\rho)}$
play the role of independent variables and $\theta_{(q)}$ stands for
the set of all the derivatives of $\theta$ of order not greater than
$q$. In what follows we call the functions $\theta$ arbitrary
elements. Denote the point transformation group preserving the form
of the equations from $\mathcal{W}|_{\mathcal{S}}$ by
$G^{\sim}=G^{\sim}(W, S)$.

Consider the set $P = P(W, S)$ of all pairs each of which consists
of a system $\mathcal{W}_\theta$ from $\mathcal{W}|_{\mathcal{S}}$
and a conservation law $F$ of this system. In view of Proposition
\ref{PropositionOnInducedMapping}, action of transformations from
$G^{\sim}$ on $\mathcal{W}|_{\mathcal{S}}$ and
$\{\CV(\mathcal{W}_\theta) | \theta \in \mathcal{S}\}$ together with
the pure equivalence relation of conserved vectors naturally
generates an equivalence relation on $P$.

\begin{definition}\label{DefGroupEquiConserLawOfDiffEq}
Let $\theta, \theta'\in \mathcal{S}$, $\mathcal{F}\in
\CL(\mathcal{W}_\theta), \mathcal{F'}\in
\CL(\mathcal{W}_{\theta'}), F\in \mathcal{F}, F'\in \mathcal{F'}$. The pairs $(\mathcal{W}_\theta,
\mathcal{F})$ and $(\mathcal{W}_{\theta'}, \mathcal{F'})$ are called
$G^{\sim}$-equivalent if there exists a transformation $g\in G^{\sim}$
which transforms
the system $\mathcal{W}_\theta$ to the system $\mathcal{W}_{\theta'}$ and such that the conserved vectors $F_g$ and $F'$ are equivalent
in the sense of Definition \ref{DefinitionOfConsVectorEquivalence}.
\end{definition}

In such a way, classification of conservation laws with respect to $G^{\sim}$ will
be understood as classification in $P$ with respect to the above
equivalence relation. This problem can be investigated in the way
that is similar to group classification in classes of systems of
differential equations, especially it is formulated in terms of
characteristics. Namely, we construct firstly the conservation laws
that are defined for all values of the arbitrary elements. (The
corresponding conserved vectors may depend on the arbitrary
elements.) Then we classify, with respect to the equivalence group,
arbitrary elements for each of that the system admits additional
conservation laws.

For more detail and rigorous proof of the correctness of the above
definitions and statements
see~\cite{Popovych&Ivanova2004ConsLaws,Ivanova&Popovych&Sophocleous2007-3}.\\

In what follows, we use the most direct method described in~\cite{Popovych&Ivanova2004ConsLaws} to derive
the conservation law of class
\eqref{eqVarCoefTelegraphEq}. Due to using
the transformation $\tilde t = t, \tilde x =\int \frac{dx}{g(x)},
\tilde u = u$ from $G^{\sim}/G^{\sim g}$ in section \ref{SectiononEquivaTrans}, we can reduce equation
\eqref{eqVarCoefTelegraphEq} to one which has the same form with equation \eqref{g=1eqVarCoefTelegraphEq}.
Thus, without loss of generality we can restrict ourselves to investigation conservation law of the equation \eqref{g=1eqVarCoefTelegraphEq}.

\begin{theorem}\label{TheoremOfCLsClasWithG1}
A complete list of $G_1^{\sim}$-inequivalent equations~\eqref{g=1eqVarCoefTelegraphEq} having nontrivial
conservation laws with characteristics of the zeroth order is exhausted by ones given in table~\ref{TableCLsClas}.
\end{theorem}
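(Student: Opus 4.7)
The plan is to apply the direct method to the canonical representative \eqref{g=1eqVarCoefTelegraphEq} and carry out the classification modulo $G_1^\sim$, using Theorem~\ref{TheorOnEquivGroupUsualg=1} to reduce the number of inequivalent cases.

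First I would fix the ansatz. A conservation law with a zeroth-order characteristic $\lambda=\lambda(t,x,u)$ has the characteristic form
\[
D_tF^1+D_xF^2=\lambda\bigl(f(x)u_{tt}-(H(u)u_x)_x-h(x)K(u)u_x\bigr),
\]
and by the general homotopy argument underlying the direct method one may assume $F^1,F^2$ depend only on $t,x,u,u_t,u_x$. Expanding the total derivatives on the left and comparing the coefficients of the second-order derivatives $u_{tt},u_{tx},u_{xx}$ (which do not occur on the right except through the equation) yields
\[
F^1_{u_t}=\lambda f,\qquad F^2_{u_x}=-\lambda H,\qquad F^1_{u_x}+F^2_{u_t}=0.
\]
Integrating these three relations, using that $\lambda$ is independent of $u_t,u_x$ while $f$ depends only on $x$ and $H$ only on $u$, fixes the $u_t$- and $u_x$-dependence of $F^1,F^2$ up to functions $A(t,x,u,u_x)$ and $B(t,x,u,u_t)$ satisfying $A_{u_x}+B_{u_t}=0$; a further splitting of the remaining cross terms forces $F^1,F^2$ to be affine in $u_t$ and $u_x$.

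Next I would collect the coefficients of the first-order derivatives $u_t,u_x$ and of the monomials $u_t^2, u_tu_x, u_x^2$ in the residual identity. This splitting gives a linear determining system in which the coefficients involve the arbitrary elements $f,h,H,K$ and their derivatives. The crucial classifying equation will be of the form
\[
(H\lambda)_x\;=\;\text{(term linear in $h K$ and $f$-derivatives)},\qquad \lambda_{tt}f=(H\lambda_u)_u\cdot(\text{something}),
\]
together with conditions that decouple the $u$- and $(t,x)$-dependences of $\lambda$. Integration modulo the trivial characteristics reduces $\lambda$ to the form $\lambda=\alpha(x)u+\beta(x)+\gamma(t,x)$ (or a similar low-dimensional template), and substitution back into the full system produces a finite list of classifying equations relating $f,h,H,K$.

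Finally I would carry out the case analysis for this classifying system exactly as in the Lie-symmetry classification of Section~\ref{SectionOnGrClasProof}: split according to whether the key coefficients in the classifying equations vanish identically or not; in each branch integrate the ODEs for the arbitrary elements and normalize the result with transformations from $G_1^\sim$ given by Theorem~\ref{TheorOnEquivGroupUsualg=1}; each surviving branch contributes one entry to the table. The main obstacle is not the setup but the bookkeeping of this last step, since several branches admit the same normal form in different variables and one has to check, using the action of $G_1^\sim$ on characteristics induced by Proposition~\ref{PropositionOnInducedMapping}, that no two distinct rows of the resulting table are in fact $G_1^\sim$-equivalent and that the listed conserved vectors $F^1,F^2$ are genuinely nontrivial in the sense of Definition~\ref{DefinitionOfTrivialOfConsVector}.
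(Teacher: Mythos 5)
Your plan follows essentially the same route as the paper's proof: the direct method applied to first-order conserved vectors, splitting with respect to the second-order and then the first-order derivatives to force $F^1,F^2$ to be affine in $u_t,u_x$, extraction of a single classifying condition, and a case analysis normalized by $G_1^{\sim}$. The only detail worth noting is that the determining equations in fact force the characteristic to be independent of $u$ (so the conserved density reduces to $F^1(t,x)u_t-F^1_t u$ up to trivial terms, with $\lambda=F^1/f$), and the classifying condition collapses to the single relation $\bigl(F^1/f\bigr)_{xx}H-\bigl(hF^1/f\bigr)_xK-F^1_{tt}=0$, whose analysis is organized by the dimension of the linear span of $1$, $H$ and $K$ rather than by your guessed template $\lambda=\alpha(x)u+\beta(x)+\gamma(t,x)$.
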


\setcounter{tbn}{0}

\setcounter{clnumber}{0}
\begin{center}\footnotesize\renewcommand{\arraystretch}{1.15}
Table~\refstepcounter{table}\label{TableCLsClas}\thetable.
Conservation laws of equations~\eqref{g=1eqVarCoefTelegraphEq} \\[1ex]
\begin{tabular}{|l|c|c|c|c|l|}
\hline
N & $H(u)$ & $K(u)$ & $f(x)$ & $h(x)$ & \hfil Basis conservation laws \\
\hline
\refstepcounter{tbn}\label{CaseCLsForAllHForAllKForAllfForh1}\thetbn & $\forall$ & $\forall$ &  $\forall$ & $1$ &
\refstepcounter{clnumber}CL$^{\theclnumber}$, \refstepcounter{clnumber}CL$^{\theclnumber}$ \\
\refstepcounter{tbn}\label{CaseCLsForH1Kneq0hh-1Forallh}\thetbn & $1$ & $K_u\neq 0$ & $h(h^{-1})_{xx}$ & $\forall$ &
\refstepcounter{clnumber}CL$^{\theclnumber}$, \refstepcounter{clnumber}CL$^{\theclnumber}$\\
\refstepcounter{tbn}\label{CaseCLsForAllHKH+1hyForallh}\thetbn & $\forall$ & $\epsilon H+1$ & $-h_y$ &  $\forall$ &
\refstepcounter{clnumber}CL$^{\theclnumber}$, \refstepcounter{clnumber}CL$^{\theclnumber}$\\
\refstepcounter{tbn}\label{CaseCLsForAllHKH+1hyhy-1ForAllh}\thetbn & $\forall$ & $\epsilon H+1$ & $-h_y-hy^{-1}$  &  $\forall$ &
\refstepcounter{clnumber}CL$^{\theclnumber}$, \refstepcounter{clnumber}CL$^{\theclnumber}$\\
\refstepcounter{tbn}\label{CaseCLsForAllHKH+1hw-1hw-1}\thetbn & $\forall$ & $\epsilon H+1$ & $-h\omega^{-1}$  &  $\omega^{-1/2}\exp\big{(}-\int \frac{1}{2}(a_{00}+a_{11}\big{)}\omega^{-1})$ &
\refstepcounter{clnumber}CL$^{\theclnumber}$, \refstepcounter{clnumber}CL$^{\theclnumber}$, \refstepcounter{clnumber}CL$^{\theclnumber}$, \refstepcounter{clnumber}CL$^{\theclnumber}$ \\
\refstepcounter{tbn}\label{CaseCLsForAllHKHForAllfForAllH}\thetbn & $\forall$ & $\epsilon H$ &   $\forall$ &  $\forall$ &
\refstepcounter{clnumber}CL$^{\theclnumber}$, \refstepcounter{clnumber}CL$^{\theclnumber}$, \refstepcounter{clnumber}CL$^{\theclnumber}$, \refstepcounter{clnumber}CL$^{\theclnumber}$ \\
\refstepcounter{tbn}\label{CaseCLsForH1ForAllfForAllh}\thetbn & $1$ & $0$ &  $\forall$ & $\forall$ &
\refstepcounter{clnumber}CL$^{\theclnumber}$ \\
\hline
\end{tabular}
\end{center}
\smallskip
Here the conserved densities $F^1$ and fluxes $F^2$ of the presented conservation laws have the following forms:
\\[1ex] \setcounter{clnumber}{0}
\refstepcounter{clnumber} CL$^{\theclnumber}$:\quad $fu_t$, $-(Hu_x+\int K)$;\\[0.5ex]
\refstepcounter{clnumber} CL$^{\theclnumber}$:\quad $f(tu_t-u)$, $-t(Hu_x+\int K)$;\\[0.5ex]
\refstepcounter{clnumber} CL$^{\theclnumber}$:\quad $e^{-t}(h^{-1})_{xx}(u_t+u)$, $-e^{-t}[h^{-1}u_x-(h^{-1})_{x}u+\int K]$;\\[0.5ex]
\refstepcounter{clnumber} CL$^{\theclnumber}$:\quad $e^{t}(h^{-1})_{xx}(u_t-u)$, $-e^{t}[h^{-1}u_x-(h^{-1})_{x}u+\int K]$;\\[0.5ex]
\refstepcounter{clnumber} CL$^{\theclnumber}$:\quad $-e^{t-\epsilon \int h}h_{y}(u_t-u)$, $-e^{t}(e^{-\epsilon \int h}H u_y+hu)$;\\[0.5ex]
\refstepcounter{clnumber} CL$^{\theclnumber}$:\quad $-e^{-t-\epsilon \int h}h_{y}(u_t+u)$, $-e^{-t}(e^{-\epsilon \int h}H u_y+hu)$;\\[0.5ex]
\refstepcounter{clnumber} CL$^{\theclnumber}$:\quad $-e^{t-\epsilon \int h}y(h_{y}+y^{-1}h)(u_t-u)$, $-e^{t}(e^{-\epsilon \int h}yH u_y-\int H+yhu)$;\\[0.5ex]
\refstepcounter{clnumber} CL$^{\theclnumber}$:\quad $-e^{-t-\epsilon \int h}y(h_{y}+y^{-1}h)(u_t+u)$, $-e^{-t}(e^{-\epsilon \int h}yH u_y-\int H+yhu)$;\\[0.5ex]
\refstepcounter{clnumber} CL$^{\theclnumber}$:\quad $e^{-\epsilon \int h}f[(\alpha^{11}y+\alpha^{10})u_t-(\alpha_t^{11}y+\alpha_t^{10})u]$,
 $-(\alpha^{11}y+\alpha^{10})(e^{-\epsilon \int h}Hu_y+hu)+\alpha^{11}\int H$;\\[0.5ex]
\refstepcounter{clnumber} CL$^{\theclnumber}$:\quad $e^{-\epsilon \int h}f[(\alpha^{21}y+\alpha^{20})u_t-(\alpha_t^{21}y+\alpha_t^{20})u]$,
 $-(\alpha^{21}y+\alpha^{20})(e^{-\epsilon \int h}Hu_y+hu)+\alpha^{21}\int H$;\\[0.5ex]
\refstepcounter{clnumber} CL$^{\theclnumber}$:\quad
  $e^{-\epsilon \int h}f[(\alpha^{31}y+\alpha^{30})u_t-(\alpha_t^{31}y+\alpha_t^{30})u]$,
 $-(\alpha^{31}y+\alpha^{30})(e^{-\epsilon \int h}Hu_y+hu)+\alpha^{31}\int H$;\\[0.5ex]
\refstepcounter{clnumber} CL$^{\theclnumber}$:\quad
 $e^{-\epsilon \int h}f[(\alpha^{41}y+\alpha^{40})u_t-(\alpha_t^{41}y+\alpha_t^{40})u]$,
 $-(\alpha^{41}y+\alpha^{40})(e^{-\epsilon \int h}Hu_y+hu)+\alpha^{41}\int H$;\\[0.5ex]
\refstepcounter{clnumber} CL$^{\theclnumber}$:\quad
 $e^{-\epsilon \int h}fu_t$, $-e^{-\epsilon \int h}H u_y$;\\[0.5ex]
\refstepcounter{clnumber} CL$^{\theclnumber}$:\quad
 $e^{-\epsilon \int h}f(tu_t-u)$, $-e^{-\epsilon \int h}tH u_y$;\\[0.5ex]
\refstepcounter{clnumber} CL$^{\theclnumber}$:\quad
  $e^{-\epsilon \int h}yfu_t$, $-e^{-\epsilon \int h}yH u_y+\int H$;\\[0.5ex]
\refstepcounter{clnumber} CL$^{\theclnumber}$:\quad
 $e^{-\epsilon \int h}yf(tu_t-u)$, $-e^{-\epsilon \int h}tyH u_y+t\int H$;\\ [0.5ex]
\refstepcounter{clnumber} CL$^{\theclnumber}$:\quad
 $f(\sigma u_t-\sigma_tu)$, $-\sigma u_x+\sigma_xu$;\\ [0.5ex]
\smallskip
the variable $y$ is implicitly determined by the formula $x=\int
e^{-\epsilon \int h(y)dy} dy$; $\epsilon, a_{ij}=\const, i,j=0,1$;
$(\alpha^{k1}, \alpha^{k0})=(\alpha^{k1}(t), \alpha^{k0}(t)), k =
1,2,3,4$ is one of fundamental solution system of the system of ODEs
\[
\alpha^1_{tt}=-(a_{11}\alpha^1+a_{01}\alpha_0),\quad\quad \alpha^0_{tt}=-(a_{10}\alpha^1+a_{00}\alpha_0)
\]
$\omega= a_{01}y^2 + (a_{00}-a_{11})y-a_{10}$; $\sigma= \sigma(t,
x)$ is an arbitrary solution of the linear equation
$f\sigma_{tt}-\sigma_{xx}=0$. Hereafter $\int H=\int H du, \int
K=\int K du$. In case \ref{CaseCLsForAllHKHForAllfForAllH}
$\epsilon\in \{0, 1\} \mod G_1^{\sim}$.
\\

In Theorem \ref{TheoremOfCLsClasWithG1}, the classification of conservation laws is performed
with respect to the usual equivalence group $G_1^{\sim}$ and thus some obtained results are explicit. Hence, we can further formulate
the classification result in an implicit form, and indeed we can split case \ref{CaseCLsForAllHKH+1hw-1hw-1} of table \ref{TableCLsClas}
into a number of inequivalent cases depending on values of $a_{ij}$ . At the same time, using the
extended equivalence group $\hat G_1^{\sim}$, we can present the result of classification in a closed and simple
form with a smaller number of inequivalent equations having nontrivial conservation laws.

\begin{theorem}\label{TheoremOfCLsClasWithGeneG1}
A complete list of $\hat G_1^{\sim}$-inequivalent equations~\eqref{g=1eqVarCoefTelegraphEq} having nontrivial
conservation laws with characteristics of the zeroth order is exhausted by ones given in table~\ref{TableCLsClasWithGeneG1}.
\end{theorem}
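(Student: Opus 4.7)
The plan is to exploit the existing classification from Theorem \ref{TheoremOfCLsClasWithG1} rather than redo the direct-method computation from scratch. Since the usual equivalence group $G_1^{\sim}$ is a subgroup of $\hat G_1^{\sim}$, every $\hat G_1^{\sim}$-equivalence class of conservation laws is a union of $G_1^{\sim}$-classes; hence the task reduces to (i) deciding which of the cases in Table \ref{TableCLsClas} are identified by transformations from $\hat G_1^{\sim}\setminus G_1^{\sim}$, (ii) transferring the corresponding conserved vectors under those transformations, and (iii) checking that no further identifications are possible among the survivors.

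The essential extra freedom sits in the parameter $\epsilon_9$ of \eqref{EquivTransformationsGeng=1}. Its action on the arbitrary element $K$ is $\tilde K=\epsilon_8(K+\epsilon_9H)$, accompanied by the nonlocal change of $\tilde x,\tilde f,\tilde h$ through $\int e^{-\epsilon_9\int h}dx$. I would use exactly this freedom to absorb the additive multiple of $H$ that singles out several rows of Table \ref{TableCLsClas}: rows with $K=\epsilon H+1$ (cases \ref{CaseCLsForAllHKH+1hyForallh}--\ref{CaseCLsForAllHKH+1hw-1hw-1}) collapse to rows with $K=1$ for a transformed $(f,h)$, while row \ref{CaseCLsForAllHKHForAllfForAllH} with $K=\epsilon H$, $\epsilon\neq0$, collapses to the linear case $K=0$. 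The cases where $K$ is fully arbitrary (\ref{CaseCLsForAllHForAllKForAllfForh1}, \ref{CaseCLsForH1Kneq0hh-1Forallh}) and the purely linear case (\ref{CaseCLsForH1ForAllfForAllh}) are unaffected by this shift and therefore survive as $\hat G_1^{\sim}$-inequivalent representatives.

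To turn these identifications into a proof it remains to transport each basis conservation law via the formula~\eqref{eqtrvarconslaw} from Proposition~\ref{PropEqTrCL}. Applying this formula to CL$^{5}$--CL$^{16}$ with the explicit change of variables $\tilde t=\epsilon_4 t+\epsilon_1$, $\tilde x=\epsilon_5\int e^{\epsilon_9\int h}dx+\epsilon_2$, $\tilde u=\epsilon_6u+\epsilon_3$ and the induced reparametrization of $\alpha^{ki}(t)$ and of the quadratic form $\omega=a_{01}y^2+(a_{00}-a_{11})y-a_{10}$, one obtains conserved vectors on the surviving representatives which, together with the trivial conservation laws and linear combinations, give the basis listed in Table~\ref{TableCLsClasWithGeneG1}. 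Nonredundancy of the resulting list is then verified by comparing the maximal Lie invariance algebras of the representative equations (an invariant under arbitrary point transformations, hence in particular under $\hat G_1^{\sim}$) and noting that none of the surviving pairs share equivalent algebras in the relevant cases.

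The main obstacle is not existence but careful bookkeeping of the nonlocal substitution $\tilde x=\int e^{-\epsilon_9\int h}dx$ in both arbitrary elements and conserved vectors: it must be checked that the auxiliary objects from Table~\ref{TableCLsClas}, namely the fundamental system $(\alpha^{k1},\alpha^{k0})$ of the auxiliary ODE system, the integrating factor $e^{-\epsilon\int h}$ appearing throughout CL$^{5}$--CL$^{16}$, and the specific forms $-h_y$, $-h_y-hy^{-1}$, $-h\omega^{-1}$ of $f$ collapse consistently into a single compact form. Once this is done, the theorem follows by direct comparison of the two tables.
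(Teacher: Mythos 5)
Your route is genuinely different from the paper's. The paper does not pass through Theorem~\ref{TheoremOfCLsClasWithG1} at all: it reruns the direct method from scratch, derives the single classifying condition~\eqref{ClassfyingCondForCL}, and then splits on the linear (in)dependence of $1$, $H$, $K$ as functions of $u$, invoking the extended group $\hat G_1^{\sim}$ only inside case~3 to normalize $K\in\langle H,1\rangle$ down to $K\in\{0,1\}$. You instead take Theorem~\ref{TheoremOfCLsClasWithG1} as input and factor Table~\ref{TableCLsClas} by the extra one-parameter freedom $\epsilon_9$ of~\eqref{EquivTransformationsGeng=1}, transporting conserved vectors via Proposition~\ref{PropEqTrCL}. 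Your identifications of which rows merge are correct ($K=\epsilon H+1\to K=1$, $K=\epsilon H\to K=0$, the remaining rows untouched), and completeness of the resulting list does follow automatically from completeness of Table~\ref{TableCLsClas}. What your approach buys is economy -- no repetition of the determining-equation analysis; what it costs is that every object in Table~\ref{TableCLsClas} (the implicit variable $y$, the factors $e^{-\epsilon\int h}$, the system for $(\alpha^{k1},\alpha^{k0})$) must be pushed through the nonlocal-in-$h$ substitution, which the paper avoids by normalizing before solving.

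Two steps in your plan are thinner than they need to be. First, the passage from the single implicit case~\ref{CaseCLsForAllHKH+1hw-1hw-1} of Table~\ref{TableCLsClas} to the explicit rows \ref{CaseCLsForAllHK0ForAllfForAllh}b--d and \ref{CaseCLsForAllHK1fxexphxexp}--\ref{CaseCLsForAllHK1fexparctanhexparctan} of Table~\ref{TableCLsClasWithGeneG1} is not mere bookkeeping: it is a classification of the matrix $(a_{ij})$, equivalently of the root configuration of the quadratic $\omega=a_{01}y^2+(a_{00}-a_{11})y-a_{10}$ (degenerate, double root, two real roots, complex roots), under the residual group action; more than half of the final table comes from exactly this step, and it has to be carried out explicitly. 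Second, your inequivalence check via maximal Lie invariance algebras would fail for several of the surviving pairs: the cases just mentioned all have arbitrary $H$ and $K\in\{0,1\}$, so for generic $H$ their Lie algebras are low-dimensional and isomorphic, and cannot separate, say, case~\ref{CaseCLsForAllHK0ForAllfForAllh}c from case~\ref{CaseCLsForAllHK0ForAllfForAllh}d. The inequivalence has to be argued instead from the invariance of the root structure of $\omega$ under the admissible (affine in $x$, up to the $\epsilon_9$-reparametrization) changes of variable, or directly from the inequivalence of the spaces of characteristics. With those two repairs your argument closes.
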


\setcounter{tbn}{0}

\setcounter{clnumber}{0}
\begin{center}\footnotesize\renewcommand{\arraystretch}{1.15}
Table~\refstepcounter{table}\label{TableCLsClasWithGeneG1}\thetable.
Conservation laws of equations~\eqref{g=1eqVarCoefTelegraphEq} \\[1ex]
\begin{tabular}{|l|c|c|c|c|l|}
\hline
N & $H(u)$ & $K(u)$ & $f(x)$ & $h(x)$ & \hfil Basis conservation laws \\
\hline
\refstepcounter{tbn}\label{CaseCLsForAllHForAllKForAllfForh1Gene}\thetbn & $\forall$ & $\forall$ &  $\forall$ & $1$ &
\refstepcounter{clnumber}CL$^{\theclnumber}$, \refstepcounter{clnumber}CL$^{\theclnumber}$ \\
\refstepcounter{tbn}\label{CaseCLsForH1Kneq0hh-1ForallhGene}\thetbn & $1$ & $K_u\neq 0$ & $h(h^{-1})_{xx}$ & $\forall$ &
\refstepcounter{clnumber}CL$^{\theclnumber}$, \refstepcounter{clnumber}CL$^{\theclnumber}$\\
\refstepcounter{tbn}\label{CaseCLsForAllHK1fhxForAllh}\thetbn & $\forall$ & $1$ & $-h_x$ &  $\forall$ &
\refstepcounter{clnumber}CL$^{\theclnumber}$, \refstepcounter{clnumber}CL$^{\theclnumber}$\\
\refstepcounter{tbn}\label{CaseCLsForAllHK1fhxhx-1ForAllh}\thetbn &
$\forall$ & $1$ & $-h_x-hx^{-1}$  &  $\forall$ &
\refstepcounter{clnumber}CL$^{\theclnumber}$, \refstepcounter{clnumber}CL$^{\theclnumber}$\\
\refstepcounter{tbn}\label{CaseCLsForAllHK0ForAllfForAllh}\thetbn a & $\forall$ & $0$ & $\forall$  &  $\forall$ &
\refstepcounter{clnumber}CL$^{\theclnumber}$, \refstepcounter{clnumber}CL$^{\theclnumber}$, \refstepcounter{clnumber}CL$^{\theclnumber}$, \refstepcounter{clnumber}CL$^{\theclnumber}$ \\
\thetbn b & $\forall$ & $1$ & $1$  &  $1$ &
\refstepcounter{clnumber}CL$^{\theclnumber}$, \refstepcounter{clnumber}CL$^{\theclnumber}$, \refstepcounter{clnumber}CL$^{\theclnumber}$, \refstepcounter{clnumber}CL$^{\theclnumber}$ \\
\thetbn c & $\forall$ & $1$ & $e^x$  &  $e^x$ &
\refstepcounter{clnumber}CL$^{\theclnumber}$, \refstepcounter{clnumber}CL$^{\theclnumber}$, \refstepcounter{clnumber}CL$^{\theclnumber}$, \refstepcounter{clnumber}CL$^{\theclnumber}$ \\
\thetbn d & $\forall$ & $1$ & $x^{\mu-1}$  &  $x^{\mu}$ &
\refstepcounter{clnumber}CL$^{\theclnumber}$, \refstepcounter{clnumber}CL$^{\theclnumber}$, \refstepcounter{clnumber}CL$^{\theclnumber}$, \refstepcounter{clnumber}CL$^{\theclnumber}$ \\
\refstepcounter{tbn}\label{CaseCLsForAllHK1fxexphxexp}\thetbn a & $\forall$ & $1$ &   $\frac{1}{x^3}e^{-\frac{1}{x}}$ &  $\frac{1}{x}e^{-\frac{1}{x}}$ &
\refstepcounter{clnumber}CL$^{\theclnumber}$, \refstepcounter{clnumber}CL$^{\theclnumber}$, \refstepcounter{clnumber}CL$^{\theclnumber}$, \refstepcounter{clnumber}CL$^{\theclnumber}$ \\
\thetbn b & $\forall$ & $1$ &   $\frac{1}{x^3}$ &  $\frac{1}{x}$ &
\refstepcounter{clnumber}CL$^{\theclnumber}$, \refstepcounter{clnumber}CL$^{\theclnumber}$, \refstepcounter{clnumber}CL$^{\theclnumber}$, \refstepcounter{clnumber}CL$^{\theclnumber}$ \\
\refstepcounter{tbn}\label{CaseCLsForAllHK1fx-1x+1hx-1x+1}\thetbn & $\forall$ & $1$ &   $|x-1|^{\mu-\frac{3}{2}}|x+1|^{-\mu-\frac{3}{2}}$
 &  $|x-1|^{\mu-\frac{1}{2}}|x+1|^{-\mu-\frac{1}{2}}$ &
\refstepcounter{clnumber}CL$^{\theclnumber}$, \refstepcounter{clnumber}CL$^{\theclnumber}$, \refstepcounter{clnumber}CL$^{\theclnumber}$, \refstepcounter{clnumber}CL$^{\theclnumber}$ \\
\refstepcounter{tbn}\label{CaseCLsForAllHK1fexparctanhexparctan}\thetbn
& $\forall$ & $1$ &   $e^{\mu \arctan x}(x^2+1)^{-\frac{3}{2}}$
 & $e^{\mu \arctan x}(x^2+1)^{-\frac{1}{2}}$ &
\refstepcounter{clnumber}CL$^{\theclnumber}$, \refstepcounter{clnumber}CL$^{\theclnumber}$, \refstepcounter{clnumber}CL$^{\theclnumber}$, \refstepcounter{clnumber}CL$^{\theclnumber}$ \\
\refstepcounter{tbn}\label{CaseCLsForH1K0ForAllfForAllh}\thetbn & $1$ & $0$ &  $\forall$ & $\forall$ &
\refstepcounter{clnumber}CL$^{\theclnumber}$ \\
\hline
\end{tabular}
\end{center}
\smallskip
Here the conserved densities $F^1$ and fluxes $F^2$ of the presented conservation laws have the following forms:
\\[1ex] \setcounter{clnumber}{0}
\refstepcounter{clnumber} CL$^{\theclnumber}$:\quad $fu_t$, $-(Hu_x+\int K)$;\\[0.5ex]
\refstepcounter{clnumber} CL$^{\theclnumber}$:\quad $f(tu_t-u)$, $-t(Hu_x+\int K)$;\\[0.5ex]
\refstepcounter{clnumber} CL$^{\theclnumber}$:\quad $e^{-t}(h^{-1})_{xx}(u_t+u)$, $-e^{-t}[h^{-1}u_x-(h^{-1})_{x}u+\int K]$;\\[0.5ex]
\refstepcounter{clnumber} CL$^{\theclnumber}$:\quad $e^{t}(h^{-1})_{xx}(u_t-u)$, $-e^{t}[h^{-1}u_x-(h^{-1})_{x}u+\int K]$;\\[0.5ex]
\refstepcounter{clnumber} CL$^{\theclnumber}$:\quad $e^{t}f(u_t-u)$, $-e^{t}(H u_x+h\int K)$;\\[0.5ex]
\refstepcounter{clnumber} CL$^{\theclnumber}$:\quad $e^{-t}f(u_t+u)$, $-e^{-t}(H u_x+h\int K)$;\\[0.5ex]
\refstepcounter{clnumber} CL$^{\theclnumber}$:\quad $e^{t}xf(u_t-u)$, $-e^{t}(xH u_x-\int H+xhu)$;\\[0.5ex]
\refstepcounter{clnumber} CL$^{\theclnumber}$:\quad $e^{-t}xf(u_t+u)$, $-e^{-t}(xH u_x-\int H+xhu)$;\\[0.5ex]
\refstepcounter{clnumber} CL$^{\theclnumber}$:\quad $fu_t$, $-Hu_x$;\\[0.5ex]
\refstepcounter{clnumber} CL$^{\theclnumber}$:\quad $f(tu_t-u)$, $-tHu_x$;\\[0.5ex]
\refstepcounter{clnumber} CL$^{\theclnumber}$:\quad $xfu_t$, $-xHu_x+\int H$;\\[0.5ex]
\refstepcounter{clnumber} CL$^{\theclnumber}$:\quad $xf(tu_t-u)$, $-t(xHu_x-\int H)$;\\[0.5ex]
\refstepcounter{clnumber} CL$^{\theclnumber}$:\quad $u_t$, $-Hu_x-u$;\\[0.5ex]
\refstepcounter{clnumber} CL$^{\theclnumber}$:\quad $tu_t-u$, $-t(Hu_x+u)$;\\[0.5ex]
\refstepcounter{clnumber} CL$^{\theclnumber}$:\quad $(x-\frac{1}{2}t^2)u_t+tu$, $-(x-\frac{1}{2}t^2)(Hu_x+u)+\int H$;\\[0.5ex]
\refstepcounter{clnumber} CL$^{\theclnumber}$:\quad $(tx-\frac{1}{6}t^3)u_t-(x-\frac{1}{2}t^2)u$, $-(tx-\frac{1}{6}t^3)(Hu_x+u)+t\int H$;\\ [0.5ex]
\refstepcounter{clnumber} CL$^{\theclnumber}$:\quad $e^x(u_t\cos t+u\sin t)$, $-\cos t(Hu_x+e^xu)$;\\ [0.5ex]
\refstepcounter{clnumber} CL$^{\theclnumber}$:\quad $e^x(u_t\sin t-u\cos t)$, $-\sin t(Hu_x+e^xu)$;\\ [0.5ex]
\refstepcounter{clnumber} CL$^{\theclnumber}$:\quad $(x\sin t+\frac{1}{2} t\cos t)e^xu_t-(x\cos t+\frac{1}{2}\cos t)e^x u$, $-(x\sin t+\frac{1}{2} t\cos t)(Hu_x+e^xu)+\sin t\int H$;\\ [0.5ex]
\refstepcounter{clnumber} CL$^{\theclnumber}$:\quad $[(x-\frac{1}{2}) \cos t-\frac{1}{2} t\sin t]e^xu_t+(x\sin t+\frac{1}{2}t\cos t)e^x u$, $-[(x-\frac{1}{2}) \cos t-\frac{1}{2} t\sin t](Hu_x+e^xu)+\cos t\int H$;\\ [0.5ex]
\refstepcounter{clnumber} CL$^{\theclnumber}$:\quad $x^{\mu-1}[u_t\cos(\sqrt{\mu}t)+ \sqrt{\mu} u\sin(\sqrt{\mu}t)]$, $-\cos(\sqrt{\mu}t)(Hu_x+x^{\mu}u)$;\\ [0.5ex]
\refstepcounter{clnumber} CL$^{\theclnumber}$:\quad $x^{\mu-1}[u_t\sin(\sqrt{\mu}t)- \sqrt{\mu} u\cos(\sqrt{\mu}t)]$, $-\sin(\sqrt{\mu}t)(Hu_x+x^{\mu}u)$;\\ [0.5ex]
\refstepcounter{clnumber} CL$^{\theclnumber}$:\quad
$x^{\mu}[u_t\cos(\sqrt{\mu+1}t)+ \sqrt{\mu+1}
u\sin(\sqrt{\mu+1}t)]$,
$-\cos(\sqrt{\mu+1}t)(xHu_x+x^{\mu+1}u)\\+\cos(\sqrt{\mu+1}t)\int
H$;\\ [0.5ex]
\refstepcounter{clnumber} CL$^{\theclnumber}$:\quad
$x^{\mu}[u_t\sin(\sqrt{\mu+1}t)- \sqrt{\mu+1}
u\cos(\sqrt{\mu+1}t)]$,
$-\sin(\sqrt{\mu+1}t)(xHu_x+x^{\mu+1}u)\\+\sin(\sqrt{\mu+1}t)\int
H$;\\ [0.5ex]
\refstepcounter{clnumber} CL$^{\theclnumber}$:\quad $\frac{1}{x^2}e^{-\frac{1}{x}}(u_t\cos t+  u\sin t)$, $-\cos t(xHu_x+\frac{1}{x}e^{-\frac{1}{x}}u)+\cos t\int H$;\\ [0.5ex]
\refstepcounter{clnumber} CL$^{\theclnumber}$:\quad $\frac{1}{x^2}e^{-\frac{1}{x}}(u_t\sin t-  u\cos t)$, $-\sin t(xHu_x+\frac{1}{x}e^{-\frac{1}{x}}u)+\sin t\int H$;\\ [0.5ex]
\refstepcounter{clnumber} CL$^{\theclnumber}$:\quad $\frac{1}{x^3}e^{-\frac{1}{x}}[(-\frac{1}{2}tx\cos t+ \sin t)u_t-(-\frac{1}{2}x\cos t+\frac{1}{2}tx \sin t+\cos t)u]$, $-(-\frac{1}{2}tx\cos t+ \sin t)(Hu_x+\frac{1}{x}e^{-\frac{1}{x}}u)-\frac{1}{2}t\cos t\int H$;\\ [0.5ex]
\refstepcounter{clnumber} CL$^{\theclnumber}$:\quad $\frac{1}{x^3}e^{-\frac{1}{x}}[(\frac{1}{2}tx\sin t+\frac{1}{2}x \cos t+\cos t)u_t-(\frac{1}{2}tx\cos t-\sin t)u]$, $-(\frac{1}{2}tx\sin t+\frac{1}{2}x\cos t+ \cos t)(Hu_x+\frac{1}{x}e^{-\frac{1}{x}}u)+\frac{1}{2}(t\sin t+\cos t)\int H$;\\ [0.5ex]
\refstepcounter{clnumber} CL$^{\theclnumber}$:\quad
$\frac{1}{x^2}u_t$, $-xHu_x-u+\int H$;\\ [0.5ex]
\refstepcounter{clnumber} CL$^{\theclnumber}$:\quad
$\frac{1}{x^2}(tu_t-u)$, $-t(xHu_x+u-\int H)$;\\ [0.5ex]
\refstepcounter{clnumber} CL$^{\theclnumber}$:\quad
$(\frac{1}{2}\frac{t^2}{x^2}+\frac{1}{x^3})u_t-\frac{t}{x^2}u$,
$-(\frac{1}{2}t^2x+1)(Hu_x+\frac{1}{x}u)+\frac{1}{2}t^2\int H$;\\
[0.5ex]
\refstepcounter{clnumber} CL$^{\theclnumber}$:\quad
$(\frac{1}{6}\frac{t^3}{x^2}+\frac{t}{x^3})u_t-(\frac{1}{2}\frac{t^2}{x^2}+\frac{t}{x^3})u$,
$-(\frac{1}{6}t^3x+t)(Hu_x+\frac{1}{x}u)+\frac{1}{6}t^3\int H$;\\
[0.5ex]
\refstepcounter{clnumber} CL$^{\theclnumber}$:\quad
$(x+1)f[\sin(\sqrt{2\mu-1}t)u_t-\sqrt{2\mu-1}\cos(\sqrt{2\mu-1}t)u]$,
$-(x+1)\sin(\sqrt{2\mu-1}t)(Hu_x+hu)+\sin(\sqrt{2\mu-1}t)\int H$;\\
[0.5ex]
\refstepcounter{clnumber} CL$^{\theclnumber}$:\quad
$(x+1)f[\cos(\sqrt{2\mu-1}t)u_t+\sqrt{2\mu-1}\sin(\sqrt{2\mu-1}t)u]$,
$-(x+1)\cos(\sqrt{2\mu-1}t)(Hu_x+hu)+\cos(\sqrt{2\mu-1}t)\int H$;\\
[0.5ex]
\refstepcounter{clnumber} CL$^{\theclnumber}$:\quad
$(x-1)f[\sin(\sqrt{2\mu+1}t)u_t-\sqrt{2\mu+1}\cos(\sqrt{2\mu+1}t)u]$,
$-(x-1)\sin(\sqrt{2\mu+1}t)(Hu_x+hu)+\sin(\sqrt{2\mu+1}t)\int H$;\\
[0.5ex]
\refstepcounter{clnumber} CL$^{\theclnumber}$:\quad
$(x-1)f[\cos(\sqrt{2\mu+1}t)u_t+\sqrt{2\mu+1}\sin(\sqrt{2\mu+1}t)u]$,
$-(x-1)\cos(\sqrt{2\mu+1}t)(Hu_x+hu)+\cos(\sqrt{2\mu+1}t)\int H$;\\
[0.5ex]
\refstepcounter{clnumber} CL$^{\theclnumber}$:\quad
$e^{-\sqrt{-\mu-i}t}(x-i)f(u_t+\sqrt{-\mu-i}u)$,
$-e^{-\sqrt{-\mu-i}t}(x-i)(Hu_x+hu)+e^{-\sqrt{-\mu-i}t}\int H$;\\
[0.5ex]
\refstepcounter{clnumber} CL$^{\theclnumber}$:\quad
$e^{\sqrt{-\mu-i}t}(x-i)f(u_t-\sqrt{-\mu-i}u)$,
$-e^{\sqrt{-\mu-i}t}(x-i)(Hu_x+hu)+e^{\sqrt{-\mu-i}t}\int H$;\\
[0.5ex]
\refstepcounter{clnumber} CL$^{\theclnumber}$:\quad
$e^{-\sqrt{-\mu+i}t}(x+i)f(u_t+\sqrt{-\mu+i}u)$,
$-e^{-\sqrt{-\mu+i}t}(x+i)(Hu_x+hu)+e^{-\sqrt{-\mu+i}t}\int H$;\\
[0.5ex]
\refstepcounter{clnumber} CL$^{\theclnumber}$:\quad
$e^{\sqrt{-\mu+i}t}(x+i)f(u_t-\sqrt{-\mu+i}u)$,
$-e^{\sqrt{-\mu+i}t}(x+i)(Hu_x+hu)+e^{\sqrt{-\mu+i}t}\int H$;\\
[0.5ex]
\refstepcounter{clnumber} CL$^{\theclnumber}$:\quad
 $f(\sigma u_t-\sigma_tu)$, $-\sigma u_x+\sigma_xu$;\\ [0.5ex]
\smallskip
$\mu = \const, \sigma= \sigma(t, x)$ is an arbitrary solution of the
linear equation $f\sigma_{tt}-\sigma_{xx} = 0$.

\begin{proof}
We search the first-order conservation laws for the equations from
class \eqref{g=1eqVarCoefTelegraphEq} in the form
\begin{equation}\label{FirstOrderCL}
D_tF(t,x,u,u_t,u_x)+D_xG(t,x,u,u_t,u_x)=0,
\end{equation}
where $D_t$ and $D_x$ are the operators of the total differentiation
with respective to $t$ and $x$ correspondingly, namely,
$D_t=\partial_t+u_t\partial_u+\cdots,
D_x=\partial_x+u_x\partial_u+\cdots.$ Substituting the expression
for $u_{tt}$ deduced from \eqref{g=1eqVarCoefTelegraphEq} into
\eqref{FirstOrderCL} and decompose the obtained equation with
respect to $u_{xt}$ and $u_{xx}$, we obtain
\begin{gather}
F_{u_x}+G_{u_t}=0, \nonumber \\
F_{u_t}\frac{H}{f}+G_{u_x}=0, \label{System1FirstOrderCL} \\
F_{u_t}\frac{H_u}{f}u_x^2+(F_{u_t}\frac{hK}{f}+G_u)u_x+F_uu_t+F_t+G_x=0.\nonumber
\end{gather}
Up to conserved vectors equivalence for the first equation of system
\eqref{System1FirstOrderCL}, we can assume $F_{u_x} = G_{u_t}= 0$,
which together with the second equation imply
\begin{equation}\label{Expression1ForFG}
F=F^3(t,x,u)u_t+F^2(t,x,u),\quad
G=-F^3(t,x,u)\frac{H}{f}u_x+G^1(t,x,u).
\end{equation}
Substituting these expression into the last equation of system
\eqref{System1FirstOrderCL} and splitting it with respect to the
powers of $u_x$ and $u_t$, we obtain the system of PDEs for the functions
$F^3, F^2$ and $G^1$ of the form
\begin{gather}
F^3_{u}=0,\quad F^2_u+F^3_t=0, \nonumber \\
F^3\frac{hK}{f}+G^1_{u}-\frac{H}{f}F^3_x+F^3\frac{f_x}{f^2}H=0, \label{System2FirstOrderCL} \\
F^2_t+G^1_x=0.\nonumber
\end{gather}
Solving first three equations of \eqref{System2FirstOrderCL} yields
\begin{equation}\label{Expression2ForFG}
F^3=F^1(t,x),\quad F^2=-F^1_tu+F^0(t,x),\quad
G^1=(\frac{F^1}{f})_x\int H-\frac{hF^1}{f}\int K+G^0(t,x).
\end{equation}
Substituting the latter expression for $F^2$ and $G^1$ into the last
equation of system \eqref{System2FirstOrderCL}, we can know that the
major role for classification is played by a differential
consequence
\begin{equation}\label{ClassfyingCondForCL}
(\frac{F^1}{f})_{xx}H-(\frac{hF^1}{f})_xK-F^1_{tt}=0.
\end{equation}
Indeed, it is the unique classifying condition for this problem. In
all classification cases we obtain the equation $F^0_t +G^0_x = 0$.
Therefore, up to conserved vectors equivalence we can assume $F^0 =
G^0 = 0$, and additionally $F^1\neq 0$ for conservation laws to be
non-trivial. Thus,  taking into accountant \eqref{Expression1ForFG}
and \eqref{Expression2ForFG} the conservation density and flux can
be rewritten as
\begin{equation}\label{Expression3ForFG}
F=F^1(t,x)u_t-F^1_tu,\quad
G=-\frac{H}{f}F^1u_x+(\frac{F^1}{f})_x\int H-\frac{hF^1}{f}\int K.
\end{equation}
Equation \eqref{ClassfyingCondForCL} implies that there exist no
non-trivial conservation laws in the general case. Let us classify
the special values of the parameter-functions for which equation
\eqref{g=1eqVarCoefTelegraphEq} possesses non-trivial conservation
laws. There exist four different possibilities for values of $H$ and
$K$.

1. $\dim \langle h, H, K \rangle= 3$. It follows from
\eqref{ClassfyingCondForCL} that $F^1_{tt} = (F^1/f)_{xx} =
(hF^1/f)_x = 0$ and therefore $F^1 = \alpha(t)f/h, (1/h)xx = 0,
\alpha_{tt}=0$ i.e., obviously $h\in \{1, x^{-1}\}, \alpha=c_1+c_2t
\mod G_1^{\sim}$. Moreover, $h = 1\sim h = x^{-1} \mod G_1^{\sim}$
(the corresponding transformation is $\tilde x = \ln |x|$ and $\tilde
f = x^2f$, the other variables and parameter-functions are not
changed). As a result, we obtain case
\ref{CaseCLsForAllHForAllKForAllfForh1Gene}.

2. $H\in \langle 1 \rangle, K \bar \in \langle 1 \rangle$. Then $H=
1 \mod G_1^{\sim}$ and $(hF^1/f)_x = 0, F^1_{tt}-(F1/f)_{xx}= 0$,
i.e., $F^1 =\alpha(t)f/h$, where
$\alpha_{tt}/\alpha=\lambda=\const$(otherwise we have case
\ref{CaseCLsForAllHForAllKForAllfForh1Gene}) and so $\lambda= 1, f =
h(h^{-1})_{xx} \mod G_1^{\sim}$ (case
\ref{CaseCLsForH1Kneq0hh-1ForallhGene}).

3. $H \in \langle 1 \rangle, K\in  \langle H, 1 \rangle$. Then $K\in
\langle 1 \rangle \mod \hat G_1^{\sim}$ and $(F^1/f)_{xx} = 0,
F^1_{tt} = -K(hF^1/f)_x$, i.e. $F^1 = (\alpha^1(t)x + \alpha^0(t))f$
and $\alpha^1_{tt}xf +\alpha^0_{tt}f = -K(\alpha^1(xh)_x +
\alpha^0h_x)$. For $K= 0$ we obtain case
\ref{CaseCLsForAllHK0ForAllfForAllh}a at once. Suppose $K \neq 0$.
Then $K = 1 \mod G_1^{\sim}$ and the dimension $m = \dim \langle f,
xf, h_x, (xh)_x \rangle$ can have only the values $2$ and $3$.

If $m = 3$ then there exist constants $a_{ij}, b_i, i, j = 0, 1$,
and a function $\theta= \theta(x)$ such that $(b_0, b_1) \neq (0,
0)$, $\dim \langle f, xf, \theta \rangle = 3$ and $h_x = a_{00}f +
a_{01}xf + b_0\theta, (xh)_x = a_{10}f + a_{11}xf + b_1\theta$.
Therefore, $\alpha^1_{tt} =- a_{11}\alpha^1-a_{01}\alpha^0,
\alpha^0_{tt} =- a_{10}\alpha^1-a_{00}\alpha^0,
b_1\alpha^1+b_0\alpha^0 = 0$, i.e. $\alpha^1=c_1e^{\lambda
t}+c_2e^{-\lambda t}, \alpha^0=c_3e^{\delta t}+c_4e^{-\delta t}$
where $c_i (i=1,\cdots, 4), \lambda, \delta = \const$ and $\lambda,
\delta \neq 0$ (otherwise, this case is reduced to a subcase of
\ref{CaseCLsForAllHForAllKForAllfForh1Gene}), hence $\lambda, \delta
= 1 \mod G^{\sim}$. Depending on values (either vanishing or
non-vanishing) of $c_1, c_2$ and $c_3, c_4$ we obtain cases
\ref{CaseCLsForAllHK1fhxForAllh} and
\ref{CaseCLsForAllHK1fhxhx-1ForAllh} correspondingly.

If $m = 2$ then $h_x = a_{00}f + a_{01}xf, (xh)_x = a_{10}f +
a_{11}xf$ for some constants $a_{ij} , i, j = 0, 1$. Therefore,
$\alpha^1_{tt} =- a_{11}\alpha^1-a_{01}\alpha^0, \alpha^0_{tt} =-
a_{10}\alpha^1-a_{00}\alpha^0, f = -h/\omega, h_x/h =
-(a_{01}x+a_{00})/\omega$, where $\omega =
a_{01}x^2+(a_{00}-a_{11})x-a_{10}$, i.e., $h =
\omega^{-1/2}\exp\big{(}-\int
\frac{1}{2}(a_{00}+a_{11}\big{)}\omega^{-1})$. As a results, we
obtain four conservation laws with the conserved vectors
\[
(\alpha^{\mu 1}(t)x+\alpha^{\mu 0}(t))fu_t-(\alpha_t^{\mu
1}(t)x+\alpha_t^{\mu 0}(t))fu, -(\alpha^{\mu 1}(t)x+\alpha^{\mu
0}(t))(Hu_x+hu)+\alpha^{\mu 1}\int H),
\]
where $(\alpha^{\mu1}, \alpha^{\mu0}), \mu= 1,\cdots, 4$ form a
fundamental set of solutions of the system $\alpha^1_{tt} =-
a_{11}\alpha^1-a_{01}\alpha^0, \alpha^0_{tt} =-
a_{10}\alpha^1-a_{00}\alpha^0$. Separate consideration of possible
inequivalent values of the constants $a_{ij}$ leads to cases
\ref{CaseCLsForAllHK0ForAllfForAllh}b-\ref{CaseCLsForAllHK0ForAllfForAllh}d
and
\ref{CaseCLsForAllHK1fxexphxexp}-\ref{CaseCLsForAllHK1fexparctanhexparctan}.

4. $H,K \in \langle 1 \rangle$. Therefore, $H = 1, K= 0 \mod
G^{\sim}$ and $F^1_{tt} -(F^1/f)_{xx} = 0$. Let $F^1=\sigma(t,x)f$,
we have $\sigma_{xx}-f\sigma_{tt}=0$, which corresponding case
\ref{CaseCLsForH1K0ForAllfForAllh}.

\end{proof}

The above conservation laws can be used for construction of
potential systems, potential symmetries and potential conservation
laws. We will present such analysis elsewhere.

\section{Conclusion and Remarks}\label{SectionOnConclusion}

In summary, we have present an enhanced classical and nonclassical
symmetries and conservation laws analysis of the class of
equations~\eqref{eqVarCoefTelegraphEq} in the framework of modern
group analysis of differential equations.

We have performed a complete and extended symmetry group
classification of the class of
equations~\eqref{eqVarCoefTelegraphEq} with the two ``best" gauges
$g = 1$ and $g = h$. The main results on classification are
collected in tables~
\ref{TableGrClasForAllHg=1}--\ref{TableGrClasHpowerg=1} ($g=1$) and
\ref{TableGrClasForAllH}--\ref{TableGrClasHpower} ($g=h$) where we
list inequivalent cases of extensions with the corresponding Lie
invariance algebras. The success in the classification and the clear
presentation of the final results are relied heavily on the regular
applications of four original tools presented in
\cite{Ivanova&Popovych&Sophocleous2007}, i.e., the equivalence
relation with respect to the extended equivalence group instead of
the usual one, the choice of true gauges, furcate split and
systematic usage of additional equivalences. Among them the first
two kinds of techniques (the extended equivalence group and true
gauges) are of crucial importance for obtaining a closed and
explicit classification list. The extended equivalence group of
class \eqref{eqVarCoefTelegraphEq} is the extension of the usual one
with the non-trivial group of gauge equivalence transformations
including transformations which are nonlocal in arbitrary elements.
Neglecting this transformations leads to critical swelling and
complication of both calculations and results. This can be seen from
the classification results of equation \eqref{eqVarCoefTelegraphEq}
with the gauge $g=1$ under the usual equivalence transformations
(adduced in the Appendix). As an application of the classification
results, exact solutions of some classification models are given by
using the method of classical Lie reduction.

Nonclassical symmetries of equation \eqref{eqVarCoefTelegraphEq} are
discussed within the framework of singular reduction operator.
Determining equations related the general singular and regular
reduction operators of \eqref{eqVarCoefTelegraphEq} with $g=1$ are
given. Several reduction operators of a special nonlinear telegraph
equation (equation \eqref{eqVarCoefTelegraphEq} with $H(u)=K(u)=u,
f(x)=h(x)=1$) are present. This enabled to obtain some exact
solutions of the corresponding equation which are invariant under
these conditional symmetries.

Using the most direct method, we have also investigated two
classifications of local conservation laws up to equivalence
relations which are generated by both usual and extended equivalence
groups. Equivalence with respect to these groups and correct choice
of gauge coefficients of equations play the major role for simple
and clear formulation of the final results.

It should be noted that all above results of class \eqref{eqVarCoefTelegraphEq}
can be applied to the specific models
listed in section \ref{SectionOnPhysExam}.
We do not discuss here because there are nothing but tendinous computations.

One of the natural continuation for further investigation of
different properties of class~\eqref{eqVarCoefTelegraphEq} is to
perform group classifications of the variable gauges , i.e., when
the value of $g$ ($1$ or $h$) depends on values of other arbitrary
elements,  in a way similar to
\cite{Ivanova&Popovych&Sophocleous2007-2}. We can also make a
further studies of non-Lie exact solutions of
class~\eqref{eqVarCoefTelegraphEq} by means of the classification of
singular and regular reduction operators. Furthermore, the proposed
classifications of local conservation laws of
class~\eqref{eqVarCoefTelegraphEq} can be used to find all possible
inequivalent potential systems and potential conservation laws (see
\cite{Ivanova&Popovych&Sophocleous2007-4} for detail) associated to
the given system of differential equations. These problems will be investigated
in subsequent publication.

\subsection*{Acknowledgements}
D.j. Huang express his sincerely thanks
to professor Nataliya Ivanova for stimulating discussions
and correction of the results of this paper.
This work was partially supported by the National Key Basic Research Project
of China under Grant No. 2010CB126600, the National Natural Science
Foundation of China under Grant No. 60873070, Shanghai Leading
Academic Discipline Project No. B114, the Postdoctoral Science
Foundation of China under Grant No. 20090450067, Shanghai
Postdoctoral Science Foundation under Grant No. 09R21410600 and
the Fundamental Research Funds for the Central Universities under Grant No. WM0911004.

\begin{appendix}

\section{Note on classification with respect to the usual equivalence group}

In tables~\ref{TableUsualGrClasForAllH}--\ref{TableUsualGrClasHpower} we
list all possible $G^{\sim}$-inequivalent sets of functions
$f(x)$, $g(x)$, $H(u)$, $K(u)$ and corresponding invariance
algebras.

\setcounter{tbn}{0}

\begin{center}\footnotesize\renewcommand{\arraystretch}{1.15}
Table~\refstepcounter{table}\label{TableUsualGrClasForAllH}\thetable. Case of $\forall H(u)$ \\[1ex]
\begin{tabular}{|l|c|c|c|l|}
\hline
N & $K(u)$ & $f(x)$ & $g(x)$ & \hfil Basis of A$^{\max}$ \\
\hline
\refstepcounter{tbn}\label{CaseUsualForAllHForAllKForallF}\thetbn & $\forall$ & $\forall$ & $\forall$ & $\p_t$ \\
\refstepcounter{tbn}\label{CaseUsualForAllHForAllKFexp1}\thetbn
&$\forall$ & $|x|^\lambda$ & $x^{-1}$ & $\partial_t,\, \frac{1}{2}(\lambda+2)t\partial_t+x\partial_x$ \\
\refstepcounter{tbn}\label{CaseUsualForAllHForAllKFexp11}\thetbn
&$\forall$ & $1$ & $(\alpha+\beta x)^{-1}$ & $\partial_t,\, \partial_x, t\partial_t+x\partial_x$ \\
\refstepcounter{tbn}\label{CaseUsualForAllHForAllKFexp111}\thetbn
&$H+\beta$ & $(\alpha+\beta x)^{-\frac{2}{\beta}}$ &
$(\alpha+\beta x)^{-1}$
& $\partial_t,\, \partial_x, (\beta-1)t\partial_t+\beta x\partial_x$ \\
\refstepcounter{tbn}\label{CaseUsualForAllHForAllKFexp2}\thetbn
 & $H$ &$\forall$ & $-f'/2f+\alpha \sqrt{|f|}$& $\partial_t,  |f|^{-\frac{1}{2}}\partial_x$ \\
\refstepcounter{tbn}\label{CaseUsualForAllHForAllKFexp3}\thetbn & $H$ &
$\forall$ & $-\frac{f'}{2f}+\alpha\frac{\sqrt{|f|}}{\int
\sqrt{|f|} dx}$ & $\partial_t,
\frac{1}{2}t\partial_t+\frac{\int \sqrt{|f|} dx }{2\sqrt{|f|}}\partial_x$ \\
\refstepcounter{tbn}\label{CaseUsualForAllHForAllKFexp4}\thetbn & $H$ & $\forall$ &$-\frac{f'}{2f}$
& $\partial_t, \frac{1}{2}t\partial_t+[\frac{1}{2}|f|^{-\frac{1}{2}}\int|f|^{\frac{1}{2}}dx]\partial_x,
|f|^{-\frac{1}{2}}\partial_x$ \\
\refstepcounter{tbn}\label{CaseUsualForAllHK0F1}\thetbn & 1 & 1 & $x^{-1}$ & $\partial_t, t\partial_t+x\partial_x$\\
\hline
\end{tabular}
\end{center}
{\footnotesize Here $\lambda\neq 0$ mod $G^{\sim}$, $\alpha,
\beta\neq 0$.\\ }

\bigskip

\setcounter{tbn}{0}

\begin{center}\footnotesize\renewcommand{\arraystretch}{1.15}
Table~\refstepcounter{table}\label{TableUsualGrClasHexp}\thetable. Case of $H(u)=e^{\mu u}$ \\[1ex]
\begin{tabular}{|l|c|c|c|c|l|}
\hline
N & $\mu$ & $K(u)$ & $f(x)$ & $g(x)$ & \hfil Basis of A$^{\max}$ \\
\hline \refstepcounter{tbn}\label{CaseUsualHexpKexpnuuFxlambda}\thetbn
& $1$ &  $e^{\nu u}$ & $|x|^p$ & $|x|^q$ &
$\partial_t,\, \frac{1}{2}(p-p\nu-2\nu-q+1)t\partial_t $ \\
& $~$ &  $~$ & $~$ & $~$ &
$+(1-\nu)x\partial_x+(q+1)\partial_u $ \\
\refstepcounter{tbn}\label{CaseUsualHexpKexpnuuF1}\thetbn & $1$ &
$e^{\nu u}$  &
$e^{px}$& $\epsilon e^{qx}$ & $\partial_t,\, \frac{1}{2}(p-p\nu-q)t\partial_t+(1-\nu)\partial_x+q\partial_u$  \\
\refstepcounter{tbn}\label{CaseUsualHexpK1Fxlambda}\thetbn & $1$ & $1$
& $|x|^p $ & $|x|^q$ & $\partial_t,\,
\frac{1}{2}(1+p-q)t\partial_t+x\partial_x+(q+1)\partial_u$  \\
\refstepcounter{tbn}\label{CaseUsualHexpK1F1}\thetbn &
$1$ & $1$ &
$e^{px}$ & $\epsilon e^{qx} $ & $\partial_t, \frac{1}{2}(p-q)t\partial_t+\partial_x+q\partial_u$  \\
\refstepcounter{tbn}\label{CaseUsualHexpK0expuForAllF}\thetbn & $1$ &
$e^{\nu u}+h_1e^u$ & $\alpha \sqrt{|f|}-\frac{f'}{2f}$ &
$\frac{1}{h_1}(\frac{f'}{2f\sqrt{|f|}})' $ & $\partial_t,
\frac{1}{2}t\partial_t+(1-\nu)\frac{1}{\sqrt{|f|}}\partial_x-\partial_u$  \\
& $~$ & $~$ & $=(\frac{f'}{2f\sqrt{|f|}})'$ & $~$ & $~$\\
\refstepcounter{tbn}\label{CaseUsualHexpK0Ff1}\thetbn & $1$ & $e^{\nu
u}+h_1e^u$ & $\alpha\frac{\sqrt{|f|}}{\int \sqrt{|f|}
dx}-\frac{f'}{2f}$ &
$\frac{1}{h_1}(\frac{f'\int\sqrt{|f|}dx}{2f\sqrt{|f|}})' $ &
$\partial_t,\frac{1}{2}(1-2\nu)t\partial_t+(1-\nu)\frac{\int \sqrt{|f|} dx}{\sqrt{|f|}}\partial_x+\partial_u$  \\
& $~$ & $~$ & $=(\frac{f'\int\sqrt{|f|}dx}{2f\sqrt{|f|}})'$ & $~ $
&
$~$  \\
\refstepcounter{tbn}\label{CaseUsualHexpK0F1}\thetbn & $1$ & $ue^{
u}+h_1e^u$ & $\frac{f'}{f}=-2(g^{-1})_xg$ & $(g^{-1})_{xx}=\beta g
$ &
$\partial_t,-\frac{1}{2}\beta t\partial_t+g^{-1}\partial_x+\beta \partial_u$  \\
\refstepcounter{tbn}\label{CaseUsualHexpKexpu1}\thetbn & $1$ & $e^{ u}$
& $\forall$ & $\forall$ &
$\partial_t,\frac{1}{2}t\partial_t-\partial_u$  \\
\refstepcounter{tbn}\label{CaseUsualHexpKexpu2}\thetbn & $1$ & $e^{ u}$
& $f^{1}(x)$ & $g^{1}(x)$ &
$\tau=c_1+\frac{1}{2}c_2t, \xi=\xi^{1}(x), $  \\
& $~$ & $~$ & $~$ & $~$ &
$\eta=-c_2+(-\frac{f^{1}_x}{f^1}-\frac{4}{3}g^{1})\xi^{1}(x)$  \\
\refstepcounter{tbn}\label{CaseUsualHexpKexpu3}\thetbn & $1$ & $e^{ u}$
& $f^{2}(x)$ & $g^{2}(x)$ &
$\tau=c_1+\frac{1}{2}c_2t, \xi=\xi^{2}(x), $  \\
& $~$ & $~$ & $~$ & $~$ & $
\eta=-c_2+2c_3+(-\frac{f^{2}_x}{f^2}-\frac{4}{3}g^{2})\xi^{2}(x)$\\
\hline
\end{tabular}
\end{center}
{\footnotesize Here $\epsilon=\pm 1$ and $q, \alpha, \beta\neq 0$.
$f^1(x), g^1(x)$ and $\xi^1(x)$ satisfy the relation
\[
\Phi_{xx}+(2\Phi-\Psi)\Phi_x+\Psi_{xx}+(\Phi-2\Psi)\Psi_x-\Phi\Psi(\Phi+\Psi)=0,\quad
\xi^1_x-\Phi\xi^1=0,
\]
where
\[
\Phi=-\frac{1}{3}(2\frac{f^1_x}{f^1}+g^1),\quad
\Psi=(\frac{f^1_x}{3f^1}-g^1);
\]
and $f^2(x), g^2(x)$ and $\xi^2(x)$ satisfy the relation
\[
\Theta_x+\Lambda_x-\Theta\Lambda-\Lambda^2=0,\quad
\xi^2_x-\Theta\xi^2=c_3,
\]
where
\[
\Theta=-\frac{1}{3}(2\frac{f^2_x}{f^2}+g^2),\quad
\Lambda=(\frac{f^2_x}{3f^2}-g^2).
\]
}

\bigskip

The operators from
tables~\ref{TableUsualGrClasForAllH}--\ref{TableUsualGrClasHpower} form
bases of the maximal invariance algebras if the corresponding sets
of the functions $f$, $g$, $H$, $K$ are $G^{\sim}$-inequivalent to
ones with most extensive invariance algebras. For example, in case
$\ref{TableUsualGrClasHpower}.\ref{CaseUsualHpowerKpowerFpower}~ (\mu,
\nu)\neq(0, 0)$. Similarly, in case
\ref{TableUsualGrClasHexp}.\ref{CaseUsualHexpKexpnuuFxlambda} the constraint
set on the parameters $\mu, \nu$ and $\lambda$ coincides with the
one for case \ref{TableUsualGrClasHpower}.\ref{CaseUsualHpowerKpowerFpower},
and $\mu = 1$ if $\nu = 0$.

\setcounter{tbn}{0}

\begin{center}\footnotesize\renewcommand{\arraystretch}{1.15}
Table~\refstepcounter{table}\label{TableUsualGrClasHpower}\thetable. Case of $H(u)=u^{\mu}$ \\[1ex]
\begin{tabular}{|l|c|c|c|c|l|}
\hline
N & $\mu$ & $K(u)$ & $f(x)$ & $g(x)$  & \hfil Basis of A$^{\max}$ \\
\hline
\refstepcounter{tbn}\label{CaseUsualHpowerKpowerFpower}\thetbn &
$\forall$ & $u^{\nu}$ &
$|x|^p$ & $|x|^q $ & $\partial_t,\, \frac{1}{2}[(p-q+1)\mu-(p+2)\nu]t\partial_t$ \\
& $ $ & $ $ &
$ $ & $ $ & $+(\mu-\nu)x\partial_x+(q+1)u\partial_u$  \\
\refstepcounter{tbn}\label{CaseUsualHpowerKpowerF1}\thetbn & $\forall$
& $u^{\nu }$ & $e^{px}$ & $\epsilon e^{qx} $ & $\partial_t,\,
\frac{1}{2}[(p-q)\mu-p\nu]\partial_t+(\mu-\nu)\partial_x+qu\partial_u$\\
\refstepcounter{tbn}\label{CaseUsualHpowerK1Fpower} \thetbn & $\forall$
& $u^{\nu}+h_1u^{\mu}$ & $\alpha \sqrt{|f|}-\frac{f'}{2f}$ &
$\frac{1}{h_1}(\frac{f'}{2f\sqrt{|f|}})' $ & $\partial_t,
\frac{1}{2}\mu t\partial_t+(\mu-\nu)\frac{1}{\sqrt{|f|}}\partial_x-u\partial_u$  \\
& $~$ & $~$ & $=(\frac{f'}{2f\sqrt{|f|}})'$ & $~$ & $~$\\
\refstepcounter{tbn}\label{CaseUsualHpowerK1F1} \thetbn & $\forall$ &
$u^{\nu}+h_1u^{\mu}$ & $\alpha\frac{\sqrt{|f|}}{\int \sqrt{|f|}
dx}-\frac{f'}{2f}$ &
$\frac{1}{h_1}(\frac{f'\int\sqrt{|f|}dx}{2f\sqrt{|f|}})' $ &
$\partial_t,\frac{1}{2}(\mu-2\nu)t\partial_t+(\mu-\nu)\frac{\int \sqrt{|f|} dx}{\sqrt{|f|}}\partial_x+u\partial_u$  \\
& $~$ & $~$ & $=(\frac{f'\int\sqrt{|f|}dx}{2f\sqrt{|f|}})'$ & $~ $
& $~$  \\
\refstepcounter{tbn}\label{CaseUsualHpowerK0HForAllF}\thetbn &
 $\forall$ & $1$ & $|x|^p$
 & $|x|^q$ &$\partial_t,\, \frac{1}{2}\mu(1+p-q)t\partial_t+\mu x\partial_x+(1+q)u\partial_u$  \\
\refstepcounter{tbn}\label{CaseUsualHpowerK0Ff3}\thetbn & $\forall$ &
$1$ & $e^{px}$ & $\epsilon e^{qx} $ & $\partial_t,\,
\frac{1}{2}\mu(p-q)t\partial_t+\mu \partial_x+qu\partial_u$  \\
\refstepcounter{tbn}\label{CaseUsualHpowerK0F1}\thetbn & $\forall$ &
$u^{\mu}\ln u+h_1u^{\mu}$ & $\frac{f'}{f}=-2(g^{-1})'g$ &
$(g^{-1})''=\beta g $
& $\partial_t,\,-\frac{1}{2}\beta \mu t\partial_t+g^{-1}\partial_x+\beta u\partial_u$  \\
\refstepcounter{tbn}\label{CaseUsualHu-4K0HForallF} \thetbn & $\forall$
& $u^{\mu}$ & $\forall$ & $\forall$
& $\partial_t,\,\frac{1}{2} \mu t\partial_t-u\partial_u$  \\
\refstepcounter{tbn}\label{CaseUsualHu-4K0Ff3} \thetbn &
$\neq-\frac{4}{3}$ & $u^{\mu}$ & $f^3(x)$ & $g^3(x)$ &
$\tau=\frac{1}{2}\mu c_2t+c_1,
\xi=\xi^3(x)$\\
&  & & & & $ \eta=[(4\Phi_3+\Psi_3)\xi^3-c_2]u$  \\
\refstepcounter{tbn}\label{CaseUsualHu-4K0F1} \thetbn &
$\neq-\frac{4}{3}$ & $u^{\mu}$ & $f^4(x)$ & $g^4(x)$ &
$\tau=\frac{1}{2}\mu c_2t+c_1,
\xi=\xi^4(x)$\\
&  & & & & $ \eta=[(4\Phi_4+\Psi_4)\xi^4-c_2+2c_3]u$  \\
\refstepcounter{tbn}\label{CaseUsualHu-43K0F2} \thetbn & $-\frac{4}{3}$
& $u^{-\frac{4}{3}}$ & $f^5(x)$ & $g^5(x)$ & $\tau= c_2t+c_1,
\xi=c_3/\Phi_5$\\
&  & & & & $ \eta=\frac{3}{4\Phi_5^2}[2c_3\Phi_{5x}+(c_2-c_3)\Phi_5^2+2c_3\Phi_5g^5]u$  \\
\refstepcounter{tbn}\label{CaseUsualHu-43K0F3} \thetbn & $-\frac{4}{3}$
& $u^{-\frac{4}{3}}$ & $\forall$ & $-\frac{f_x}{2f}$ & $\tau=
c_2t+c_1,
\xi=\xi^6(x)$\\
&  & & & & $ \eta=\frac{3}{4}[c_2-2\xi^6_{x}-\xi^6\frac{f_x}{f}]u$  \\
\refstepcounter{tbn}\label{CaseUsualH1-43K1} \thetbn & $0$ & $u$
& $|x|^p$ & $|x|^q$ & $\partial_t, \frac{1}{2}(p+1)t\partial_t+x\partial_x-(q+1)u\partial_u$  \\
\refstepcounter{tbn}\label{CaseUsualH1-43K6} \thetbn & $0$ & $u$ &
$e^{px}$ & $\epsilon e^{qx}$ & $\partial_t,
\frac{1}{2}(p+2)t\partial_t+\partial_x-qu\partial_u$\\
\refstepcounter{tbn}\label{CaseUsualH1-43K2} \thetbn & $0$ & $\ln u
+h_0$
& $g^2\exp(\int g dx)$ & $(g^{-1})''=\beta g$ & $\partial_t, \frac{1}{2}t\partial_t+g^{-1}\partial_x+\beta u\partial_u$\\
\refstepcounter{tbn}\label{CaseUsualH1-43K3} \thetbn & $0$ & $u$
& $g^2\exp(\int g dx)$ & $(g^{-1})''=\beta g$ & $\partial_t, \frac{1}{2}t\partial_t+g^{-1}\partial_x+\beta\partial_u$\\
\refstepcounter{tbn}\label{CaseUsualH1-43K4} \thetbn & $0$ & $e^u+h_0$
& $g^2(\int g dx)^p$ & $(g^{-1}\int g dx)''=h_0 g$ & $\partial_t,
-\frac{1}{2}(p+2)t\partial_t-g^{-1}\int g dx\partial_x+\partial_u$\\
\refstepcounter{tbn}\label{CaseUsualH1-43K5} \thetbn & $0$ &
$u^{\nu}+h_0$ & $g^2(\int g dx)^p$ & $(g^{-1}\int g dx)''=h_0 g$ &
$\partial_t,
-\frac{1}{2}\nu(p+2)t\partial_t-\nu g^{-1}\int g dx\partial_x+u\partial_u$\\
\hline
\end{tabular}
\end{center}
{\footnotesize Here $\epsilon=\pm 1$ and $q, \alpha, \beta, h_0,
h_1\neq 0$. $f^3(x), g^3(x)$ and $\xi^3(x)$ satisfy the relation
\[
\begin{array}{ll}
4\mu\Phi_{3xx}-[(16\mu+32)\Phi_3+(7\mu+8)\Psi_3]\Phi_{3x}+\mu\Psi_{3xx}-[(5\mu+8)\Phi_3+(2\mu+2)\Psi_3]\Psi_{3x}
+(24\mu+32)\Phi_3^3 \\
+(14\mu+16)\Phi_3^2\Psi_3+(2\mu+2)\Phi_3
\Psi_3^2=0,\quad \xi^3_x-\mu\Phi_3\xi^3=0,
\end{array}{ll}
\]
where
\[
\Phi_3=-\frac{\mu}{3\mu+4}g^3-\frac{2(\mu+1)f^3_x}{(3\mu+4)f^3},\quad
\Psi_3=\frac{\mu}{3\mu+4}g^3+\frac{(5\mu+6)f^3_x}{(3\mu+4)f^3};
\]
$f^4(x), g^4(x)$ and $\xi^4(x)$ satisfy the relation
\[
4\mu\Phi_{4x}+\mu
\Psi_{4x}-(12\mu+16)\Phi_4^2-(7\mu+8)\Phi_4\Psi_4-(\mu+1)\Phi_4^2=0,\quad
\xi^4_x-\mu\Phi_4\xi^4=c_3,
\]
where
\[
\Phi_4=-\frac{\mu}{3\mu+4}g^4-\frac{2(\mu+1)f^4_x}{(3\mu+4)f^4},\quad
\Psi_4=\frac{\mu}{3\mu+4}g^4+\frac{(5\mu+6)f^4_x}{(3\mu+4)f^4};
\]
$f^5(x), g^5(x)$ satisfy the relation
\[
\Phi_5^2\Phi_{5xxx}-6\Phi_5\Phi_{5x}\Phi_{5xx}+6\Phi_{5x}^3-(g^5)^2\Phi_5^2\Phi_{5x}
+\Phi_5^3g^5_{xx}-2\Phi_5^2g^5_x\Phi_{5x}+\Phi_5^3g^5g^5_x=0,
\]
where
\[
\Phi_5=\frac{f^5_x}{f}+2g^5;
\]
and $\xi^6(x)$ satisfies third order ordinary differential
equation
\[
\xi^6_{xxx}-[(\frac{f_x}{2f})^2-2(\frac{f_x}{2f})_x]\xi^6_x+[(\frac{f_x}{2f})_{xx}+(\frac{f_x}{2f})_x]\xi^6=0.
\]
 }

\end{appendix}

\end{document}